\documentclass[3p]{elsarticle}

\usepackage[british]{babel}
\usepackage{amsmath,amssymb}
\usepackage{amsthm}
\usepackage{booktabs}
\usepackage{bm}
\usepackage{graphicx}
\usepackage{pgfplots}
\pgfplotsset{compat=1.17}
\usepackage[most]{tcolorbox}
\usepackage[section]{placeins}

\usepackage{hyperref}

\theoremstyle{plain}
\newtheorem{theorem}{Theorem}
\newtheorem{proposition}{Proposition}
\newtheorem{lemma}{Lemma}
\newtheorem{corollary}{Corollary}
\theoremstyle{remark}
\newtheorem{remark}{Remark}

\newcommand{\bfe}{\bm{e}}
\newcommand{\bfu}{\bm{u}}
\newcommand{\Id}{\mathsf{I}}
\newcommand{\Lam}{\bm{\Lambda}}
\newcommand{\Tm}{\mathsf{T}}
\newcommand{\Am}{\mathsf{A}}
\newcommand{\Pm}{\mathsf{P}}
\newcommand{\Cx}{C_x}
\newcommand{\Cy}{C_y}
\newcommand{\Cz}{C_z}
\newcommand{\avg}[1]{\left\langle #1 \right\rangle}

\journal{Journal of Computational Physics}

\begin{document}

\begin{frontmatter}

\title{When does the moment basis matter in lattice Boltzmann methods?}

\author{Alessandro De Rosis}
\ead{alessandro.derosis@manchester.ac.uk}
\affiliation{organization={Department of Mechanical and Aerospace Engineering, The University of Manchester},
             addressline={Oxford Road},
             city={Manchester},
             postcode={M13 9PL},
             country={United Kingdom}}

\begin{abstract}
Whether orthogonalising the moment basis changes a central-moment lattice Boltzmann
scheme is decided by the geometry of the velocity set. The basis enters the collision
only through the conjugate $\Tm^{-1}\Lam\Tm$, so two bases related by a constant
matrix $\Am$ define exactly the same scheme if and only if
$[\Lam,\Am]_{ij}=\Am_{ij}(\lambda_i-\lambda_j)$ vanishes, and on physical states only
its columns for the non-conserved moments matter. We prove that on lattices with velocities in
$\{0,\pm1\}^d$ and symmetric weights orthogonalisation reaches the shear moments only
through the face and body diagonals: whatever the orthogonalisation, the shear rate is
free on D2Q9 and D3Q15, whereas on D3Q19 and D3Q27 the scheme changes whenever the
fourth-order moments do not relax at the shear rate. On rectangular lattices, where the
symmetry is broken, an independent bulk rate yields an orthogonal counterpart that
does not share the transport coefficients of the published scheme, which is itself
orthogonal in the inner product of the rest equilibrium up to the one relaxation entry
that consistency requires. Where the
formulations differ but share their hydrodynamics, a basis orthogonal in the inner
product of the rest equilibrium, which guarantees linear stability at rest, was never
less robust in the least stable flow direction by more than a few per cent, often much
more robust, and the more accurate against a Taylor--Green DNS. Computed through the
non-orthogonal transforms with a conjugated relaxation matrix, it costs at most $2\%$
more on D3Q27.
\end{abstract}

\begin{keyword}
lattice Boltzmann method \sep
multiple-relaxation-time collision \sep
central moments \sep
orthogonalisation \sep
linear stability
\end{keyword}

\end{frontmatter}

\section{Introduction}

Multiple-relaxation-time (MRT) lattice Boltzmann
schemes perform the
collision on moments obtained from the populations by an invertible linear map,
relaxing each towards its equilibrium at its own rate~\cite{10.1098/rsta.2001.0955, PhysRevE.102.023306}. Such a scheme is fixed by two
choices: the basis in which the moments are taken and the diagonal matrix of rates
that acts on them. Central moments, taken in the frame moving with the
fluid~\cite{geier2006,premnath2009,premnath2011,derosis2016,derosis2017}, and
cumulants~\cite{geier2015}
have multiplied the first choice, and within each family one may or may not
orthogonalise, usually by Gram--Schmidt against the lattice weights. Comparisons
between formulations routinely attribute differences in accuracy, robustness and
stability to the basis~\cite{coreixas2020,chavez2018}. In particular,
orthogonalisation is said to introduce a spurious coupling between moments of
different order that narrows the stability range, and this is one of the two
arguments, with the aspect-ratio dependence of orthogonal transforms, for
non-orthogonal constructions on rectangular and cuboid
lattices~\cite{yahia2021,yahia2021cuboid}.

What has been established rigorously is narrower. Schemes relaxing in a moving frame
reproduce the equivalent equations of the MRT method to second and third
order~\cite{dubois2015}, an asymptotic statement about truncated macroscopic
behaviour. The common formalism~\cite{coreixas2020,coreixas2019} compares
post-collision populations of raw, Hermite, central and cumulant formulations case
by case. Li \& Shan~\cite{lishan2021} constrained the second choice, showing that
independent rates may be assigned per irreducible representation of the group of rotations of three-dimensional space, $SO(3)$, and no
more finely. Bellotti \textit{et al.}~\cite{bellotti2022} gave every lattice
Boltzmann scheme an exact finite difference form. Dubois \textit{et al.}~\cite{dubois2026} found the second-order equivalent equations of D2Q9
insensitive to the equilibrium adopted. What is missing is a criterion for when two
moment formulations are \emph{the same scheme}, identical as maps on populations,
and, where they are not, a comparison that isolates the basis from everything else.

This paper supplies both. The observation it rests on is elementary: in population
space the basis enters only through the conjugate $\Tm^{-1}\Lam\Tm$, a form that is
standard~\cite{lallemand2000,coreixas2019}. Its consequences for the choice of
basis, however, have not been drawn. The contributions are:
\begin{enumerate}
\item an exact criterion for equivalence, $[\Lam,\Am]=0$, stated as an elementary
  conjugation lemma, with a form for physical states, a coupling-graph form, a bound
  on the difference, and a projection test that does not depend on how the
  orthogonalisation is carried out (Section~\ref{sec:gauge});
\item a theorem that decides from the geometry of the velocity set alone whether
  orthogonalisation reaches the shear moments: never on D2Q9 and D3Q15, always on
  D3Q19 and D3Q27, for any Gram--Schmidt order and any symmetric weights
  (Theorem~\ref{thm:geom}), with the couplings in closed form on rectangular
  lattices, where the symmetry is broken (Section~\ref{sec:when});
\item a guarantee of linear stability at rest for bases orthogonal in the inner
  product of the rest equilibrium, inherited by every basis gauge equivalent to one
  (Proposition~\ref{prop:rest});
\item a comparison of the formulations where they differ, by linear analysis,
  nonlinear simulation and comparison with direct numerical simulation on four
  lattices, including the published rectangular scheme itself, which is shown to be an
  orthogonal scheme with a single off-diagonal relaxation entry
  (Section~\ref{sec:stability}, Table~\ref{tab:summary});
\item an implementation result: any formulation can be computed through the
  transforms of any basis in its gauge class, so that the orthogonal scheme need not
  cost more than the non-orthogonal one (Corollary~\ref{cor:impl},
  Section~\ref{sec:cost}).
\end{enumerate}

The rest of the paper is organised as follows. Section~\ref{sec:gauge} derives the criterion, Section~\ref{sec:when} applies it,
Section~\ref{sec:stability} compares the formulations, and Section~\ref{sec:discussion}
concludes with recommendations. The appendices examine the choice of inner product,
collect the convergence studies, document the verification and a lid-driven cavity on the
rectangular lattice, and give the details of the wavevector search on D3Q19 and of the
cost of the collision kernels.

\section{Gauge equivalence in moment space}
\label{sec:gauge}
We first establish the general criterion for when a change of moment basis is a pure representation change and when it alters the collision operator. We then restrict it to physical states, recast it in a form that does not depend on how the orthogonal basis is constructed, show that any formulation can be computed through any basis in its gauge class, and identify the bases whose rest state is guaranteed to be stable.

\subsection{Setting and notation}

Let $\{\bfe_i\}_{i=0}^{q-1}$ be the discrete velocity set of a $q$-velocity
lattice with weights $w_i$, and let $f=(f_0,\dots,f_{q-1})^{\!\top}$ denote the
populations. A moment-space scheme is specified by an invertible transformation
matrix $\Tm$, whose rows are the basis polynomials evaluated on the velocity set,
together with a diagonal relaxation matrix
$\Lam=\mathrm{diag}(\lambda_1,\dots,\lambda_q)$. Writing $k=\Tm f$ for the
moments, $k^{\mathrm{eq}}=\Tm f^{\mathrm{eq}}$ for their equilibria and
$k^{F}=\Tm \mathcal{F}$ for the moments of a forcing term $\mathcal{F}$, the
collision step reads
\begin{equation}
  k^{\star}=(\Id-\Lam)\,k+\Lam\,k^{\mathrm{eq}}
            +\left(\Id-\tfrac{1}{2}\Lam\right)k^{F},
  \label{eq:collision-moments}
\end{equation}
and the post-collision populations are $f^{\star}=\Tm^{-1}k^{\star}$.

For central moments the basis polynomials are evaluated at the peculiar velocity
$\bm{C}_i=\bfe_i-\bfu$, with $\Cx=e_{ix}-u_x$ and so on. The first $1+d$ basis
elements, in $d$ dimensions, are the conserved moments $1,\Cx,\Cy(,\Cz)$. We take
$\lambda=1$ on them, so that $(\Id-\tfrac{1}{2}\Lam)k^F$ reproduces the standard
half-force correction.

Substituting $k=\Tm f$ into \eqref{eq:collision-moments} and acting with
$\Tm^{-1}$ gives the population-space form
\begin{equation}
  f^{\star}=f+\Pm\left(f^{\mathrm{eq}}-f\right)
              +\left(\Id-\tfrac{1}{2}\Pm\right)\mathcal{F},
  \qquad
  \boxed{\;\Pm=\Tm^{-1}\Lam\,\Tm\;}
  \label{eq:collision-populations}
\end{equation}
The basis enters only through the conjugate $\Pm$, and everything below follows
from that.

\subsection{The equivalence criterion}

\begin{lemma}[Conjugation]
\label{thm:gauge}
Let $\Tm_1$ and $\Tm_2$ be two invertible moment bases related by a constant
(velocity-independent) matrix $\Am=\Tm_2\Tm_1^{-1}$, and let the two schemes use
the same relaxation matrix $\Lam$, the same equilibrium $f^{\mathrm{eq}}$ and the
same forcing $\mathcal{F}$ in population space. Then the two schemes produce
identical post-collision populations for every $f$, every $\bfu$ and every
$\mathcal{F}$ if and only if
\begin{equation}
  [\Lam,\Am]=\Lam\Am-\Am\Lam=0 .
  \label{eq:criterion}
\end{equation}
\end{lemma}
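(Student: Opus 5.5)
The plan is to work entirely in population space, using the form \eqref{eq:collision-populations}. Write $\Pm_1=\Tm_1^{-1}\Lam\Tm_1$ and $\Pm_2=\Tm_2^{-1}\Lam\Tm_2$. Since $\Tm_2=\Am\Tm_1$, we have
\begin{equation*}
  \Pm_2=\Tm_1^{-1}\Am^{-1}\Lam\Am\,\Tm_1 .
\end{equation*}
The two schemes share $f^{\mathrm{eq}}$ and $\mathcal{F}$, so their post-collision populations differ by
\begin{equation*}
  f^{\star}_2-f^{\star}_1=(\Pm_2-\Pm_1)\bigl(f^{\mathrm{eq}}-f\bigr)-\tfrac12(\Pm_2-\Pm_1)\mathcal{F}.
\end{equation*}
The lemma therefore reduces to the statement that this vanishes for all $f,\bfu,\mathcal{F}$ if and only if $\Pm_1=\Pm_2$.

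\emph{Sufficiency.} Suppose $[\Lam,\Am]=0$. Then $\Am^{-1}\Lam\Am=\Lam$, so $\Pm_2=\Pm_1$ and the difference above is zero identically.

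\emph{Necessity.} Suppose the post-collision populations agree for every $f$, $\bfu$ and $\mathcal{F}$. Fix $\bfu$ and take $\mathcal{F}=0$. Since $f^{\mathrm{eq}}$ depends only on $\bfu$ (or on the conserved moments of $f$, which we hold fixed), the vector $f^{\mathrm{eq}}-f$ ranges over all of $\mathbb{R}^q$ as $f$ varies. Hence $(\Pm_2-\Pm_1)v=0$ for every $v$, so $\Pm_1=\Pm_2$.

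Alternatively, take $f=f^{\mathrm{eq}}$ and let $\mathcal{F}$ be arbitrary. The hypothesis gives $(\Pm_2-\Pm_1)\mathcal{F}=0$ for all $\mathcal{F}$, which is cleaner because it avoids any subtlety about how the equilibrium depends on $f$. I would use this second route.

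From $\Pm_1=\Pm_2$, conjugating back by $\Tm_1$ gives $\Lam=\Am^{-1}\Lam\Am$, that is, $\Am\Lam=\Lam\Am$, which is \eqref{eq:criterion}.

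I would also note the entrywise form $[\Lam,\Am]_{ij}=\Am_{ij}(\lambda_i-\lambda_j)$. It follows immediately because $\Lam$ is diagonal, and it is needed later.

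The main obstacle is the necessity direction. The issue is that the equilibrium for central moments depends on $\bfu$, and $\bfu$ is itself a moment of $f$, so one must ensure that enough independent test vectors are available. The free forcing term $\mathcal{F}$ resolves this without difficulty. If forcing were excluded, I would instead vary the non-conserved part of $f$ at fixed conserved moments; that variation spans only a subspace. This would give equality of $\Pm_1$ and $\Pm_2$ only on that subspace, which is precisely the ``physical states'' refinement the paper announces, so the full criterion genuinely uses arbitrary $\mathcal{F}$.

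A second point to check is that $\Am$ being velocity-independent is what makes the conjugation identity hold uniformly in $\bfu$. For central moments both $\Tm_1$ and $\Tm_2$ depend on $\bfu$, but $\Am$ does not, so the argument applies pointwise for each $\bfu$.
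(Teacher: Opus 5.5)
Your proposal is correct and follows the paper's proof essentially verbatim: you conjugate to get $\Pm_2=\Tm_1^{-1}\Am^{-1}\Lam\Am\,\Tm_1$, and for necessity you set $f=f^{\mathrm{eq}}$, so that equality for every $\mathcal{F}$ forces $\Pm_1=\Pm_2$ and hence $[\Lam,\Am]=0$. Your first route, which you rightly discard, would not work as stated: at fixed conserved moments of $f$, the vector $f^{\mathrm{eq}}-f$ does not range over all of $\mathbb{R}^q$, as you yourself note later, whereas the forcing route you chose has no such problem.
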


\begin{proof}
From $\Tm_2=\Am\Tm_1$,
\begin{equation*}
  \Pm_2=\Tm_2^{-1}\Lam\,\Tm_2=\Tm_1^{-1}\Am^{-1}\Lam\,\Am\,\Tm_1 ,
\end{equation*}
so $\Pm_2=\Pm_1$ if and only if $\Am^{-1}\Lam\Am=\Lam$, which is
\eqref{eq:criterion}. If $\Pm_1=\Pm_2$ the two right-hand sides of
\eqref{eq:collision-populations} coincide. Conversely, taking $f=f^{\mathrm{eq}}$
leaves $f^{\star}=f^{\mathrm{eq}}+(\Id-\tfrac12\Pm)\mathcal{F}$, and equality for
every $\mathcal{F}$ forces $\Pm_1=\Pm_2$.
\end{proof}

``The same forcing'' means the same population-space vector $\mathcal{F}$. Its
moment-space representations $\Tm_1\mathcal{F}$ and $\Tm_2\mathcal{F}$ differ, but
they are two descriptions of one forcing, not two forcings; the same holds for the
equilibrium.

Condition \eqref{eq:criterion} places $\Am$ in the centraliser of $\Lam$ in the group
$GL(q)$ of invertible $q \times q$ matrices; equivalently, $\Am$ is block diagonal with respect to the eigenspaces of
$\Lam$. We call such a change of basis a \emph{gauge} transformation in the
representation-theoretic sense: moment coordinates related by $\Am$ with
$[\Lam,\Am]=0$ describe one population-space collision map, much as
gauge-related potentials describe one field. The freedom is global, since $\Am$ is
constant, rather than local. For diagonal $\Lam$ the condition is entirely
explicit.

\begin{lemma}[Factorisation]
\label{lem:factor}
For diagonal $\Lam$ and any matrix $\Am$,
\begin{equation}
  [\Lam,\Am]_{ij}=\Am_{ij}\left(\lambda_i-\lambda_j\right).
  \label{eq:factorisation}
\end{equation}
\end{lemma}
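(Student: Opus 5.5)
The identity is a direct entrywise computation, so the plan is simply to expand both products in the commutator using the diagonal structure of $\Lam$. Write $\Lam=\mathrm{diag}(\lambda_1,\dots,\lambda_q)$, so that $\Lam_{ik}=\lambda_i\delta_{ik}$. For an arbitrary $q\times q$ matrix $\Am$, the $(i,j)$ entry of $\Lam\Am$ is $\sum_k \Lam_{ik}\Am_{kj}=\sum_k\lambda_i\delta_{ik}\Am_{kj}=\lambda_i\Am_{ij}$. In words, left multiplication by $\Lam$ scales row $i$ of $\Am$ by $\lambda_i$.

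In the same way, $(\Am\Lam)_{ij}=\sum_k\Am_{ik}\lambda_k\delta_{kj}=\lambda_j\Am_{ij}$, because right multiplication by $\Lam$ scales column $j$ by $\lambda_j$. Subtracting the two entries gives $[\Lam,\Am]_{ij}=\lambda_i\Am_{ij}-\lambda_j\Am_{ij}=\Am_{ij}(\lambda_i-\lambda_j)$, which is \eqref{eq:factorisation}. No invertibility of $\Am$ is needed, so the identity holds for any matrix, as the statement says.

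There is no real obstacle here. The only point worth recording, since later sections rely on it, is the consequence for Lemma~\ref{thm:gauge}. The commutator vanishes if and only if $\Am_{ij}=0$ for every pair $(i,j)$ with $\lambda_i\neq\lambda_j$. Equivalently, $\Am$ may couple only moments that share a relaxation rate, which makes it block diagonal with respect to the eigenspaces of $\Lam$, as already asserted after Lemma~\ref{thm:gauge}.
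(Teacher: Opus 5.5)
Your proof is correct and is the entrywise computation the paper leaves implicit, since it states the lemma without proof: left multiplication by the diagonal $\Lam$ scales row $i$ by $\lambda_i$, right multiplication scales column $j$ by $\lambda_j$, and subtracting gives $\Am_{ij}(\lambda_i-\lambda_j)$. Your closing remark, that the commutator vanishes exactly when $\Am$ is block diagonal with respect to the eigenspaces of $\Lam$, is also how the paper uses the lemma.
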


Each discrepancy between two formulations is thus a product of a \emph{geometric}
factor, the off-diagonal entry $\Am_{ij}$ generated by the change of basis, and a
\emph{kinetic} one, the rate contrast $\lambda_i-\lambda_j$. Lemma~\ref{thm:gauge} treats $f$, $\bfu$ and $\mathcal{F}$ as independent,
which is what makes its converse hold. In a physical scheme $\bfu$ is the
velocity of $f$, and not every entry of the commutator can then act.

\begin{corollary}[Physical states]
\label{cor:physical}
Let $\rho$ and $\bfu$ be the density and velocity of $f$, and let there be no
forcing. Then the two schemes coincide for every $f$ if and only if $[\Lam,\Am]_{ij}=0$ for every $i$ and every
non-conserved $j$. A forcing term, whose mass moment vanishes but whose momentum
moment does not, activates the momentum columns as well.
\end{corollary}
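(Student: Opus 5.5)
In the notation of Lemma~\ref{thm:gauge}, with no forcing the two schemes differ only through $(\Pm_2-\Pm_1)(f^{\mathrm{eq}}-f)$. The plan is to show that the moments of the non-equilibrium part $f^{\mathrm{eq}}-f$ vanish in the conserved rows. The criterion then only sees the columns of $[\Lam,\Am]$ indexed by non-conserved moments.

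\textbf{Step 1: reduce to moment space.} I would write
\[
  \Pm_2-\Pm_1=\Tm_1^{-1}\Am^{-1}[\Lam,\Am]\,\Tm_1 ,
\]
which is immediate from $\Pm_2=\Tm_1^{-1}\Am^{-1}\Lam\Am\Tm_1$ and $\Pm_1=\Tm_1^{-1}\Am^{-1}\Am\Lam\Tm_1$. Since $\Tm_1$ and $\Am$ are invertible, the two schemes coincide on a set $S$ of non-equilibrium vectors if and only if $[\Lam,\Am]\,g=0$ for every $g\in\Tm_1 S$.

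\textbf{Step 2: identify the set $\Tm_1 S$.} Take $g=\Tm_1(f^{\mathrm{eq}}-f)$, where $\bfu$ is the velocity of $f$. The conserved rows of $\Tm_1$ are $1$ and $C_\alpha=e_\alpha-u_\alpha$, evaluated at that same $\bfu$. The equilibrium has the density and momentum of $f$, so these rows give $\rho-\rho=0$ and $\rho u_\alpha-\rho u_\alpha=0$. Hence $g$ lies in the subspace $V$ of vectors whose conserved components vanish.

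Conversely, every $g\in V$ must be realised. I would fix $\bfu$ and $\rho>0$ and set $f=f^{\mathrm{eq}}(\rho,\bfu)-\Tm_1^{-1}g$. Because $g$ has zero conserved components, this $f$ has density $\rho$ and velocity $\bfu$, so it is self-consistent. Linearity lets $g$ be scaled so that positivity of $f$ is not an issue, if positivity is even required.

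\textbf{Step 3: conclude.} $[\Lam,\Am]g=0$ holds for all $g\in V$ exactly when every column of $[\Lam,\Am]$ indexed by a non-conserved $j$ vanishes. Lemma~\ref{lem:factor} gives these entries explicitly as $\Am_{ij}(\lambda_i-\lambda_j)$.

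\textbf{Forcing remark.} The extra term is $-\tfrac12(\Pm_2-\Pm_1)\mathcal{F}$. Its moment vector $\Tm_1\mathcal{F}$ has a zero mass component but generally a nonzero momentum component. So the momentum columns of $[\Lam,\Am]$ also enter, and the mass column still does not.

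\textbf{Main obstacle.} The delicate point is Step 2. The central-moment rows depend on $\bfu$, so the conserved rows must be evaluated at the velocity of $f$ itself; otherwise $C_\alpha$ would not annihilate $f^{\mathrm{eq}}-f$. The "if and only if" also needs all of $V$ to be realisable at a single fixed $\bfu$, which the construction in Step 2 provides. Because $\Am$ is constant, the criterion does not depend on which $\bfu$ is chosen.
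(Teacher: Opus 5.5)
Your proposal is correct and follows the paper's proof. Both write the difference as $\Tm_1^{-1}\Am^{-1}[\Lam,\Am]\,\Tm_1(f^{\mathrm{eq}}-f)$ and observe that the conserved components of $\Tm_1(f^{\mathrm{eq}}-f)$ vanish while the non-conserved ones are unconstrained; your construction $f=f^{\mathrm{eq}}(\rho,\bfu)-\Tm_1^{-1}g$ simply makes explicit the paper's remark that these components ``range freely with $f$''.
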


\begin{proof}
The difference of the two post-collision states is
$(\Pm_2-\Pm_1)(f^{\mathrm{eq}}-f)=\Tm_1^{-1}\Am^{-1}[\Lam,\Am]\,v$ with
$v=\Tm_1(f^{\mathrm{eq}}-f)$. The conserved components of $v$ vanish, because
$f^{\mathrm{eq}}$ and $f$ share density and velocity, while the non-conserved
components range freely with $f$. The difference therefore vanishes for every $f$
exactly when the non-conserved columns of $[\Lam,\Am]$ do.
\end{proof}

The same identity measures how far apart two formulations are. With $\Pi_{\mathrm{nc}}$
the projector onto the non-conserved moments,
\begin{equation}
  \bigl\|f^{\star}_2-f^{\star}_1\bigr\|\le
  \bigl\|\Tm_1^{-1}\Am^{-1}\bigr\|\,\bigl\|[\Lam,\Am]\,\Pi_{\mathrm{nc}}\bigr\|\,
  \bigl\|\Tm_1\bigl(f^{\mathrm{eq}}-f\bigr)\bigr\|
  \label{eq:distance}
\end{equation}
on physical states without forcing, so that the size of the non-conserved columns of
the commutator, multiplied by the non-equilibrium, bounds the difference at every
step.

The criterion also admits a combinatorial form, and one in which all bases
collapse together.

\begin{corollary}[Coupling graph]
\label{cor:graph}
Define the undirected \emph{coupling graph} $\mathcal{G}(\Am)$ on the $q$ moments,
with an edge $\{i,j\}$ whenever $\Am_{ij}\neq 0$ or $\Am_{ji}\neq 0$. Then
$[\Lam,\Am]=0$ if and only if $\Lam$ is constant on every connected component of
$\mathcal{G}(\Am)$. The condition of Corollary~\ref{cor:physical} is the same
statement with the conserved vertices deleted.
\end{corollary}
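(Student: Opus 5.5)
The plan is to reduce everything to the entrywise identity of Lemma~\ref{lem:factor}. By \eqref{eq:factorisation}, $[\Lam,\Am]=0$ holds exactly when $\Am_{ij}(\lambda_i-\lambda_j)=0$ for every ordered pair $(i,j)$. Equivalently, $\lambda_i=\lambda_j$ whenever $\Am_{ij}\neq0$. Since the edge set of $\mathcal{G}(\Am)$ is symmetrised, this says precisely that $\lambda_i=\lambda_j$ for every edge $\{i,j\}$. The case $\Am_{ij}\neq0$ gives this directly. The case $\Am_{ji}\neq0$ gives it from the $(j,i)$ entry, because $\lambda_j-\lambda_i=0$ is the same condition.

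First direction: if $\Lam$ is constant on every connected component, then any nonzero entry $\Am_{ij}$ joins $i$ and $j$ by an edge, so they lie in the same component. Hence $\lambda_i=\lambda_j$ and every entry of the commutator vanishes. Second direction: suppose the commutator vanishes. Then rates agree across each edge. For any two vertices in one component, I follow a path $i=v_0,v_1,\dots,v_m=j$ and apply the edge equality along it, which gives $\lambda_i=\lambda_j$ by transitivity. Diagonal entries $\Am_{ii}$ (self-loops) carry the factor $\lambda_i-\lambda_i=0$, so they impose nothing and can be ignored.

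For the restricted statement, Corollary~\ref{cor:physical} requires $[\Lam,\Am]_{ij}=0$ only for non-conserved $j$, with $i$ arbitrary. This is the point that needs care, and I expect it to be the main obstacle. An entry with $i$ conserved and $j$ non-conserved is still constrained, so the condition is not literally about the induced subgraph on the non-conserved vertices. To reconcile the two, I would use that all conserved moments carry the same rate $\lambda=1$. The plan is then to read ``deleting the conserved vertices'' as collapsing them into the rate-$1$ class: the remaining requirement is that $\lambda$ is constant along every edge touching a non-conserved vertex. The argument is the same edge-by-edge factorisation as above, applied only to the columns $j$ that are non-conserved.

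Two points should be checked explicitly:
\begin{itemize}
\item Edges joining two conserved vertices are unconstrained, but they are harmless since those rates are all equal to $1$.
\item The orientation of $\Am_{ij}$ must be handled correctly: only columns $j$ are restricted. For the central-moment bases the conserved rows of $\Am$ are typically trivial, because orthogonalisation leaves $1,\Cx,\Cy,\Cz$ fixed. In that case the symmetrised edge and the column condition agree, and the statement follows.
\end{itemize}
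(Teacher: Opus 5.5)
Your argument for the main equivalence is correct, and it is the argument the paper intends. The corollary is read off Lemma~\ref{lem:factor} edge by edge, with transitivity along paths, and the diagonal entries impose no constraint.

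The gap is in the restricted statement. You rightly note that the condition of Corollary~\ref{cor:physical} is not symmetric in $i$ and $j$. However, your reconciliation gives the wrong condition. You collapse the conserved vertices into the rate-$1$ class and require $\Lam$ to be constant along every edge that touches a non-conserved vertex. An edge $\{i,j\}$ with $i$ non-conserved and $j$ conserved comes from an entry $\Am_{ij}$ in a conserved \emph{column}, which is exactly what Corollary~\ref{cor:physical} discards. Your reading keeps that edge and forces $\lambda_i=1$. On D2Q9 with an independent bulk rate, $\Am_{41}=-\tfrac23$ joins the trace to the density and $\Am_{91}=\tfrac19$ joins $\Cx^2\Cy^2$ to it. Your reading would therefore require $\omega_b=1$ and a fourth-order rate of $1$. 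The paper instead shows, and checks numerically, that taking the rate of $\Cx^2\Cy^2$ equal to $\omega_b$ is enough for the two schemes to coincide on physical states. For the same reason, your closing claim that ``the symmetrised edge and the column condition agree'' holds for the induced subgraph, not for the collapsed graph you describe.

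The correct reading is the literal one, and it is how the paper uses the corollary for D2Q9. Delete the conserved vertices together with all their incident edges. Then ask that $\Lam$ be constant on the components of the induced subgraph on the non-conserved vertices. To prove this, split the entries $[\Lam,\Am]_{ij}$ with $j$ non-conserved according to $i$:
\begin{itemize}
\item If $i$ is also non-conserved, both $(i,j)$ and $(j,i)$ lie in non-conserved columns. Symmetrising the edge then costs nothing, and your path argument runs unchanged in the induced subgraph.
\item If $i$ is conserved, the entry $\Am_{ij}(1-\lambda_j)$ is invisible to the induced subgraph. You therefore need the hypothesis $\Am_{ij}=0$ for conserved $i$ and non-conserved $j$.
\end{itemize}

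That hypothesis is not merely ``typical''. It holds whenever the conserved rows of $\Tm_2$ span the same space as those of $\Tm_1$, because the expansion of a row of $\Tm_2$ in the rows of the invertible $\Tm_1$ is unique. This is the paper's setting: both bases contain $1,C_\alpha$, as in any Gram--Schmidt that starts from them. Without the hypothesis the deleted-vertex form fails. Take $\Am=\Id+c\,E_{1j}$, with $E_{1j}$ the matrix unit, $c\neq0$, $j$ non-conserved and $\lambda_j\neq1$. After deletion $j$ is isolated, so the induced-subgraph condition holds trivially. Yet $[\Lam,\Am]_{1j}=c\,(1-\lambda_j)\neq0$ sits in a non-conserved column. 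State the hypothesis explicitly and drop the collapsing step.
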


\begin{corollary}[Single relaxation time]
\label{cor:bgk}
If $\Lam=\omega\Id$ then $[\Lam,\Am]=0$ for every $\Am$. If only the
non-conserved moments share a common rate, the schemes still coincide on physical
states without forcing, for every $\Am$ that leaves the conserved moments
unchanged, as any orthogonalisation that begins with them does.
\end{corollary}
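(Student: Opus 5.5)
The first claim follows directly from Lemma~\ref{lem:factor}. If $\Lam=\omega\Id$, every rate contrast $\lambda_i-\lambda_j$ vanishes, so $[\Lam,\Am]_{ij}=\Am_{ij}(\lambda_i-\lambda_j)=0$ for all $i,j$ and every $\Am$. Lemma~\ref{thm:gauge} then gives identical post-collision populations for every $f$, $\bfu$ and $\mathcal{F}$. One could also argue directly: $\Pm=\Tm^{-1}\omega\Id\,\Tm=\omega\Id$ does not depend on $\Tm$.

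For the second claim I will use Corollary~\ref{cor:physical}. The task is to show that $[\Lam,\Am]_{ij}=0$ for every $i$ and every non-conserved $j$. Suppose $\lambda_j=\omega$ for every non-conserved $j$ and $\lambda=1$ on the conserved moments. Then $\Lam$ is constant on the non-conserved vertices, so every entry with both indices non-conserved vanishes by the factorisation.

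The remaining entries are $[\Lam,\Am]_{ij}=\Am_{ij}(1-\omega)$ with $i$ conserved and $j$ non-conserved. These require $\Am_{ij}=0$, meaning the conserved rows of $\Am$ have no non-conserved entries. This is where the hypothesis on $\Am$ is used. I read ``leaves the conserved moments unchanged'' as: the conserved rows of $\Tm_2$ equal those of $\Tm_1$, so $\Am=\Tm_2\Tm_1^{-1}$ has identity rows in the conserved block. In particular those rows have no entries in non-conserved columns.

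The main point needing care is justifying that an orthogonalisation beginning with the conserved moments has this property. Gram--Schmidt applied in an order that starts with $1,\Cx,\Cy(,\Cz)$ changes only the later vectors; it subtracts projections onto earlier ones. Hence the first $1+d$ rows are unchanged, at most rescaled by normalisation. A rescaling gives a diagonal entry of $\Am$ in a conserved row, which is still harmless because $\lambda_i-\lambda_i=0$. In general $\Am$ is lower triangular in the Gram--Schmidt order, so its conserved rows vanish outside the conserved block. Corollary~\ref{cor:physical} then yields coincidence on all physical states without forcing.

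Finally, I will remark that the forcing exclusion is genuine. The momentum columns are conserved columns $j$, and Corollary~\ref{cor:physical} notes that forcing activates them. Their entries $\Am_{ij}(\lambda_i-1)$ with $i$ non-conserved need not vanish, because Gram--Schmidt generally gives $\Am_{ij}\neq0$ for $i$ later than $j$.
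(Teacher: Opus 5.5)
Your proof is correct and takes the route the paper leaves implicit, since the corollary is stated without proof. The first claim follows from the factorisation $[\Lam,\Am]_{ij}=\Am_{ij}(\lambda_i-\lambda_j)$. The second follows from Corollary~\ref{cor:physical} once the conserved rows of $\Am$ are shown to vanish in the non-conserved columns. Your lower-triangularity argument for Gram--Schmidt establishes exactly this, and it also covers conserved moments that are mixed among themselves rather than merely rescaled.
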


\begin{corollary}[Implementation]
\label{cor:impl}
If $\Tm_2=\Am\Tm_1$ with $\Am$ constant, the scheme with basis $\Tm_2$ and
relaxation matrix $\Lam$ is computed exactly by the transforms of $\Tm_1$ with the
constant relaxation matrix $\Am^{-1}\Lam\Am$ in place of $\Lam$.
\end{corollary}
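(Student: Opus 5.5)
The plan is to show that the population-space collision operator of the scheme with basis $\Tm_2$ and rates $\Lam$ is exactly the one obtained from the transforms of $\Tm_1$ with the matrix $\Am^{-1}\Lam\Am$. By \eqref{eq:collision-populations}, the scheme with basis $\Tm_2$ is determined by $\Pm_2=\Tm_2^{-1}\Lam\Tm_2$. Substituting $\Tm_2=\Am\Tm_1$ gives
\begin{equation*}
  \Pm_2=\Tm_1^{-1}\bigl(\Am^{-1}\Lam\Am\bigr)\Tm_1 ,
\end{equation*}
which is the same computation as in the proof of Lemma~\ref{thm:gauge}. So the conjugate of $\Lam$ by $\Tm_2$ equals the conjugate of $\Lam':=\Am^{-1}\Lam\Am$ by $\Tm_1$.

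Next, I will verify that the moment-space algorithm \eqref{eq:collision-moments} run with $\Tm_1$ and $\Lam'$ reproduces \eqref{eq:collision-populations} with $\Pm_2$. Nothing in the derivation of \eqref{eq:collision-populations} from \eqref{eq:collision-moments} used the diagonality of $\Lam$. Writing $k=\Tm_1 f$, $k^{\mathrm{eq}}=\Tm_1 f^{\mathrm{eq}}$ and $k^F=\Tm_1\mathcal{F}$, the step
\begin{equation*}
  k^\star=(\Id-\Lam')k+\Lam' k^{\mathrm{eq}}+\bigl(\Id-\tfrac12\Lam'\bigr)k^F
\end{equation*}
followed by $f^\star=\Tm_1^{-1}k^\star$ gives
\begin{equation*}
  f^\star=f+\Tm_1^{-1}\Lam'\Tm_1\,(f^{\mathrm{eq}}-f)+\bigl(\Id-\tfrac12\Tm_1^{-1}\Lam'\Tm_1\bigr)\mathcal{F} .
\end{equation*}
This uses only linearity and $\Tm_1^{-1}\Tm_1=\Id$. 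By the first step this is $f+\Pm_2(f^{\mathrm{eq}}-f)+(\Id-\tfrac12\Pm_2)\mathcal{F}$, the $\Tm_2$ scheme, for every $f$, $\bfu$ and $\mathcal{F}$.

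The only point needing care is what "constant" buys us for central moments. There $\Tm_1$ and $\Tm_2$ depend on $\bfu$, so one must check that $\Lam'$ does not. Since $\Am$ and $\Lam$ are both velocity-independent, $\Lam'$ is a single precomputable matrix, which is the content of "constant relaxation matrix" in the statement. I expect no real obstacle here beyond this remark.

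Equilibrium and forcing are the same population-space vectors in both schemes, as stressed after Lemma~\ref{thm:gauge}, so their $\Tm_1$-moments are simply the $\Tm_1$ representations of those vectors. In particular the central-moment equilibria computed in the $\Tm_1$ basis can be reused unchanged.
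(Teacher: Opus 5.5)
Your proposal is correct and follows the paper's proof, which consists only of your identity $\Pm_2=\Tm_2^{-1}\Lam\Tm_2=\Tm_1^{-1}(\Am^{-1}\Lam\Am)\Tm_1$. Your further check makes explicit a step the paper leaves implicit: the passage from \eqref{eq:collision-moments} to \eqref{eq:collision-populations} never uses the diagonality of $\Lam$, including in the forcing term.
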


\begin{proof}
$\Pm_2=\Tm_2^{-1}\Lam\Tm_2=\Tm_1^{-1}\left(\Am^{-1}\Lam\Am\right)\Tm_1$.
\end{proof}

A formulation can therefore be run through the cheapest transform in its gauge class;
only the relaxation step changes, and it stays sparse when $\Am$ is
(Section~\ref{sec:cost}).

Two remarks delimit the theorem, the first its hypothesis and the second its
strength.

\begin{remark}[Central versus raw moments]
\label{rem:central}
Lemma~\ref{thm:gauge} requires $\Am$ to be constant. The shift from raw to
central moments is $\Tm_{\mathrm{CM}}=\mathsf{N}(\bfu)\,\mathsf{M}$ with
$\mathsf{N}$ depending on the fluid velocity, so it is not a gauge transformation
and the theorem does not apply to it. Equivalently, the central-moment collision
can be written as a raw-moment one that relaxes towards a generalised equilibrium
depending on the local non-equilibrium~\cite{asinari2008}: the velocity dependence
can be moved into the equilibrium, but not removed by a constant change of basis.
This is the precise sense in which central moments are substantive while
orthogonalisation, as shown below, often is not.
\end{remark}

\begin{remark}[Strength of the statement]
\label{rem:strength}
Lemma~\ref{thm:gauge} and Corollary~\ref{cor:physical} are identities between
maps, holding at all orders in Mach number. This is strictly stronger than
equality of equivalent partial differential equations truncated at some order.
Every inequivalence reported in Section~\ref{sec:when} involves at least one entry in
a non-conserved column, and so holds for physical states without forcing.
\end{remark}

\begin{remark}[Boundaries]
\label{rem:boundaries}
The collision is local, so Lemma~\ref{thm:gauge} and its corollaries hold node by
node in bounded domains as well. Boundary conditions that act on populations, such
as bounce-back, do not involve the basis at all; boundary schemes written in moment
space are covered only if they use the same conjugate $\Pm$. The tests of
Section~\ref{sec:stability} are periodic except the lid-driven cavity of
Section~\ref{sec:rectstab}, whose half-way bounce-back walls act on populations.
\end{remark}

\paragraph{Three levels of equivalence}
The results distinguish three senses in which two formulations can coincide.
(i)~\emph{Operator equivalence}, $\Pm_1=\Pm_2$: the two collision maps agree for
every $f$, $\bfu$ and $\mathcal{F}$ (Lemma~\ref{thm:gauge}). (ii)~\emph{Equivalence
on physical states}: the maps agree whenever $\bfu$ is the velocity of $f$ and no
force acts, which requires only the non-conserved columns of $[\Lam,\Am]$ to vanish
(Corollary~\ref{cor:physical}); this is the sense that matters in a simulation.
(iii)~\emph{Numerical equivalence}: formulations equivalent in either sense still
differ at round-off in floating point, because $\Tm_1^{-1}$ and $\Tm_2^{-1}$ are
evaluated differently and are differently conditioned (Section~\ref{sec:cost} and
\ref{sec:verification}).
Each result below is stated at the strongest level that holds. D2Q9 with a single
shear rate is equivalent at level~(i); D2Q9 with the fourth-order moment relaxed at
the bulk rate, and D3Q27 with a two-rate central matrix, only at level~(ii); every
inequivalence reported fails already at level~(ii).

\subsection{A basis-independent obstruction}

Corollary~\ref{cor:graph} depends on $\Am$, hence on which orthogonal basis is
chosen, and Gram--Schmidt is order dependent. The following criterion removes
that dependence.

Equip $\mathbb{R}^{q}$ with the lattice-weight inner product
$\avg{a,b}=\sum_i w_i\,a_i b_i$. Let $\mathcal{D}$ be the deviatoric second-order
subspace, spanned by the traceless symmetric second-order monomials, which relax
at the shear rate $\omega$, and let $P_{\mathcal{D}}$ be the orthogonal projector
onto it. On the square and cubic lattices $\mathcal{D}$ is orthogonal to the
conserved moments, by parity and by the symmetry of the weights under exchange of
axes.

\begin{proposition}[Projection criterion]
\label{prop:projection}
Let an orthogonal basis be related to the monomial basis by a constant invertible
$\Am$ that leaves the conserved moments unchanged, each element relaxing at the
rate of the monomial it replaces, and let $\mathcal{D}$ be orthogonal to the
conserved moments. If some monomial $m$ that
relaxes at a rate other than $\omega$ has $P_{\mathcal{D}}\,m\neq0$, the
orthogonal and non-orthogonal schemes differ on physical states without forcing,
whichever orthogonal basis is used. Conversely, if every monomial outside
$\mathcal{D}$ is orthogonal to $\mathcal{D}$, Gram--Schmidt in any order leaves
the deviatoric block uncoupled, and $\omega$ may be chosen freely.
\end{proposition}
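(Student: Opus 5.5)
The plan is to reduce both directions to Corollary~\ref{cor:graph} in its physical-state form (Corollary~\ref{cor:physical}), and to read off the entries of $\Am$ from inner products in $\avg{\cdot,\cdot}$. Write the orthogonal basis as $\Tm_2=\Am\Tm_1$, with $\Tm_1$ the monomial basis. Row $i$ of $\Tm_2$ is then $b_i=\sum_j \Am_{ij} m_j$. Both $\{b_i\}$ and $\{m_j\}$ are bases of $\mathbb{R}^q$, so $\Am$ is invertible, and its nonzero pattern records which monomials enter which orthogonal element. The key observation is that orthogonality of the $b_i$ makes $P_{\mathcal{D}}$ expressible through them.

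\emph{Obstruction.} Suppose some monomial $m$ with rate $\lambda\neq\omega$ has $P_{\mathcal{D}}m\neq0$. I would argue by contradiction and assume that the two schemes coincide on physical states. By Corollary~\ref{cor:physical}, after deleting conserved vertices, $\Lam$ is then constant on every component of the coupling graph restricted to non-conserved moments. In particular, $\Am_{ij}=0$ whenever $\lambda_i\neq\lambda_j$ and $j$ is non-conserved. Since $\Am$ fixes the conserved rows, the only further entries are the conserved columns $\Am_{ij}$ of non-conserved rows $i$.

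So $\Am$ is block-structured: the rows belonging to the $\omega$-eigenspace, the rows $R_\omega$, are combinations of $\omega$-monomials plus conserved monomials. Because $\mathcal{D}$ is spanned by the deviatoric monomials, the span of $\{b_i:i\in R_\omega\}$ together with the conserved moments contains $\mathcal{D}$ plus any other $\omega$-monomials.

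Here is the step I expect to be delicate. I must show that this span, minus the conserved part, is exactly the orthogonal complement picture that forces $\mathcal{D}\perp m$. My first attempt is the following. The $b_i$ for $i\notin R_\omega$ are orthogonal to all $b_k$ with $k\in R_\omega$, and to the conserved moments. Hence they are orthogonal to $\mathcal{D}$, which lies in the span of those vectors, using $\mathcal{D}\perp$ conserved. Invertibility of the non-$\omega$ block then expresses $m$, modulo conserved moments and other non-$\omega$ monomials, as a combination of non-$\omega$ rows $b_i$. More carefully, the non-$\omega$ rows span the same space as the non-$\omega$ monomials modulo conserved moments. Thus $m=\sum c_i b_i+\text{conserved}$, and every term is orthogonal to $\mathcal{D}$. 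This gives $P_{\mathcal{D}}m=0$, a contradiction. The point to verify is that the allowed conserved-column entries do not spoil the span argument; they do not, because $\mathcal{D}$ is orthogonal to the conserved moments by hypothesis. None of this uses the Gram--Schmidt order, only the block pattern forced by equivalence, so the conclusion holds for every orthogonal basis.

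\emph{Converse.} Suppose every monomial outside $\mathcal{D}$ is orthogonal to $\mathcal{D}$. I would run Gram--Schmidt in an arbitrary order and induct on the step. At the step that orthogonalises a monomial $m\notin\mathcal{D}$, each previous element is either a non-$\mathcal{D}$ element, which by induction lies in $\mathcal{D}^\perp$, or a $\mathcal{D}$-element, which lies in $\mathcal{D}$ and has zero projection onto $m$. So the new element stays in $\mathcal{D}^\perp$, and subtracting $\mathcal{D}$ components is vacuous. At a step that orthogonalises a deviatoric monomial, it is modified only by earlier elements in $\mathcal{D}$, since its inner products with the $\mathcal{D}^\perp$ elements vanish. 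It therefore stays in $\mathcal{D}$.

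Hence $\Am_{ij}=0$ whenever exactly one of $i,j$ is deviatoric. In the coupling graph the deviatoric vertices then form a union of components, and the rest of the scheme is unaffected by the value of $\omega$. This means the commutator entries involving $\omega$ vanish for any value of $\omega$, and by Corollary~\ref{cor:graph} the choice of $\omega$ is free.
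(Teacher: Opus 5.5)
Your proof is correct and follows the paper's argument. In the obstruction you use Corollary~\ref{cor:physical} with Lemma~\ref{lem:factor} to place each non-conserved orthogonal element in its own rate group plus $\mathcal{C}$, and then use the mutual orthogonality of the rate groups and $\mathcal{D}\perp\mathcal{C}$ to force $P_{\mathcal{D}}m=0$; the paper phrases this through the subspaces $P_{\mathcal{C}^\perp}V_\lambda$. Your converse is the same induction along the Gram--Schmidt order. The one step you leave implicit, that each diagonal rate block of $\Am$ is invertible, holds because $\Am$ is block lower triangular with identity conserved block, so $\det\Am$ is the product of the block determinants.
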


\begin{proof}
Suppose the schemes coincide on physical states. By Corollary~\ref{cor:physical}
and Lemma~\ref{lem:factor}, $\Am_{ij}=0$ whenever $j$ is non-conserved and
$\lambda_j\neq\lambda_i$, so each non-conserved element of the orthogonal basis
lies in $V_\lambda+\mathcal{C}$, where $V_\lambda$ is spanned by the non-conserved
monomials of its rate $\lambda$ and $\mathcal{C}$ by the conserved ones. Being
orthogonal to the conserved elements, which span $\mathcal{C}$, the elements of
rate $\lambda$ span $P_{\mathcal{C}^\perp}V_\lambda$, and these subspaces are
mutually orthogonal for distinct rates. Since $\mathcal{D}\perp\mathcal{C}$, every
$d\in\mathcal{D}$ lies in $P_{\mathcal{C}^\perp}V_\omega$, and for $m\in V_\mu$
with $\mu\neq\omega$,
$\avg{d,m}=\avg{d,P_{\mathcal{C}^\perp}m}=0$. Hence $P_{\mathcal{D}}m=0$, a
contradiction. For the converse, proceed along the Gram--Schmidt order: if every
earlier element lies either in $\mathcal{D}$ or in the span of the monomials
outside it, a new monomial has zero projection onto the elements of the other
class, so its orthogonalised form stays within its own class, and $\Am$ has no
entry connecting $\mathcal{D}$ to the rest.
\end{proof}

\paragraph{Scope}
Lemma~\ref{thm:gauge} and its corollaries hold for any two invertible bases
related by a constant matrix and any diagonal relaxation matrix.
Proposition~\ref{prop:projection} is narrower. It concerns an orthogonal basis
obtained from a monomial basis by a constant $\Am$ that leaves the conserved
moments unchanged, with each element inheriting the rate of the monomial it
replaces, on a lattice whose deviatoric subspace is orthogonal to the conserved
moments. Within that class it is independent of how the orthogonalisation is
carried out, and it turns the question into one inner product per monomial:

\begin{tcolorbox}[colback=black!3,colframe=black!55,boxrule=0.5pt,arc=1pt,
  left=4pt,right=4pt,top=3pt,bottom=3pt,title={Practical criterion},
  fonttitle=\bfseries\small,coltitle=black,colbacktitle=black!10]
Given a lattice, a monomial central-moment basis and a relaxation matrix:
\begin{enumerate}
\item identify the deviatoric subspace $\mathcal{D}$, which relaxes at the shear
  rate $\omega$, and check that it is orthogonal to the conserved moments;
\item for every higher-order monomial $m$ relaxing at a rate other than $\omega$,
  compute $\|P_{\mathcal{D}}m\|^2$;
\item if any of these is nonzero, every orthogonalisation of the basis changes the
  scheme; if all vanish, the shear rate is free, and any remaining conditions are
  read directly from the non-conserved columns of $[\Lam,\Am]$.
\end{enumerate}
\end{tcolorbox}

\subsection{Stability at rest}
\label{sec:reststab-theory}

Lemma~\ref{thm:gauge} decides when two bases give the same scheme; it does not
say which of two different schemes to prefer. One property singles out a class of
bases. Let $w_i=f^{\mathrm{eq}}_i(1,\mathbf{0})$ be the populations of the rest
equilibrium, $\mathsf W=\mathrm{diag}(w_i)$, and $\|g\|_w^2=\sum_i|g_i|^2/w_i$ the
associated norm.

\begin{proposition}[Stability at rest]
\label{prop:rest}
Let the rows of $\Tm(\mathbf 0)$, including the conserved ones, be mutually orthogonal in the
inner product $\sum_i w_i a_i b_i$, and let every non-conserved rate lie in $(0,2]$.
Then the amplification matrix of the scheme linearised about the rest state satisfies
$\|\mathsf G(\bm k)\|_w\le1$ for every wavevector $\bm k$. The same holds for every
basis related to this one by a constant $\Am$ with $[\Lam,\Am]_{ij}=0$ for all
non-conserved $j$.
\end{proposition}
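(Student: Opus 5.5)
The plan is to write the linearised scheme about the rest state as streaming composed with a linear collision, and to show that each factor is a contraction in $\|\cdot\|_w$. Linearising \eqref{eq:collision-populations} about $f^{\mathrm{eq}}(1,\mathbf 0)$ without forcing gives the perturbation map $g\mapsto (\Id-\Pm)g+\Pm\,\mathsf J g$, where $\mathsf J$ is the Jacobian of $f^{\mathrm{eq}}$ with respect to $f$ through $(\rho,\bfu)$. At rest the central moments coincide with raw moments evaluated at $\bfu=\mathbf 0$, so $\Tm=\Tm(\mathbf 0)$ in this linearisation, up to terms that multiply the vanishing non-equilibrium and so drop out. The claim to check is that $\mathsf J$ is the projector onto the conserved moments: the equilibrium matches the conserved moments of $g$ exactly, and at rest the non-conserved equilibrium moments have zero first variation for standard equilibria. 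The latter I take as part of the setting. With $\lambda=1$ on the conserved rows, the linear collision is then $\mathsf K=\Id-\Tm^{-1}\Lam_{\mathrm{nc}}\Tm$, where $\Lam_{\mathrm{nc}}$ keeps only the non-conserved rates. Equivalently, $\mathsf K=\Tm^{-1}\mathrm{diag}(\mu_j)\Tm$ with $\mu_j=1$ on conserved moments and $\mu_j=1-\lambda_j$ otherwise.

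Next, $\mathsf K$ is self-adjoint in the $w$-norm, so it is a contraction. Set $\mathsf S=\Tm\,\mathsf W$. Orthogonality of the rows in $\sum_i w_ia_ib_i$ means $\Tm\mathsf W\Tm^{\top}=\mathsf D$ is diagonal and positive. Hence $\Tm^{-1}=\mathsf W\Tm^{\top}\mathsf D^{-1}$, and
\begin{equation*}
\mathsf W^{-1/2}\mathsf K\,\mathsf W^{1/2}=\mathsf W^{1/2}\Tm^{\top}\mathsf D^{-1/2}\,\mathrm{diag}(\mu_j)\,\mathsf D^{-1/2}\Tm\mathsf W^{1/2},
\end{equation*}
where $\mathsf Q=\mathsf D^{-1/2}\Tm\mathsf W^{1/2}$ is orthogonal. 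So $\|\mathsf K\|_w=\max_j|\mu_j|\le1$, because $\lambda_j\in(0,2]$ gives $|1-\lambda_j|<1$ or $=1$. Streaming acts in Fourier space as $\mathrm{diag}(e^{-\mathrm i\bm k\cdot\bfe_i})$, which is diagonal and unitary, so it commutes with $\mathsf W$ and has $w$-norm $1$. Therefore $\|\mathsf G(\bm k)\|_w\le1$ for every $\bm k$. Forcing is absent from the linear stability question, and I will state explicitly that $\mathsf G$ is taken without it.

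For the gauge-invariance statement, take $\Tm_2=\Am\Tm_1$ with $[\Lam,\Am]_{ij}=0$ for every non-conserved $j$. The argument of Corollary~\ref{cor:physical} gives $(\Pm_2-\Pm_1)v'=0$ for every $v'$ whose conserved moments vanish. The linearised collision applies $\Pm$ only to $(\mathsf J-\Id)g$, which has zero conserved moments. Hence $\mathsf K_2=\mathsf K_1$, so $\mathsf G$ is identical and the bound transfers.

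The main obstacle is the first step. I need to justify that, at rest, the linearised collision involves only $\Tm(\mathbf 0)$, and that $\mathsf J$ is exactly the $w$-orthogonal projector onto the conserved moments. This requires that the conserved rows are mutually orthogonal and orthogonal to all other rows, which the hypothesis supplies ``including the conserved ones''. It also requires that the non-conserved equilibria have vanishing derivative in $\bfu$ at $\bfu=\mathbf 0$. I would verify this for the central-moment equilibria used, whose central moments are independent of $\bfu$, and note that the velocity dependence of $\Tm$ multiplies $f-f^{\mathrm{eq}}=O(\epsilon)$ and so contributes only at second order.
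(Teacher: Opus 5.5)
Your proof is correct and follows the paper's argument step for step. The velocity dependence of $\Tm$ drops out at rest, and the linearised equilibrium is the projector onto the conserved moments. The row orthogonality $\Tm\mathsf W\Tm^{\top}=\mathsf D$ then makes the linearised collision self-adjoint in $\|\cdot\|_w$ with eigenvalues $1$ and $1-\lambda_j$. You show this by exhibiting the orthogonal similarity $\mathsf D^{-1/2}\Tm\mathsf W^{1/2}$, whereas the paper shows that $\Pm$ is $w$-self-adjoint and commutes with that projector. Finally, the streaming factor is unitary, and the gauge statement reuses Corollary~\ref{cor:physical}.

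The property you take as part of the setting, that the non-conserved equilibrium moments have zero first variation at rest, is not an extra assumption. Once the linearised equilibrium is $\mathsf W(\bm{1}\bm{1}^{\top}+c_s^{-2}\bm{e}\bm{e}^{\top})$, which the paper also simply asserts, it follows from the hypothesised orthogonality of the non-conserved rows to $1$ and $\bm{e}$. The $\bfu$-independence of the equilibrium central moments would not give it on its own.
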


\begin{proof}
At rest the velocity dependence of $\Tm$ contributes only through terms multiplying
$f^{\mathrm{eq}}-f$, which vanish there, so the linearised collision is
$\mathsf J=\Id-\Pm(\Id-\Pi)$, with $\Pm=\Tm(\mathbf 0)^{-1}\Lam\Tm(\mathbf 0)$ and
$\Pi$ the derivative of the equilibrium, $\Pi=\mathsf W(\bm1\bm1^{\top}+c_s^{-2}
\bm{e}\bm{e}^{\top})$, the $w$-orthogonal projector onto the conserved moments. With
$\Tm\mathsf W\Tm^{\top}=\mathsf D$ diagonal, $\mathsf W^{-1}\Pm=\Tm^{\top}\mathsf
D^{-1}\Lam\Tm$ is symmetric, so $\Pm$ is self-adjoint in the $w$-inner product, and
the conserved moments, mapped by $\mathsf W$, are among its eigenvectors. Hence
$\mathsf J$ is self-adjoint, with eigenvalue $1$ on the range of $\Pi$ and
$1-\lambda_j\in[-1,1)$ on its complement, and $\|\mathsf J\|_w\le1$. The streaming
factor $\mathsf E(\bm k)$ is diagonal with unimodular entries, so
$\|\mathsf E(\bm k)\|_w=1$. For the second statement, $f^{\mathrm{eq}}-f$ has
vanishing conserved moments in the linearisation too, and the argument of
Corollary~\ref{cor:physical} gives the same $\mathsf J$.
\end{proof}

This is the stability structure of Refs.~\cite{banda2006,junk2009} specialised to
moment bases. On D2Q9 and D3Q27 with $c_s^2=1/3$ the rest equilibrium has the lattice
weights, so the orthogonal bases of this paper satisfy the hypothesis, whereas the
non-orthogonal ones do so only where they are gauge equivalent to them. Nothing
guarantees the rest state otherwise. With isotropic rates drawn at random from
$(0.05,2)$ for the bulk, shear, third- and fourth-order moments of D2Q9, the
non-orthogonal central-moment scheme is linearly unstable at rest in $69$ of $200$
draws, the orthogonal one in none; in the setting of Section~\ref{sec:stability},
with $\lambda_3=\lambda_4=1$, the non-orthogonal rest state loses stability once
$\omega_b$ exceeds a threshold between $1.75$ and $1.98$ that grows with $\omega$
(Fig.~\ref{fig:stability}). On a rectangular lattice with a free sound speed the rest
equilibrium no longer has the lattice weights, and which inner product a basis is
orthogonal in then matters (Section~\ref{sec:ipsummary} and \ref{sec:ip}).

\section{When does orthogonalisation matter?}
\label{sec:when}

The question is not which moments Gram--Schmidt mixes, but whether the resulting
coupling graph joins moments that carry different relaxation rates.
Figure~\ref{fig:graphs} shows the graphs for the square, rectangular and cubic lattices; the rest of
this section reads the consequences off them.

\begin{figure}[t]
\centering
\begin{tikzpicture}[baseline=(current bounding box.north),cons/.style={circle,draw,fill=black!35,inner sep=0pt,minimum size=5pt},dev/.style={circle,draw=red!70!black,fill=red!65,inner sep=0pt,minimum size=5pt},tr/.style={circle,draw=blue!70!black,fill=blue!55,inner sep=0pt,minimum size=5pt},hi/.style={circle,draw,fill=white,inner sep=0pt,minimum size=5pt}]
\draw[black!45] (0.000,0.000)--(0.000,-1.240);
\draw[black!45] (0.000,0.000)--(0.000,-2.480);
\draw[black!45] (0.700,-0.620)--(0.700,-1.860);
\draw[black!45] (1.400,-0.620)--(1.400,-1.860);
\draw[black!45] (0.000,-1.240)--(0.000,-2.480);
\node[cons] at (0.000,0.000) {};
\node[tr] at (0.000,-1.240) {};
\node[hi] at (0.000,-2.480) {};
\node[cons] at (0.700,-0.620) {};
\node[hi] at (0.700,-1.860) {};
\node[cons] at (1.400,-0.620) {};
\node[hi] at (1.400,-1.860) {};
\node[dev] at (2.100,-1.240) {};
\node[dev] at (2.800,-1.240) {};
\node[font=\scriptsize,anchor=east] at (-0.25,0.000) {0};
\node[font=\scriptsize,anchor=east] at (-0.25,-0.620) {1};
\node[font=\scriptsize,anchor=east] at (-0.25,-1.240) {2};
\node[font=\scriptsize,anchor=east] at (-0.25,-1.860) {3};
\node[font=\scriptsize,anchor=east] at (-0.25,-2.480) {4};
\node[font=\scriptsize,rotate=90] at (-0.75,-1.240) {order};
\node[font=\small,anchor=west] at (-0.9,0.650) {(a) D2Q9};
\end{tikzpicture}\hfill
\begin{tikzpicture}[baseline=(current bounding box.north),cons/.style={circle,draw,fill=black!35,inner sep=0pt,minimum size=5pt},dev/.style={circle,draw=red!70!black,fill=red!65,inner sep=0pt,minimum size=5pt},tr/.style={circle,draw=blue!70!black,fill=blue!55,inner sep=0pt,minimum size=5pt},hi/.style={circle,draw,fill=white,inner sep=0pt,minimum size=5pt}]
\draw[black!45] (0.000,0.000)--(0.000,-1.240);
\draw[black!45] (0.000,0.000)--(0.000,-2.480);
\draw[black!45] (0.700,-0.620)--(0.700,-1.860);
\draw[black!45] (1.400,-0.620)--(1.400,-1.860);
\draw[black!45] (2.100,-0.620)--(2.100,-1.860);
\draw[black!45] (4.900,-1.240)--(5.300,-1.240);
\draw[black!45] (0.000,-1.240)--(0.000,-2.480);
\node[cons] at (0.000,0.000) {};
\node[tr] at (0.000,-1.240) {};
\node[hi] at (0.000,-2.480) {};
\node[cons] at (0.700,-0.620) {};
\node[hi] at (0.700,-1.860) {};
\node[cons] at (1.400,-0.620) {};
\node[hi] at (1.400,-1.860) {};
\node[cons] at (2.100,-0.620) {};
\node[hi] at (2.100,-1.860) {};
\node[dev] at (2.800,-1.240) {};
\node[dev] at (3.500,-1.240) {};
\node[dev] at (4.200,-1.240) {};
\node[dev] at (4.900,-1.240) {};
\node[dev] at (5.300,-1.240) {};
\node[hi] at (6.000,-1.860) {};
\node[font=\scriptsize,anchor=east] at (-0.25,0.000) {0};
\node[font=\scriptsize,anchor=east] at (-0.25,-0.620) {1};
\node[font=\scriptsize,anchor=east] at (-0.25,-1.240) {2};
\node[font=\scriptsize,anchor=east] at (-0.25,-1.860) {3};
\node[font=\scriptsize,anchor=east] at (-0.25,-2.480) {4};
\node[font=\scriptsize,rotate=90] at (-0.75,-1.240) {order};
\node[font=\small,anchor=west] at (-0.9,0.650) {(b) D3Q15};
\end{tikzpicture}\hfill
\begin{tikzpicture}[baseline=(current bounding box.north),cons/.style={circle,draw,fill=black!35,inner sep=0pt,minimum size=5pt},dev/.style={circle,draw=red!70!black,fill=red!65,inner sep=0pt,minimum size=5pt},tr/.style={circle,draw=blue!70!black,fill=blue!55,inner sep=0pt,minimum size=5pt},hi/.style={circle,draw,fill=white,inner sep=0pt,minimum size=5pt}]
\draw[black!45] (0.200,0.000)--(0.000,-1.240);
\draw[red!75!black,line width=1.1pt] (0.200,0.000)--(0.400,-1.240);
\draw[black!45] (0.200,0.000)--(0.200,-2.480);
\draw[black!45] (1.100,-0.620)--(1.100,-1.860);
\draw[black!45] (1.800,-0.620)--(1.800,-1.860);
\draw[red!75!black,line width=1.1pt] (0.000,-1.240)--(0.400,-1.240);
\draw[black!45] (0.000,-1.240)--(0.200,-2.480);
\draw[red!75!black,line width=1.1pt] (0.400,-1.240)--(0.200,-2.480);
\node[cons] at (0.200,0.000) {};
\node[tr] at (0.000,-1.240) {};
\node[dev] at (0.400,-1.240) {};
\node[hi] at (0.200,-2.480) {};
\node[cons] at (1.100,-0.620) {};
\node[hi] at (1.100,-1.860) {};
\node[cons] at (1.800,-0.620) {};
\node[hi] at (1.800,-1.860) {};
\node[dev] at (2.500,-1.240) {};
\node[font=\scriptsize,anchor=east] at (-0.25,0.000) {0};
\node[font=\scriptsize,anchor=east] at (-0.25,-0.620) {1};
\node[font=\scriptsize,anchor=east] at (-0.25,-1.240) {2};
\node[font=\scriptsize,anchor=east] at (-0.25,-1.860) {3};
\node[font=\scriptsize,anchor=east] at (-0.25,-2.480) {4};
\node[font=\scriptsize,rotate=90] at (-0.75,-1.240) {order};
\node[font=\small,anchor=west] at (-0.9,0.650) {(c) rectangular, $a=2$};
\end{tikzpicture}

\vspace{0.9em}
\begin{tikzpicture}[baseline=(current bounding box.north),cons/.style={circle,draw,fill=black!35,inner sep=0pt,minimum size=5pt},dev/.style={circle,draw=red!70!black,fill=red!65,inner sep=0pt,minimum size=5pt},tr/.style={circle,draw=blue!70!black,fill=blue!55,inner sep=0pt,minimum size=5pt},hi/.style={circle,draw,fill=white,inner sep=0pt,minimum size=5pt}]
\draw[black!45] (0.400,0.000)--(0.800,-1.240);
\draw[black!45] (0.400,0.000)--(0.000,-2.480);
\draw[black!45] (0.400,0.000)--(0.400,-2.480);
\draw[black!45] (0.400,0.000)--(0.800,-2.480);
\draw[black!45] (1.700,-0.620)--(1.500,-1.860);
\draw[black!45] (1.700,-0.620)--(1.900,-1.860);
\draw[black!45] (2.800,-0.620)--(2.600,-1.860);
\draw[black!45] (2.800,-0.620)--(3.000,-1.860);
\draw[black!45] (3.900,-0.620)--(3.700,-1.860);
\draw[black!45] (3.900,-0.620)--(4.100,-1.860);
\draw[black!45] (0.000,-1.240)--(0.400,-1.240);
\draw[red!75!black,line width=1.1pt] (0.000,-1.240)--(0.000,-2.480);
\draw[red!75!black,line width=1.1pt] (0.000,-1.240)--(0.400,-2.480);
\draw[red!75!black,line width=1.1pt] (0.000,-1.240)--(0.800,-2.480);
\draw[red!75!black,line width=1.1pt] (0.400,-1.240)--(0.000,-2.480);
\draw[red!75!black,line width=1.1pt] (0.400,-1.240)--(0.400,-2.480);
\draw[red!75!black,line width=1.1pt] (0.400,-1.240)--(0.800,-2.480);
\draw[black!45] (0.800,-1.240)--(0.000,-2.480);
\draw[black!45] (0.800,-1.240)--(0.400,-2.480);
\draw[black!45] (0.800,-1.240)--(0.800,-2.480);
\draw[black!45] (1.500,-1.860)--(1.900,-1.860);
\draw[black!45] (2.600,-1.860)--(3.000,-1.860);
\draw[black!45] (3.700,-1.860)--(4.100,-1.860);
\draw[black!45] (0.000,-2.480)--(0.400,-2.480);
\draw[black!45] (0.000,-2.480)--(0.800,-2.480);
\draw[black!45] (0.400,-2.480)--(0.800,-2.480);
\node[cons] at (0.400,0.000) {};
\node[dev] at (0.000,-1.240) {};
\node[dev] at (0.400,-1.240) {};
\node[tr] at (0.800,-1.240) {};
\node[hi] at (0.000,-2.480) {};
\node[hi] at (0.400,-2.480) {};
\node[hi] at (0.800,-2.480) {};
\node[cons] at (1.700,-0.620) {};
\node[hi] at (1.500,-1.860) {};
\node[hi] at (1.900,-1.860) {};
\node[cons] at (2.800,-0.620) {};
\node[hi] at (2.600,-1.860) {};
\node[hi] at (3.000,-1.860) {};
\node[cons] at (3.900,-0.620) {};
\node[hi] at (3.700,-1.860) {};
\node[hi] at (4.100,-1.860) {};
\node[dev] at (4.800,-1.240) {};
\node[dev] at (5.500,-1.240) {};
\node[dev] at (6.200,-1.240) {};
\node[font=\scriptsize,anchor=east] at (-0.25,0.000) {0};
\node[font=\scriptsize,anchor=east] at (-0.25,-0.620) {1};
\node[font=\scriptsize,anchor=east] at (-0.25,-1.240) {2};
\node[font=\scriptsize,anchor=east] at (-0.25,-1.860) {3};
\node[font=\scriptsize,anchor=east] at (-0.25,-2.480) {4};
\node[font=\scriptsize,rotate=90] at (-0.75,-1.240) {order};
\node[font=\small,anchor=west] at (-0.9,0.650) {(d) D3Q19};
\end{tikzpicture}\hfill
\begin{tikzpicture}[baseline=(current bounding box.north),cons/.style={circle,draw,fill=black!35,inner sep=0pt,minimum size=5pt},dev/.style={circle,draw=red!70!black,fill=red!65,inner sep=0pt,minimum size=5pt},tr/.style={circle,draw=blue!70!black,fill=blue!55,inner sep=0pt,minimum size=5pt},hi/.style={circle,draw,fill=white,inner sep=0pt,minimum size=5pt}]
\draw[black!45] (0.400,0.000)--(0.800,-1.240);
\draw[black!45] (0.400,0.000)--(0.000,-2.480);
\draw[black!45] (0.400,0.000)--(0.400,-2.480);
\draw[black!45] (0.400,0.000)--(0.800,-2.480);
\draw[black!45] (0.400,0.000)--(0.400,-3.720);
\draw[black!45] (1.700,-0.620)--(1.500,-1.860);
\draw[black!45] (1.700,-0.620)--(1.900,-1.860);
\draw[black!45] (1.700,-0.620)--(1.700,-3.100);
\draw[black!45] (2.800,-0.620)--(2.600,-1.860);
\draw[black!45] (2.800,-0.620)--(3.000,-1.860);
\draw[black!45] (2.800,-0.620)--(2.800,-3.100);
\draw[black!45] (3.900,-0.620)--(3.700,-1.860);
\draw[black!45] (3.900,-0.620)--(4.100,-1.860);
\draw[black!45] (3.900,-0.620)--(3.900,-3.100);
\draw[red!75!black,line width=1.1pt] (4.800,-1.240)--(4.800,-2.480);
\draw[red!75!black,line width=1.1pt] (5.500,-1.240)--(5.500,-2.480);
\draw[red!75!black,line width=1.1pt] (6.200,-1.240)--(6.200,-2.480);
\draw[black!45] (0.000,-1.240)--(0.400,-1.240);
\draw[red!75!black,line width=1.1pt] (0.000,-1.240)--(0.000,-2.480);
\draw[red!75!black,line width=1.1pt] (0.000,-1.240)--(0.400,-2.480);
\draw[red!75!black,line width=1.1pt] (0.000,-1.240)--(0.800,-2.480);
\draw[red!75!black,line width=1.1pt] (0.400,-1.240)--(0.000,-2.480);
\draw[red!75!black,line width=1.1pt] (0.400,-1.240)--(0.400,-2.480);
\draw[red!75!black,line width=1.1pt] (0.400,-1.240)--(0.800,-2.480);
\draw[black!45] (0.800,-1.240)--(0.000,-2.480);
\draw[black!45] (0.800,-1.240)--(0.400,-2.480);
\draw[black!45] (0.800,-1.240)--(0.800,-2.480);
\draw[black!45] (0.800,-1.240)--(0.400,-3.720);
\draw[black!45] (1.500,-1.860)--(1.700,-3.100);
\draw[black!45] (1.900,-1.860)--(1.700,-3.100);
\draw[black!45] (2.600,-1.860)--(2.800,-3.100);
\draw[black!45] (3.000,-1.860)--(2.800,-3.100);
\draw[black!45] (3.700,-1.860)--(3.900,-3.100);
\draw[black!45] (4.100,-1.860)--(3.900,-3.100);
\draw[black!45] (0.000,-2.480)--(0.400,-3.720);
\draw[black!45] (0.400,-2.480)--(0.400,-3.720);
\draw[black!45] (0.800,-2.480)--(0.400,-3.720);
\node[cons] at (0.400,0.000) {};
\node[dev] at (0.000,-1.240) {};
\node[dev] at (0.400,-1.240) {};
\node[tr] at (0.800,-1.240) {};
\node[hi] at (0.000,-2.480) {};
\node[hi] at (0.400,-2.480) {};
\node[hi] at (0.800,-2.480) {};
\node[hi] at (0.400,-3.720) {};
\node[cons] at (1.700,-0.620) {};
\node[hi] at (1.500,-1.860) {};
\node[hi] at (1.900,-1.860) {};
\node[hi] at (1.700,-3.100) {};
\node[cons] at (2.800,-0.620) {};
\node[hi] at (2.600,-1.860) {};
\node[hi] at (3.000,-1.860) {};
\node[hi] at (2.800,-3.100) {};
\node[cons] at (3.900,-0.620) {};
\node[hi] at (3.700,-1.860) {};
\node[hi] at (4.100,-1.860) {};
\node[hi] at (3.900,-3.100) {};
\node[dev] at (4.800,-1.240) {};
\node[hi] at (4.800,-2.480) {};
\node[dev] at (5.500,-1.240) {};
\node[hi] at (5.500,-2.480) {};
\node[dev] at (6.200,-1.240) {};
\node[hi] at (6.200,-2.480) {};
\node[hi] at (6.900,-1.860) {};
\node[font=\scriptsize,anchor=east] at (-0.25,0.000) {0};
\node[font=\scriptsize,anchor=east] at (-0.25,-0.620) {1};
\node[font=\scriptsize,anchor=east] at (-0.25,-1.240) {2};
\node[font=\scriptsize,anchor=east] at (-0.25,-1.860) {3};
\node[font=\scriptsize,anchor=east] at (-0.25,-2.480) {4};
\node[font=\scriptsize,anchor=east] at (-0.25,-3.100) {5};
\node[font=\scriptsize,anchor=east] at (-0.25,-3.720) {6};
\node[font=\scriptsize,rotate=90] at (-0.75,-1.860) {order};
\node[font=\small,anchor=west] at (-0.9,0.650) {(e) D3Q27};
\end{tikzpicture}
\caption{Coupling graphs $\mathcal{G}(\Am)$ of the order-graded orthogonalisation, drawn from the computed transformation matrices with the standard weights: one vertex per basis moment, placed by polynomial order, and one edge per nonzero off-diagonal entry of $\Am$. Vertices are conserved (grey), deviatoric (red), trace (blue) or higher-order (white); edges joining the deviatoric block to any other moment are drawn thick and red. Components are separated horizontally; the isolated vertex at order three on D3Q15 and D3Q27 is $\Cx\Cy\Cz$. (a,\,b) On D2Q9 and D3Q15 no such edge exists, and the shear rate is free. (c) On the rectangular D2Q9 lattice with $a=2$ the broken symmetry between the axes joins the normal-stress difference to the density, the trace and the fourth-order moment, even with a single shear rate (Section~\ref{sec:rect-theory}). (d,\,e) On D3Q19 the two normal-stress differences are joined to the three fourth-order moments $C_\alpha^2C_\beta^2$ through the face diagonals, and on D3Q27 the off-diagonal shear moments are in addition joined to $C_\alpha C_\beta C_\gamma^2$ through the body diagonals, the two routes of Theorem~\ref{thm:geom}.}
\label{fig:graphs}
\end{figure}

\subsection{A lattice-geometric criterion}
\label{sec:geom}

Proposition~\ref{prop:projection} reduces the question to one inner product per
monomial. On the standard lattices these inner products can be evaluated once and for
all, and the answer depends only on which velocity shells the lattice contains.

\begin{theorem}[Lattice geometry]
\label{thm:geom}
Let the velocities lie in $\{0,\pm1\}^d$, $d=2$ or $3$, with positive weights that are
invariant under permutations and reflections of the axes. Let the basis consist of the
conserved moments, the trace, the deviatoric second-order monomials, relaxed at
$\omega$, and further monomials $\Cx^{p}\Cy^{q}\Cz^{r}$ with exponents at most two,
linearly independent on the lattice, and let it be orthogonalised in the inner product
of the weights by Gram--Schmidt in any order that begins with the conserved moments.
Then a basis monomial outside the deviatoric block has a nonzero projection onto it if
and only if it is
\begin{enumerate}
\item[(a)] $C_\alpha^2C_\beta^2$ with $\alpha\neq\beta$, in three dimensions, on a lattice
  with velocities that have exactly two nonzero components; or
\item[(b)] $C_\alpha C_\beta C_\gamma^2$ with $\alpha,\beta,\gamma$ distinct, on a lattice
  with velocities that have three nonzero components.
\end{enumerate}
Consequently the shear rate is free on D2Q9 and D3Q15, whereas on D3Q19 and D3Q27
the orthogonal and non-orthogonal schemes differ on physical states without forcing
whenever a monomial of type (a) or (b) relaxes at a rate other than $\omega$.
\end{theorem}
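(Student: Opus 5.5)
The plan is to apply Proposition~\ref{prop:projection}, so the theorem reduces to deciding, monomial by monomial, whether $P_{\mathcal{D}}m\neq0$. Since $P_{\mathcal{D}}m=0$ exactly when $m$ is $w$-orthogonal to a spanning set of $\mathcal{D}$, I will work with the spanning set of off-diagonal products $C_\alpha C_\beta$ ($\alpha\neq\beta$) and normal-stress differences $C_\alpha^2-C_\beta^2$. The Gram--Schmidt coefficients are inner products in the weights, which are fixed constants, so I evaluate all inner products with $\bm C_i$ replaced by $\bfe_i$, as in the rest-state construction. Every sum then has the form $\sum_i w_i\, e_{ix}^{a}e_{iy}^{b}e_{iz}^{c}$ with $e_{i\alpha}\in\{0,\pm1\}$. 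Two facts do most of the work. First, reflection invariance kills any sum with an odd exponent. Second, $e^{2k}=e^2$ for $k\geq1$, so every even sum equals the total weight $S(\cdot)$ of the velocities that are nonzero in a prescribed set of components.

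\emph{Off-diagonal shear moments.} For $m=\Cx^{p}\Cy^{q}\Cz^{r}$ with exponents at most two, $\avg{C_\alpha C_\beta,m}$ survives reflection only if $m$ carries odd powers of $C_\alpha$ and $C_\beta$ and an even power of the third axis. Hence $m=C_\alpha C_\beta$, which is itself deviatoric, or $m=C_\alpha C_\beta C_\gamma^2$ with $\gamma$ distinct, which is possible only for $d=3$. In the latter case the inner product is $S(\alpha\beta\gamma)$, the weight of the body diagonals. It is positive exactly when the lattice has velocities with three nonzero components, which gives case (b). On D3Q15 the body diagonals are the only velocities with $e_\alpha e_\beta\neq0$, so $C_\alpha C_\beta C_\gamma^2=C_\alpha C_\beta$ on the lattice. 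The independence hypothesis therefore excludes this monomial from the basis, and type (b) effectively arises only on D3Q27.

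\emph{Normal-stress differences.} Reflections force $m$ to be even in every component, so $m\in\{1,C_\alpha^2,C_\alpha^2C_\beta^2,\Cx^2\Cy^2\Cz^2\}$. The monomials $C_\alpha^2$ are not basis elements outside the block, because they are spanned by the trace and the deviatoric moments. The density, the trace and $\Cx^2\Cy^2\Cz^2$ are symmetric under the exchange $\alpha\leftrightarrow\beta$, and so are orthogonal to $C_\alpha^2-C_\beta^2$; this also settles that the trace has zero projection. In $d=2$ the only remaining monomial is $\Cx^2\Cy^2$, which is again symmetric, so nothing couples and D2Q9 is free. In $d=3$ the remaining monomials are $C_\alpha^2C_\gamma^2$, and the computation is
\[
\avg{C_\alpha^2-C_\beta^2,\;C_\alpha^2C_\gamma^2}=S(\alpha\gamma)-S(\alpha\beta\gamma),
\]
which is the weight of the face diagonals in the $\alpha\gamma$ plane. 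The pairing with $\beta^2$ in the other slot is treated in the same way. This quantity is nonzero exactly when velocities with exactly two nonzero components exist, which gives case (a).

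\emph{Consequences.} D2Q9 and D3Q15 have no surviving monomial, so the converse half of Proposition~\ref{prop:projection} leaves the shear rate free. Its hypotheses hold: $\mathcal{D}$ is orthogonal to the conserved moments by parity and axis symmetry, and the Gram--Schmidt order begins with the conserved moments. On D3Q19 (face diagonals) and D3Q27 (face and body diagonals), any type (a) or (b) monomial relaxed at a rate other than $\omega$ triggers the first half of the proposition, so the schemes differ on physical states without forcing. I expect the main obstacle to be bookkeeping rather than analysis. One must justify that the constant matrix $\Am$ and the inner products may be taken at rest. One must also handle the lattice-level dependences correctly, namely the D3Q15 identification above and the fact that $C_\alpha^2$ is spanned by the trace and deviatoric moments, so that "basis monomial outside the block" is read consistently and the "only if" direction is not spoiled by monomials that are excluded as dependent.
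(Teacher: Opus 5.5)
Your proposal is correct and follows essentially the same route as the paper. Reflection parity leaves only $C_\alpha C_\beta C_\gamma^2$ against the off-diagonal moments and only the all-even monomials against the normal-stress differences, the indicator property of $C_\alpha^2$ turns the surviving overlaps into the body-diagonal weight and the weight of velocities with exactly two nonzero components, the dependence $C_\alpha C_\beta C_\gamma^2=C_\alpha C_\beta$ disposes of D3Q15, and Proposition~\ref{prop:projection} supplies the consequences. The only difference is presentational: you dismiss the symmetric even monomials by the $\alpha\leftrightarrow\beta$ exchange, whereas the paper writes the general overlap $W(S\cup\{\alpha\})-W(S\cup\{\gamma\})$ and reads off that it is nonzero only when exactly one of the two indices lies in $S$.
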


\begin{proof}
At rest the central moments are raw moments, and with components in $\{0,\pm1\}$ the
square $C_\alpha^2$ is the indicator of a nonzero $\alpha$-component. Write $W(S)$ for
the total weight of the velocities whose components on the set of axes $S$ are all
nonzero; by the symmetry of the weights $W(S)$ depends only on $|S|$. The deviatoric
block is spanned by $C_\alpha C_\beta$ and $C_\alpha^2-C_\gamma^2$, and it is orthogonal
to the conserved moments by parity and symmetry. By reflection symmetry
$\avg{m,C_\alpha C_\beta}$ vanishes unless the exponents of $\alpha$ and $\beta$ in $m$
are odd and all others even; with exponents at most two this leaves
$m=C_\alpha C_\beta C_\gamma^2$, for which $\avg{m,C_\alpha C_\beta}=W(\{\alpha,\beta,\gamma\})$,
the weight of the velocities with three nonzero components. Likewise
$\avg{m,C_\alpha^2-C_\gamma^2}$ vanishes unless all exponents are even, $m=\prod_{\delta
\in S}C_\delta^2$, and then equals $W(S\cup\{\alpha\})-W(S\cup\{\gamma\})$, which is
nonzero only if exactly one of $\alpha,\gamma$ lies in $S$. The trace is symmetric and
contributes nothing; for $S=\{\alpha,\beta\}$ with $\beta\ne\gamma$ the difference is
the weight of the velocities whose $\alpha$- and $\beta$-components are nonzero and whose
$\gamma$-component is zero, which exists only in three dimensions and is positive only
if the lattice has velocities with exactly two nonzero components; for $|S|=0$ or
$3$ the difference vanishes, and $|S|=1$ gives the second-order monomials themselves,
which are not among the further monomials. Every other overlap vanishes, and Proposition~\ref{prop:projection}
turns the two cases into the stated consequences. On D2Q9 case (a) cannot arise, and on
D3Q15, which has no velocities with exactly two nonzero components, $C_\alpha C_\beta
C_\gamma^2$ coincides with $C_\alpha C_\beta$ and is not an independent basis element.
\end{proof}

The theorem locates the difference between two and three dimensions in the geometry
of the velocity set. Orthogonalisation reaches the shear moments only through the face
diagonals, which weight $\Cx^2\Cy^2$ differently from $\Cx^2\Cz^2$, and through the
body diagonals, which link $\Cx\Cy\Cz^2$ to $\Cx\Cy$; D2Q9 has no third axis, D3Q15 has
body diagonals but no independent monomial of type (b), and D3Q19 and D3Q27 have face
diagonals. The weights enter only through their symmetry, so the conclusion holds for
every such inner product, including the rest-equilibrium and unweighted ones used in
\ref{sec:ip}. On rectangular lattices the symmetry under exchange of axes is
broken, and even the trace acquires a projection (Section~\ref{sec:rect-theory}).

For D3Q15 we checked the prediction directly (Fig.~\ref{fig:graphs}b), with the
fifteen-moment basis
\begin{equation*}
  \Bigl\{1,\ C_\alpha,\ \textstyle\sum_\alpha C_\alpha^2,\ \text{deviatoric},\
  C_\alpha\sum_{\beta\ne\alpha}C_\beta^2,\ \Cx\Cy\Cz,\ \sum_{\alpha<\beta}C_\alpha^2C_\beta^2\Bigr\}:
\end{equation*}
its rest-frame Gram--Schmidt matrix has no entry touching the deviatoric block, for the standard and
for random symmetric weights, the surviving commutator entries link only the
density, the trace and the fourth-order moment, and at a random state the
post-collision populations of the two formulations agree exactly for a single shear
rate, differ with an independent bulk rate, and agree again when the fourth-order
moment relaxes at the bulk rate. D3Q15 therefore behaves as D2Q9.

\subsection{The square D2Q9 lattice}

Take the non-orthogonal central-moment basis of Ref.~\cite{derosis2016},
\begin{equation}
  \begin{aligned}
    \{\,&1,\ \Cx,\ \Cy,\ \Cx^2+\Cy^2,\ \Cx^2-\Cy^2,\\
        &\Cx\Cy,\ \Cx^2\Cy,\ \Cx\Cy^2,\ \Cx^2\Cy^2\,\},
  \end{aligned}
  \label{eq:basis-d2q9}
\end{equation}
together with the orthogonal basis obtained from it by rest-frame Gram--Schmidt,
the Hermite-type set $\{1,\Cx,\Cy,\Cx^2+\Cy^2-2c_s^2,\Cx^2-\Cy^2,\Cx\Cy,
(\Cx^2-c_s^2)\Cy,(\Cy^2-c_s^2)\Cx,(\Cx^2-c_s^2)(\Cy^2-c_s^2)\}$. The
transformation $\Am$ between them is constant, lower triangular with unit
diagonal, and has five nonzero off-diagonal entries:
\begin{equation}
  \begin{aligned}
    &\Am_{41}=-\tfrac{2}{3},\quad
     \Am_{73}=-\tfrac{1}{3},\quad
     \Am_{82}=-\tfrac{1}{3},\\
    &\Am_{91}=\tfrac{1}{9},\quad
     \Am_{94}=-\tfrac{1}{3}.
  \end{aligned}
  \label{eq:A-d2q9}
\end{equation}
Its coupling graph, Fig.~\ref{fig:graphs}(a), has components
\begin{equation}
  \begin{aligned}
    &\{1,\ \Cx^2{+}\Cy^2,\ \Cx^2\Cy^2\},\quad \{\Cx,\ \Cx\Cy^2\},\\
    &\{\Cy,\ \Cx^2\Cy\},\quad \{\Cx^2{-}\Cy^2\},\quad \{\Cx\Cy\}.
  \end{aligned}
  \label{eq:components-d2q9}
\end{equation}
As Theorem~\ref{thm:geom} requires, the two deviatoric moments are isolated vertices, and they are precisely the
moments carrying the shear rate $\omega$. By Corollary~\ref{cor:graph} that rate
is unconstrained, which proves the following.

\begin{corollary}
\label{cor:d2q9}
On the square D2Q9 lattice with
\begin{equation*}
  \Lam=\mathrm{diag}(1,1,1,1,\omega,\omega,1,1,1),
\end{equation*}
the orthogonal and non-orthogonal central-moment schemes are identical, for arbitrary $\omega$,
arbitrary $\bfu$, arbitrary populations and arbitrary forcing.
\end{corollary}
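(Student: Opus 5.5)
The plan is to apply Lemma~\ref{thm:gauge} at its strongest level, operator equivalence. This requires showing $[\Lam,\Am]=0$ for the explicit $\Am$ of \eqref{eq:A-d2q9}, which gives $\Pm_1=\Pm_2$ and hence identical post-collision populations for every $f$, $\bfu$ and $\mathcal{F}$.

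The first step is to confirm that $\Am$ is genuinely constant, since the Lemma requires this and the central-moment rows depend on $\bfu$. I would write each Hermite-type element as a polynomial in $\Cx,\Cy$ with coefficients that are only powers of $c_s^2=1/3$. For example, $(\Cx^2-c_s^2)(\Cy^2-c_s^2)=\Cx^2\Cy^2-c_s^2(\Cx^2+\Cy^2)+c_s^4$. Expressed in the basis \eqref{eq:basis-d2q9}, this has coefficients $1$ on $\Cx^2\Cy^2$, $-1/3$ on $\Cx^2+\Cy^2$ and $1/9$ on $1$, which are $\Am_{94}$ and $\Am_{91}$. The other entries come the same way: $\Cx^2+\Cy^2-2c_s^2$ gives $\Am_{41}=-2/3$, and $(\Cx^2-c_s^2)\Cy$ and $(\Cy^2-c_s^2)\Cx$ give $-1/3$ against $\Cy$ and $\Cx$. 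None of these coefficients involves $\bfu$, so the same constant lower-triangular $\Am$ relates the two transforms at every velocity. This expansion also checks the indices listed in \eqref{eq:A-d2q9}.

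The second step uses Lemma~\ref{lem:factor}, which gives $[\Lam,\Am]_{ij}=\Am_{ij}(\lambda_i-\lambda_j)$, so only the five nonzero off-diagonal entries need checking. Their index pairs are $(4,1),(7,3),(8,2),(9,1),(9,4)$. In every one of them both indices lie in $\{1,2,3,4,7,8,9\}$, where $\lambda=1$. The two indices carrying $\omega$, namely $5$ and $6$ (the deviatoric moments $\Cx^2-\Cy^2$ and $\Cx\Cy$), never appear in a nonzero off-diagonal entry. Equivalently, by Corollary~\ref{cor:graph}, they are isolated vertices of the components \eqref{eq:components-d2q9}, and $\Lam$ equals $1$ on each of the other components. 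Hence $[\Lam,\Am]=0$ for every $\omega$, and Lemma~\ref{thm:gauge} gives the claim, with forcing included, since the full commutator vanishes and not just its non-conserved columns.

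The only step requiring care is the first: confirming that rows $5$ and $6$ receive no correction. This holds because $\Cx^2-\Cy^2$ and $\Cx\Cy$ are already $w$-orthogonal to $1$, $\Cx$, $\Cy$ and $\Cx^2+\Cy^2$ by parity and the symmetry of the weights under exchange of axes. This is consistent with Theorem~\ref{thm:geom}, since case (a) cannot arise in two dimensions. One must also check that the later Gram--Schmidt elements subtract nothing along rows $5$ and $6$. Rather than rely on the Gram--Schmidt order, I would verify this directly from the closed forms in the first step.
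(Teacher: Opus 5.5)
Your proposal is correct and follows the paper's own argument. The five nonzero off-diagonal entries of $\Am$ in \eqref{eq:A-d2q9} all join moments relaxed at unity, so the two deviatoric moments carrying $\omega$ are isolated vertices of the coupling graph. Hence $[\Lam,\Am]=0$ in full, conserved columns included, and Lemma~\ref{thm:gauge} gives operator equivalence with forcing. Your expansion of the Hermite-type elements, which confirms that $\Am$ is constant and checks its entries, is a useful explicit verification of what the paper only asserts.
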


Proposition~\ref{prop:projection} gives the same result from the other side:
$\|P_{\mathcal{D}}\,\Cx^2\Cy^2\|^2=0$ on D2Q9, the fourth-order moment being
exactly orthogonal to the deviatoric block because $\Cx^2\Cy^2$ is symmetric
under $x\leftrightarrow y$ while $\Cx^2-\Cy^2$ is antisymmetric, and even in each
component while $\Cx\Cy$ is odd.

\paragraph{Tuning the bulk viscosity}
Give the trace moment its own rate,
\begin{equation*}
  \Lam=\mathrm{diag}(1,1,1,\omega_b,\omega,\omega,1,1,1),
\end{equation*}
as is standard when the bulk viscosity is tuned independently~\cite{dellar2001}. Then \eqref{eq:A-d2q9} and
Lemma~\ref{lem:factor} give
\begin{equation}
  [\Lam,\Am]_{41}=-\tfrac{2}{3}\left(\omega_b-1\right),
  \qquad
  [\Lam,\Am]_{94}=\tfrac{1}{3}\left(\omega_b-1\right),
  \label{eq:comm-d2q9-bulk}
\end{equation}
and the schemes part company. The population-level difference
$f^{\star}_{\mathrm{orth}}-f^{\star}_{\mathrm{non}}$ is exactly proportional to
$(\omega_b-1)$ and is already nonzero at rest, $\bfu=0$ and $\mathcal{F}=0$,
where it reduces to a fixed linear combination of the populations; it is not an
$O(\mathrm{Ma})$ effect. Of the two entries, $(4,1)$ lies in the density column
and is inactive on physical states, whereas $(9,4)$ connects the trace to the
fourth-order moment and is not; its effect is made explicit in
Eq.~\eqref{eq:q9}. Deleting the conserved vertices
from \eqref{eq:components-d2q9} leaves $\{\Cx^2{+}\Cy^2,\Cx^2\Cy^2\}$ as the only
component with more than one vertex, so the operative condition on physical states
is $\omega_b=\lambda_4$, the rate of $\Cx^2\Cy^2$. The equivalence is broken by an
independent bulk viscosity only because the fourth-order moment is conventionally
held at unity: relaxing it at $\omega_b$ as well restores the equivalence while
leaving the bulk viscosity free, as Section~\ref{sec:d2stab} confirms numerically.

\paragraph{The single term}
In the orthogonal coordinates, in which the orthogonal scheme is diagonal, the
non-orthogonal scheme differs from it on physical states by a single term, in the
relaxation of the Hermite fourth-order moment $q_9=(\Cx^2-c_s^2)(\Cy^2-c_s^2)$:
\begin{equation}
  q_9^{\star}-q_9=\lambda_4\,\delta q_9+\frac{\lambda_4-\omega_b}{3}\,\delta q_4 ,
  \label{eq:q9}
\end{equation}
where $\delta q=\Tm_{\mathrm{orth}}(f^{\mathrm{eq}}-f)$ and $q_4$ is the trace. The
coefficient of the second term is $-[\Lam,\Am]_{94}$, and every other component of the
post-collision state is the same in the two schemes; we checked
Eq.~\eqref{eq:q9} on random physical states to round-off. The raw moment
$\Cx^2\Cy^2=q_9+\tfrac13(\Cx^2+\Cy^2)-\tfrac19$ contains one third of the trace, so
relaxing it at $\lambda_4$ relaxes that part of the trace at the wrong rate: with
$\lambda_4=1$ the non-orthogonal scheme feeds $(1-\omega_b)/3$ of the trace
non-equilibrium, up to a third of a barely damped mode as $\omega_b\to2$, into the
fourth order at every step. Relaxing at one common rate every non-conserved moment
that orthogonalisation mixes, here $\lambda_4=\omega_b$, removes the term and makes
the two formulations one scheme on physical states; we refer to this choice as the
recipe.

\subsection{Rectangular lattices}
\label{sec:rect-theory}

Lattice Boltzmann schemes on rectangular grids have been built with
multiple-relaxation-time collisions~\cite{bouzidi2001,zhou2012,zong2016,peng2016}, with
additional velocities and a BGK collision~\cite{hegele2013}, and with central
moments~\cite{yahia2021,yahia2021cuboid}; on the nine-velocity stencil, isotropic
transport coefficients take additional terms, which Refs.~\cite{zong2016}
and~\cite{peng2016} supply through an extra degree of freedom and through stress
components in the equilibrium moments, respectively. Stretch the lattice in $y$ by the grid aspect ratio $a=\Delta y/\Delta x$ and keep
the standard D2Q9 weights, so that $c_{sx}^2=1/3$ and $c_{sy}^2=a^2/3$. The
rest-frame orthogonalisation acquires three couplings absent at $a=1$:
\begin{align}
  \Am_{51}&=\frac{2a^{2}\left(1-a^{2}\right)}{3\left(a^{4}+1\right)},
  &
  \Am_{54}&=\frac{a^{4}-1}{a^{4}+1},
  &
  \Am_{95}&=\frac{1-a^{2}}{6},
  \label{eq:A-rect}
\end{align}
each vanishing identically at $a=1$. The mechanism is transparent: on an
anisotropic lattice $\avg{\Cx^2-\Cy^2,1}=c_{sx}^2-c_{sy}^2\neq0$, so the
normal-stress-difference moment cannot be orthogonalised without drawing in the
density and the trace. Rectangular formulations that retune the weights change
these coefficients but not the mechanism, which requires only $c_{sx}\neq c_{sy}$.

For any inner product built from product weights, of which the lattice weights,
the rest-equilibrium weights and the unweighted inner product are examples, the
offending coefficient is
\begin{equation}
  \Am_{54}=-\frac{\mathrm{Var}(\Cx^2)-\mathrm{Var}(\Cy^2)}{\mathrm{Var}(\Cx^2)+\mathrm{Var}(\Cy^2)},
  \label{eq:A54-var}
\end{equation}
with the variances taken under the weights; it reduces to~\eqref{eq:A-rect} for the
lattice weights and vanishes only if the weights are tuned to equalise the two
variances, which the anisotropy of the lattice otherwise prevents. The fourth-order
coupling is simpler still: for product weights
$\Am_{95}=\tfrac12[\mathrm{E}(\Cx^2)-\mathrm{E}(\Cy^2)]$, with the means taken under the
weights, which reduces to~\eqref{eq:A-rect} for the lattice weights and vanishes
whenever the weights have isotropic second moments, as those of the rest equilibrium
do.

The equivalence consequently fails on a rectangular lattice \emph{even for the
single-$\omega$ relaxation matrix of Corollary~\ref{cor:d2q9}}:
\begin{align}
  [\Lam,\Am]_{51}&=-\frac{2a^{2}\left(a^{2}-1\right)\left(\omega-1\right)}
                         {3\left(a^{4}+1\right)},\nonumber\\
  [\Lam,\Am]_{54}&=\frac{\left(a^{4}-1\right)\left(\omega-1\right)}{a^{4}+1},\\
  [\Lam,\Am]_{95}&=\frac{\left(a^{2}-1\right)\left(\omega-1\right)}{6},\nonumber
\end{align}
the second and third of which lie in non-conserved columns and act on physical
states. For the full rectangular relaxation matrix
\begin{equation*}
  \Lam=\mathrm{diag}(1,1,1,\omega_3,\omega_4,\omega_5,\omega_6,\omega_6,\omega_7),
\end{equation*}
eight entries survive, among them
\begin{equation}
  [\Lam,\Am]_{54}=-\frac{a^{4}-1}{a^{4}+1}\left(\omega_3-\omega_4\right),
  \label{eq:rect-key}
\end{equation}
a purely geometric factor multiplying the bulk--shear rate contrast. This term is the coupling described as `spurious' in the rectangular- and cuboid-lattice literature~\cite{yahia2021,yahia2021cuboid}, now
in closed form, and it is not perturbative: the prefactor in \eqref{eq:rect-key}
is $0.670$ at $a=1.5$, $0.882$ at $a=2$ and $0.976$ at $a=3$.

\paragraph{An equilibrium-independent statement}
By Lemma~\ref{thm:gauge} the two schemes differ as maps precisely when
$\Pm_{\mathrm{orth}}\neq\Pm_{\mathrm{non}}$, and the size of the difference at a
given state is governed by $(\Pm_{\mathrm{orth}}-\Pm_{\mathrm{non}})
(f^{\mathrm{eq}}-f)$. There is thus no need to commit to a particular rectangular
equilibrium: the rank of $\Delta\Pm=\Pm_{\mathrm{orth}}-\Pm_{\mathrm{non}}$
bounds the set of states on which the two can agree. Evaluated at
$\bfu=(0.13,-0.07)$,
\begin{center}
\begin{tabular}{lcc}
\toprule
 & $\mathrm{rank}\,\Delta\Pm$ & $\max\left|\Delta\Pm\right|$\\
\midrule
square, $\omega_b=1$ (gauge)      & $0$ & $0$      \\
square, $\omega_b=1.6$            & $2$ & $0.747$  \\
rectangular $a=2$, single $\omega$& $2$ & $0.707$  \\
rectangular $a=2$, full MRT       & $4$ & $0.627$  \\
\bottomrule
\end{tabular}
\end{center}
The two schemes agree only on a subspace of codimension two, or four in the full
MRT case, for every choice of equilibrium and forcing, and the discrepancy is
$O(1)$ rather than perturbative.

One rate is conspicuously absent from these expressions. The off-diagonal moment
$\Cx\Cy$ and its rate $\omega_5$ never appear in the commutator on either
two-dimensional lattice: in two dimensions orthogonalisation does not touch the
off-diagonal shear moment, and the whole effect is a normal-stress and trace
phenomenon. This ceases to hold in three dimensions.

\subsection{The D3Q27 lattice}

We use the non-orthogonal central-moment basis of Ref.~\cite{derosis2017},
ordered by polynomial degree, with the second-order block split into the
five-dimensional deviatoric part
$\{\Cx\Cy,\Cx\Cz,\Cy\Cz,\Cx^2-\Cy^2,\Cx^2-\Cz^2\}$ and the trace
$\Cx^2+\Cy^2+\Cz^2$, matching the decomposition of a rank-two tensor into
irreducible representations of $SO(3)$.

The coupling graph of the order-graded orthogonalisation,
Fig.~\ref{fig:graphs}(e), has components
\begin{equation}
\begin{aligned}
  &\{1,\ \Cx^2{-}\Cy^2,\ \Cx^2{-}\Cz^2,\ \Cx^2{+}\Cy^2{+}\Cz^2,\\
  &\qquad \Cx^2\Cy^2,\ \Cx^2\Cz^2,\ \Cy^2\Cz^2,\ \Cx^2\Cy^2\Cz^2\},\\[2pt]
  &\{\Cx,\ \Cx\Cy^2,\ \Cx\Cz^2,\ \Cx\Cy^2\Cz^2\},\\
  &\{\Cy,\ \Cy\Cx^2,\ \Cy\Cz^2,\ \Cx^2\Cy\Cz^2\},\\
  &\{\Cz,\ \Cz\Cx^2,\ \Cz\Cy^2,\ \Cx^2\Cy^2\Cz\},\\[2pt]
  &\{\Cx\Cy,\ \Cx\Cy\Cz^2\},\quad \{\Cx\Cz,\ \Cx\Cy^2\Cz\},\\
  &\{\Cy\Cz,\ \Cx^2\Cy\Cz\},\quad \{\Cx\Cy\Cz\}.
\end{aligned}
\label{eq:components-d3q27}
\end{equation}
In contrast to \eqref{eq:components-d2q9}, the deviatoric moments are no longer
isolated. Both $\Cx^2-\Cy^2$ and $\Cx^2-\Cz^2$ share a component with the trace,
three fourth-order moments and the sixth-order moment, and each off-diagonal shear
moment is chained to a fourth-order moment. These connections survive the
deletion of the conserved vertices, so on physical states they tie every even
non-conserved rate to the shear rate. The odd moments fall into three components,
one per Cartesian direction, each tying two third-order rates to a fifth-order
one, and $\Cx\Cy\Cz$ is isolated.

\begin{corollary}
\label{cor:d3q27}
On D3Q27, with the order-graded orthogonal basis, the orthogonal and
non-orthogonal central-moment schemes coincide on physical states without forcing
if and only if all even non-conserved moments relax at a single rate and, in each
Cartesian direction, the odd third- and fifth-order moments share a rate, the
moment $\Cx\Cy\Cz$ being free. For rate assignments that respect the irreducible
representations of $SO(3)$ this means one rate for the even non-conserved central
moments and one for the odd, a relaxation matrix of two-relaxation-time form in
central moments. Off the physical manifold, or with forcing, the conserved
vertices re-enter and every rate must equal the conserved one.
\end{corollary}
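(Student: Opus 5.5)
The plan is to read the result off the coupling graph \eqref{eq:components-d3q27} with Corollary~\ref{cor:graph}, in its physical-state form from Corollary~\ref{cor:physical}. The first step is to confirm that the listed components are exactly those of $\mathcal{G}(\Am)$ for the order-graded Gram--Schmidt in the rest-frame inner product. By reflection symmetry, $\avg{m,m'}$ vanishes unless the two monomials have the same exponent parity in each axis. This splits the $27$ monomials into parity classes: even--even--even; one odd axis, one class per direction; two odd axes, one per pair; and all odd, i.e.\ $\Cx\Cy\Cz$. Gram--Schmidt never leaves a parity class, so the components refine these eight classes. This settles the odd and mixed blocks and isolates $\Cx\Cy\Cz$.

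Within each class I would show connectivity by exhibiting nonzero entries of $\Am$ from the weight sums $W(S)$ of Theorem~\ref{thm:geom}. For the two-odd-axes class, $\avg{\Cx\Cy\Cz^2,\Cx\Cy}=W(\{x,y,z\})>0$, as in case (b) of Theorem~\ref{thm:geom}. For the one-odd-axis class, $\Cx\Cy^2$ and $\Cx\Cz^2$ both project onto $\Cx$, and $\Cx\Cy^2\Cz^2$ projects onto $\Cx$ and onto each of them, so the class is connected. For the even class, the density overlaps the trace, every $C_\alpha^2C_\beta^2$ and $\Cx^2\Cy^2\Cz^2$. The normal-stress differences overlap the fourth-order moments through case (a), since D3Q27 has face diagonals. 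The trace, being orthogonalised against $1$, picks up density entries. So the even class is one component.

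The main care point is that for connectivity I need nonzero entries of $\Am$ itself, not just nonzero inner products, because later Gram--Schmidt steps could cancel. I would check this with explicit D3Q27 weights ($8/27,2/27,1/54,1/216$), possibly by citing the computed matrix behind Fig.~\ref{fig:graphs}(e).

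Next I delete the conserved vertices. The even component loses $1$, but the remaining vertices stay connected: each normal-stress difference is joined to the three $C_\alpha^2C_\beta^2$, which in turn join $\Cx^2\Cy^2\Cz^2$ and the trace. This is the delicate point, since the trace may have been linked only through the density. If so, I would argue via the orthogonalised fourth-order elements, which contain trace components because $\avg{C_\alpha^2C_\beta^2,\sum C^2}\neq 0$ after removing the density part. Each one-odd-axis component loses $C_\alpha$, leaving its two third-order moments and the fifth-order moment. These remain connected through the entries of the fifth-order row. The $\{C_\alpha C_\beta,\,C_\alpha C_\beta C_\gamma^2\}$ pairs are unaffected. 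Corollary~\ref{cor:graph} then gives the stated conditions. All even non-conserved rates must be equal, and since the even component contains the deviatoric moments, that rate is $\omega$; together with the pair components this ties the off-diagonal moments too. In each direction, the odd third- and fifth-order rates must be equal, and $\Cx\Cy\Cz$ is free.

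For the $SO(3)$ sentence, irreducible-representation assignments give equal rates to the three directions, and by cubic symmetry $\Cx\Cy\Cz$ naturally takes the odd rate. The result is two rates, one even and one odd. For the final sentence, with forcing or off the physical manifold, Lemma~\ref{thm:gauge} requires the full graph. There $1$ lies in the even component and each $C_\alpha$ in its odd component, so all rates in those components equal the conserved rate $1$. Since $\Cx\Cy\Cz$ is isolated, it stays free even then, and I would flag this as a qualification the statement should carry.
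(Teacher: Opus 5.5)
Your proposal is the paper's argument: take the components \eqref{eq:components-d3q27} of $\mathcal{G}(\Am)$, delete the conserved vertices, and apply Corollary~\ref{cor:graph}, with your parity-class decomposition and explicit D3Q27 entries justifying what the paper reads off the computed matrix of Fig.~\ref{fig:graphs}(e); your caveat that $\Cx\Cy\Cz$ stays free even off the physical manifold is also correct. Two precisions: it is the $\ell=3$ representation of $SO(3)$, which contains both $\Cx\Cy\Cz$ and $\Cx(\Cy^2-\Cz^2)$, and not cubic symmetry, that puts $\Cx\Cy\Cz$ on the odd rate; and a forcing with vanishing mass moment re-activates only the momentum columns (Corollary~\ref{cor:physical}), so tying the even block to the conserved rate requires states off the physical manifold or an arbitrary $\mathcal{F}$, not merely the presence of forcing.
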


\begin{remark}[Relation to the two-relaxation-time scheme]
\label{rem:trt}
The two-relaxation-time scheme~\cite{ginzburg2008} relaxes the parts of the
populations that are symmetric and antisymmetric under $\bfe_i\to-\bfe_i$ at two
rates, which is the same as relaxing the even and the odd \emph{raw} moments at
those rates. The condition of Corollary~\ref{cor:d3q27} concerns \emph{central}
moments, and since the shift $\Tm_{\mathrm{CM}}=\mathsf{N}(\bfu)\,\mathsf{M}$
mixes raw moments of opposite parity, the two schemes coincide only at $\bfu=0$
or when the two rates are equal. On D2Q9 with rates $1.7$ and $1.2$, for
instance, the classical scheme and its central-moment counterpart agree to
$10^{-16}$ at rest but differ by $8\times10^{-4}$ at $\bfu=(0.1,-0.05)$. The
central form is therefore an analogue of the two-relaxation-time scheme rather
than an instance of it.
\end{remark}

Every choice in which the fourth-order moments do not share the shear rate
therefore gives distinct schemes, including the common one that relaxes them at
unity, and by Proposition~\ref{prop:projection} this conclusion does not depend on
the orthogonalisation. Table~\ref{tab:d3q27} lists the leading discrepancies;
they involve $\omega$, the shear rate, which is never unity in any useful
simulation.

\begin{table}[t]
\centering
\caption{Leading entries of $[\Lam,\Am]$ on D3Q27 for
$\Lam$ with shear rate $\omega$, bulk rate $\omega_b$ and all higher moments at
unity (left), and with an independent fourth-order rate $\omega_4$ (right). The
last entry lies in the density column and is inactive on physical states without
forcing; the others are not.}
\label{tab:d3q27}
\renewcommand{\arraystretch}{1.15}
\begin{tabular}{llcc}
\toprule
row & column & higher moments at $1$ & independent $\omega_4$ \\
\midrule
$\Cx^2\Cy^2$          & $\Cx^2-\Cy^2$          & $-\left(\omega-1\right)/9$  & $-\left(\omega-\omega_4\right)/9$ \\
$\Cx^2\Cy^2$          & $\Cx^2-\Cz^2$          & $2\left(\omega-1\right)/9$  & $2\left(\omega-\omega_4\right)/9$ \\
$\Cx^2\Cz^2$          & $\Cx^2-\Cy^2$          & $2\left(\omega-1\right)/9$  & $2\left(\omega-\omega_4\right)/9$ \\
$\Cx^2\Cy\Cz$         & $\Cy\Cz$               & $\left(\omega-1\right)/3$   & $\left(\omega-\omega_4\right)/3$ \\
$\Cx^2{+}\Cy^2{+}\Cz^2$ & $1$                  & $1-\omega_b$                & $1-\omega_b$ \\
\bottomrule
\end{tabular}
\end{table}

\paragraph{Why three dimensions differ}
Table~\ref{tab:projections} gives the projections behind Theorem~\ref{thm:geom}. On
D3Q27 both routes are open, $\Cx^2\Cy^2$ through the face diagonals and
$\Cx^2\Cy\Cz$ through the body diagonals; on D3Q19, the most widely used
three-dimensional velocity set, only the first, with
$\|P_{\mathcal{D}}\,\Cx^2\Cy^2\|^2=1/27$, more than twice the D3Q27 value
(Section~\ref{sec:d19stab}); on D3Q15 neither.

\begin{table}[t]
\centering
\caption{Squared norm of the projection of higher-order monomials onto the
deviatoric second-order subspace $\mathcal{D}$, after removing the density and
trace directions, with the standard weights. The vanishing D2Q9 and D3Q15 entries
are the origin of the free shear rate, the nonvanishing D3Q19 and D3Q27 entries of the
obstruction (Theorem~\ref{thm:geom}). A dash marks a monomial that is not an
independent basis element: in two dimensions there is no $\Cz$; on D3Q19 every
velocity has a zero component, so any monomial containing $\Cx\Cy\Cz$ vanishes
identically; and on D3Q15 $\Cx^2\Cy\Cz$ coincides with the deviatoric monomial
$\Cy\Cz$.}
\label{tab:projections}
\begin{tabular}{lccc}
\toprule
 & $\Cx^2\Cy^2$ & $\Cx^2\Cy\Cz$ & $\Cx^2\Cy^2\Cz^2$ \\
\midrule
D2Q9   & $0$     & --- & --- \\
D3Q19  & $1/27$  & --- & --- \\
D3Q27  & $4/243$ & $1/81$ & $0$ \\
D3Q15  & $0$     & --- & $0$ \\
\bottomrule
\end{tabular}
\end{table}

The two-dimensional equivalence is therefore a consequence of the velocity set,
not a general property of central moments: in two dimensions, and on D3Q15,
orthogonalisation never reaches the shear moments, whereas on D3Q19 and D3Q27 it ties
them to the fourth order.

\section{Which formulation to prefer}
\label{sec:stability}

Where the criterion says that the formulations differ, we compare them by linear
stability analysis about uniform flow, by nonlinear simulation and, in three
dimensions, against a direct numerical simulation. Table~\ref{tab:summary}
summarises the outcome; the rest of this section explains it.

Linearising the collision--streaming update, as in
Refs.~\cite{lallemand2000,wissocq2019}, about a uniform base state
$f^{(0)}=f^{\mathrm{eq}}(1,\bfu_0)$ gives the amplification matrix
\begin{equation}
  \mathsf{G}(\bm{k})=\mathsf{E}(\bm{k})\,\mathsf{J},
  \quad
  \mathsf{E}(\bm{k})=\mathrm{diag}\!\left(e^{-\mathrm{i}\,\bm{k}\cdot\bfe_i}\right),
  \quad
  \mathsf{J}_{ij}=\left.\frac{\partial f^{\star}_i}{\partial f_j}\right|_{f^{(0)}}\!\!,
\end{equation}
in which $\mathsf J$ retains the dependence of the equilibrium and of $\Tm(\bfu)$ on
the flow velocity and is evaluated by complex-step differentiation. The scheme is
stable at $\bfu_0$ when $\max_{\bm k}\rho(\mathsf G)\le1+10^{-8}$ over a uniform
grid of wavevectors, refined locally on D3Q19 (Section~\ref{sec:d19stab}), and $u_{\max}$
is the end of the stable segment of base
velocities that starts from rest. The nonlinear solvers compute the orthogonal
formulation as the non-orthogonal one with $\Lam$ replaced by $\Am^{-1}\Lam\Am$
(Corollary~\ref{cor:impl}), so that no difference between the two can be an
artefact of implementation. The convergence studies are collected in
\ref{sec:convergence}, and the verification, including validation against analytic decay
rates, in \ref{sec:verification}.

\begin{table}[tbp]
\centering
\caption{Where the two formulations differ: summary of the comparisons of
Section~\ref{sec:stability}. Margin: $100\,(u_{\max}^{\mathrm{orth}}/u_{\max}^{\mathrm{non}}-1)$
in the least stable of the flow directions examined unless stated. The recipe relaxes
every moment that orthogonalisation mixes at one rate, which makes the two
formulations one scheme.}
\label{tab:summary}
\small
\begin{tabular}{@{}p{0.2\textwidth}p{0.29\textwidth}p{0.46\textwidth}@{}}
\toprule
lattice and matrix & test & result\\
\midrule
D2Q9, $\omega_b\neq1$, $\lambda_4=1$ & linear, $\omega=\omega_b=1.9$ & $+78\%$; limits within $1\%$ for $\omega_b\lesssim1.6$, $\omega\lesssim1.84$; non-orthogonal rest state unstable above $\omega_b=1.75$--$1.98$; reversal $\le3.6\%$ at $\omega=1.98$\\
 & advected shear layer & $+54\%$ on $48^2$, rising to $+70\%$ with resolution and run length\\
 & double shear layer & $+42\%$ and $+41\%$ at $\omega_b=1.9$, $\mathrm{Re}=10^4$ and $10^6$; recipe $10\%$ short of orthogonal at $10^6$\\
rectangular, $a=0.5$ & published matrix, $\omega_b=1$ & orthogonal counterpart inconsistent: shear viscosity at $45^\circ$ $5.7$ times too large\\
 & bulk rate $=$ shear rate & up to $+61\%$ along the diagonal; up to $+160\%$ at $a=0.8$\\
 & published matrix, entry~\eqref{eq:rect-key} restored & consistent; rest state lost above $\omega=1.22$ with the lattice weights; the published scheme itself in the rest-equilibrium inner product\\
 & lid-driven cavity, $100\times200$ & largest stable Re: published 6590 ($6733$ in Ref.~\cite{yahia2021}), orthogonal counterpart 1547, entry~\eqref{eq:rect-key} restored 662; bulk rate $=$ shear rate: 5276 against 2323\\
D3Q27, standard & linear, $\omega=1.9$, $1.925$, $1.95$ & $+25\%$, $+45\%$, $+73\%$; axis reversal $\le1.8\%$\\
 & advected Taylor--Green, $\omega=1.95$ & $+21\%$ on $24^3$, $+41\%$ on $48^3$, $+53\%$ at three times the run length\\
 & Taylor--Green, $32^3$, $\mathrm{Re}\le25600$ & both survive; recipe diverges for $\mathrm{Re}\ge1600$\\
 & Taylor--Green, $\mathrm{Re}=1600$, vs DNS & largest dissipation error $17\%$ against $41\%$ on $64^3$\\
D3Q19, standard & linear, $\omega=1.9$, $1.95$ & least stable (body diagonal) equal within $0.5\%$; axis $+7\%$, $+10\%$; face $+6\%$, $+5\%$\\
 & advected Taylor--Green, $\omega=1.9$, $1.95$ & $+3\%$, $+10\%$ on $24^3$; $+1\%$, $+3\%$ on $48^3$; $-1\%$, $-1\%$ at three times the run length\\
 & Taylor--Green, $32^3$, $\mathrm{Re}\le25600$ & both survive; recipe diverges for $\mathrm{Re}\ge400$\\
 & Taylor--Green, $\mathrm{Re}=1600$, vs DNS & largest dissipation error $14\%$ against $41\%$ on $64^3$\\
all & inner product, \ref{sec:ip} & advantage kept near the rest-equilibrium inner product, lost far from it\\
all & cost, Corollary~\ref{cor:impl} & $+0\%$ (D2Q9, one shear rate), $+8$--$10\%$ (D2Q9, $\omega_b\neq1$), $\le2\%$ (D3Q27)\\
\bottomrule
\end{tabular}
\end{table}

\subsection{D2Q9 with a tuned bulk viscosity}
\label{sec:d2stab}\label{sec:nonlinear}\label{sec:lambda4}

With $\Lam=\mathrm{diag}(1,1,1,\omega_b,\omega,\omega,1,1,1)$ the formulations differ
through the single term~\eqref{eq:q9}. Figure~\ref{fig:stability} shows $u_{\max}$
at $\omega=1.9$ against $\omega_b$ and maps the margin over the plane of the two
rates. On the gauge line $\omega_b=1$ the limits agree to all digits, as
Corollary~\ref{cor:d2q9} requires, and over most of the plane,
$\omega_b\lesssim1.6$ with $\omega\lesssim1.84$, they agree to within $1\%$ although
the schemes differ. Where they separate the orthogonal formulation leads, by $76\%$
along the axis and $139\%$ along the diagonal at $\omega=\omega_b=1.9$, and by $78\%$
in the least stable direction, which is the axis for the orthogonal formulation and
the diagonal for the non-orthogonal one. Its limit is
almost independent of $\omega_b$, whereas the non-orthogonal one collapses above
$\omega_b\simeq1.7$ and, beyond a threshold rising from $\omega_b=1.75$ at
$\omega=1$ to $1.92$ at $\omega=1.9$, loses the rest state altogether (white line);
by Proposition~\ref{prop:rest} the orthogonal rest state is stable for every
$\omega_b\le2$. A low bulk viscosity is therefore out of reach of the non-orthogonal
scheme with $\lambda_4=1$ at moderate shear rates. The only reversal is at the edge
of the plane, at $\omega=1.98$, where the non-orthogonal formulation is ahead by at
most $3.6\%$.

\begin{figure*}[tbp]
\centering
\begin{tikzpicture}
\begin{axis}[width=0.42\textwidth,height=5.2cm,title={$\theta=0$},
  xlabel={$\omega_b$}, ylabel={$u_{\max}$}, grid=major,
  xmin=0.3,xmax=2.0, ymin=0.2, ymax=0.68]
\addplot[blue,thick] coordinates {
 (0.4,0.4149)(0.8,0.4153)(1.0,0.4154)(1.2,0.4156)(1.4,0.4157)(1.6,0.4159)(1.7,0.4159)(1.8,0.4160)(1.9,0.4162)};
\addplot[red,thick,dashed] coordinates {
 (0.4,0.4055)(0.8,0.4146)(1.0,0.4154)(1.2,0.4157)(1.4,0.4160)(1.6,0.4160)(1.7,0.3930)(1.8,0.2986)(1.9,0.2367)};
\addplot[blue,only marks,mark=*,mark size=1.6pt] coordinates {
 (0.4,0.4688)(1.0,0.4424)(1.4,0.4336)(1.7,0.4277)(1.8,0.4248)(1.9,0.4248)};
\addplot[red,only marks,mark=square,mark size=1.6pt] coordinates {
 (0.4,0.4512)(1.0,0.4424)(1.4,0.4336)(1.7,0.4248)(1.8,0.3486)(1.9,0.2754)};
\addplot[black,densely dotted,thick,mark=triangle,mark size=1.5pt] coordinates {
 (0.4,0.4134)(1.0,0.4154)(1.4,0.4159)(1.7,0.4170)(1.8,0.4176)(1.9,0.4135)};
\end{axis}
\end{tikzpicture}\hfill
\begin{tikzpicture}
\begin{axis}[width=0.42\textwidth,height=5.2cm,title={$\theta=\pi/4$},
  xlabel={$\omega_b$}, grid=major,
  xmin=0.3,xmax=2.0, ymin=0.2, ymax=0.68]
\addplot[blue,thick] coordinates {
 (0.4,0.5748)(0.8,0.5748)(1.0,0.5748)(1.2,0.5745)(1.4,0.5722)(1.6,0.5704)(1.7,0.5696)(1.8,0.5690)(1.9,0.5591)};
\addplot[red,thick,dashed] coordinates {
 (0.4,0.5388)(0.8,0.5748)(1.0,0.5748)(1.2,0.5748)(1.4,0.5748)(1.6,0.5594)(1.7,0.5150)(1.8,0.4557)(1.9,0.2341)};
\addplot[blue,only marks,mark=*,mark size=1.6pt] coordinates {
 (0.4,0.6270)(1.0,0.6123)(1.4,0.6064)(1.7,0.6006)(1.8,0.6006)(1.9,0.5947)};
\addplot[red,only marks,mark=square,mark size=1.6pt] coordinates {
 (0.4,0.6152)(1.0,0.6123)(1.4,0.6064)(1.7,0.5713)(1.8,0.5215)(1.9,0.4658)};
\addplot[black,densely dotted,thick,mark=triangle,mark size=1.5pt] coordinates {
 (0.4,0.5748)(1.0,0.5748)(1.4,0.5693)(1.7,0.5650)(1.8,0.5638)(1.9,0.5529)};
\end{axis}
\end{tikzpicture}

\vspace{0.6em}
\begin{tikzpicture}
\begin{axis}[width=0.42\textwidth,height=5.2cm,title={$\theta=0$, margin (\%)},
  xlabel={$\omega_b$}, ylabel={$\omega$}, view={0}{90}, xmin=0.3,xmax=1.98, ymin=1,ymax=1.98,
  colormap={gain}{rgb255=(178,24,43) rgb255=(247,247,247) rgb255=(146,197,222) rgb255=(33,102,172) rgb255=(5,48,97)},
  point meta min=-50, point meta max=150, colorbar, colorbar style={width=0.18cm,
  tick label style={font=\scriptsize}, ytick={-50,0,50,100,150}, yticklabels={$-50$,$0$,$50$,$100$,$\geq150$}},
  tick label style={font=\scriptsize}, axis on top]
\addplot3[surf,shader=interp,mesh/cols=15] coordinates {(0.3,1.0,0.00)(0.42,1.0,0.00)(0.54,1.0,0.00)(0.66,1.0,0.00)(0.78,1.0,0.00)(0.9,1.0,0.00)(1.02,1.0,0.00)(1.14,1.0,0.00)(1.26,1.0,0.00)(1.38,1.0,0.00)(1.5,1.0,0.00)(1.62,1.0,0.00)(1.74,1.0,31.52)(1.86,1.0,150.00)(1.98,1.0,150.00) (0.3,1.07,0.00)(0.42,1.07,0.00)(0.54,1.07,0.00)(0.66,1.07,0.00)(0.78,1.07,0.00)(0.9,1.07,0.00)(1.02,1.07,0.00)(1.14,1.07,0.00)(1.26,1.07,0.00)(1.38,1.07,0.00)(1.5,1.07,0.00)(1.62,1.07,0.00)(1.74,1.07,29.88)(1.86,1.07,150.00)(1.98,1.07,150.00) (0.3,1.14,0.00)(0.42,1.14,0.00)(0.54,1.14,0.00)(0.66,1.14,0.00)(0.78,1.14,0.00)(0.9,1.14,0.00)(1.02,1.14,0.00)(1.14,1.14,0.00)(1.26,1.14,0.00)(1.38,1.14,0.00)(1.5,1.14,0.00)(1.62,1.14,0.00)(1.74,1.14,28.40)(1.86,1.14,150.00)(1.98,1.14,150.00) (0.3,1.21,0.00)(0.42,1.21,0.00)(0.54,1.21,0.00)(0.66,1.21,0.00)(0.78,1.21,0.00)(0.9,1.21,0.00)(1.02,1.21,0.00)(1.14,1.21,0.00)(1.26,1.21,0.00)(1.38,1.21,0.00)(1.5,1.21,0.00)(1.62,1.21,0.00)(1.74,1.21,27.07)(1.86,1.21,150.00)(1.98,1.21,150.00) (0.3,1.28,0.00)(0.42,1.28,0.00)(0.54,1.28,0.00)(0.66,1.28,0.00)(0.78,1.28,0.00)(0.9,1.28,0.00)(1.02,1.28,0.00)(1.14,1.28,0.00)(1.26,1.28,0.00)(1.38,1.28,0.00)(1.5,1.28,0.00)(1.62,1.28,0.00)(1.74,1.28,26.00)(1.86,1.28,150.00)(1.98,1.28,150.00) (0.3,1.35,0.00)(0.42,1.35,0.00)(0.54,1.35,0.00)(0.66,1.35,0.00)(0.78,1.35,0.00)(0.9,1.35,0.00)(1.02,1.35,0.00)(1.14,1.35,0.00)(1.26,1.35,0.00)(1.38,1.35,0.00)(1.5,1.35,0.00)(1.62,1.35,0.00)(1.74,1.35,24.95)(1.86,1.35,150.00)(1.98,1.35,150.00) (0.3,1.42,0.00)(0.42,1.42,0.00)(0.54,1.42,0.00)(0.66,1.42,0.00)(0.78,1.42,0.00)(0.9,1.42,0.00)(1.02,1.42,0.00)(1.14,1.42,0.00)(1.26,1.42,0.00)(1.38,1.42,0.00)(1.5,1.42,0.00)(1.62,1.42,0.00)(1.74,1.42,24.15)(1.86,1.42,150.00)(1.98,1.42,150.00) (0.3,1.49,0.00)(0.42,1.49,0.00)(0.54,1.49,0.00)(0.66,1.49,0.00)(0.78,1.49,0.00)(0.9,1.49,0.00)(1.02,1.49,0.00)(1.14,1.49,0.00)(1.26,1.49,0.00)(1.38,1.49,0.00)(1.5,1.49,0.00)(1.62,1.49,0.00)(1.74,1.49,23.36)(1.86,1.49,150.00)(1.98,1.49,150.00) (0.3,1.56,0.00)(0.42,1.56,0.00)(0.54,1.56,0.00)(0.66,1.56,0.00)(0.78,1.56,0.00)(0.9,1.56,0.00)(1.02,1.56,0.00)(1.14,1.56,0.00)(1.26,1.56,0.00)(1.38,1.56,0.00)(1.5,1.56,0.00)(1.62,1.56,0.00)(1.74,1.56,22.80)(1.86,1.56,150.00)(1.98,1.56,150.00) (0.3,1.63,0.00)(0.42,1.63,0.00)(0.54,1.63,0.00)(0.66,1.63,0.00)(0.78,1.63,0.00)(0.9,1.63,0.00)(1.02,1.63,0.00)(1.14,1.63,0.00)(1.26,1.63,0.00)(1.38,1.63,0.00)(1.5,1.63,0.00)(1.62,1.63,0.00)(1.74,1.63,22.24)(1.86,1.63,150.00)(1.98,1.63,150.00) (0.3,1.7,0.00)(0.42,1.7,0.00)(0.54,1.7,0.00)(0.66,1.7,0.00)(0.78,1.7,0.00)(0.9,1.7,0.00)(1.02,1.7,0.00)(1.14,1.7,0.00)(1.26,1.7,0.00)(1.38,1.7,0.00)(1.5,1.7,0.00)(1.62,1.7,0.00)(1.74,1.7,21.91)(1.86,1.7,150.00)(1.98,1.7,150.00) (0.3,1.77,0.30)(0.42,1.77,0.22)(0.54,1.77,0.15)(0.66,1.77,0.15)(0.78,1.77,0.00)(0.9,1.77,0.00)(1.02,1.77,0.00)(1.14,1.77,0.00)(1.26,1.77,0.00)(1.38,1.77,0.00)(1.5,1.77,-0.07)(1.62,1.77,0.00)(1.74,1.77,21.40)(1.86,1.77,64.03)(1.98,1.77,150.00) (0.3,1.84,0.37)(0.42,1.84,0.30)(0.54,1.84,0.30)(0.66,1.84,0.15)(0.78,1.84,0.07)(0.9,1.84,0.00)(1.02,1.84,0.00)(1.14,1.84,0.00)(1.26,1.84,-0.07)(1.38,1.84,-0.07)(1.5,1.84,0.00)(1.62,1.84,-0.07)(1.74,1.84,20.36)(1.86,1.84,62.27)(1.98,1.84,150.00) (0.3,1.91,3.36)(0.42,1.91,5.67)(0.54,1.91,5.92)(0.66,1.91,4.83)(0.78,1.91,2.47)(0.9,1.91,0.00)(1.02,1.91,0.00)(1.14,1.91,0.00)(1.26,1.91,0.00)(1.38,1.91,-0.08)(1.5,1.91,-0.08)(1.62,1.91,-0.08)(1.74,1.91,18.98)(1.86,1.91,60.12)(1.98,1.91,150.00) (0.3,1.98,9.36)(0.42,1.98,14.00)(0.54,1.98,13.25)(0.66,1.98,9.80)(0.78,1.98,6.17)(0.9,1.98,2.67)(1.02,1.98,-0.47)(1.14,1.98,-2.69)(1.26,1.98,-3.57)(1.38,1.98,-3.50)(1.5,1.98,-2.62)(1.62,1.98,-1.01)(1.74,1.98,6.80)(1.86,1.98,45.98)(1.98,1.98,102.29)};
\addplot3[black,thick,dashed] coordinates {(1,1,200)(1,1.98,200)};
\addplot3[black,thick,densely dotted] coordinates {(0.3,1.9,200)(1.98,1.9,200)};
\addplot3[white,very thick] coordinates {(1.7532,1.0,200)(1.7577,1.07,200)(1.7627,1.14,200)(1.7684,1.21,200)(1.7747,1.28,200)(1.7821,1.35,200)(1.7906,1.42,200)(1.8005,1.49,200)(1.8122,1.56,200)(1.8262,1.63,200)(1.8433,1.7,200)(1.8649,1.77,200)(1.8924,1.84,200)(1.9294,1.91,200)(1.9811,1.98,200)};
\end{axis}
\end{tikzpicture}\hfill
\begin{tikzpicture}
\begin{axis}[width=0.42\textwidth,height=5.2cm,title={$\theta=\pi/4$, margin (\%)},
  xlabel={$\omega_b$}, view={0}{90}, xmin=0.3,xmax=1.98, ymin=1,ymax=1.98,
  colormap={gain}{rgb255=(178,24,43) rgb255=(247,247,247) rgb255=(146,197,222) rgb255=(33,102,172) rgb255=(5,48,97)},
  point meta min=-50, point meta max=150, colorbar, colorbar style={width=0.18cm,
  tick label style={font=\scriptsize}, ytick={-50,0,50,100,150}, yticklabels={$-50$,$0$,$50$,$100$,$\geq150$}},
  tick label style={font=\scriptsize}, axis on top]
\addplot3[surf,shader=interp,mesh/cols=15] coordinates {(0.3,1.0,0.00)(0.42,1.0,0.00)(0.54,1.0,0.00)(0.66,1.0,0.00)(0.78,1.0,0.00)(0.9,1.0,0.00)(1.02,1.0,0.00)(1.14,1.0,0.00)(1.26,1.0,0.00)(1.38,1.0,0.00)(1.5,1.0,0.00)(1.62,1.0,0.00)(1.74,1.0,0.79)(1.86,1.0,150.00)(1.98,1.0,150.00) (0.3,1.07,0.00)(0.42,1.07,0.00)(0.54,1.07,0.00)(0.66,1.07,0.00)(0.78,1.07,0.00)(0.9,1.07,0.00)(1.02,1.07,0.00)(1.14,1.07,0.00)(1.26,1.07,0.00)(1.38,1.07,0.00)(1.5,1.07,0.00)(1.62,1.07,0.00)(1.74,1.07,1.27)(1.86,1.07,150.00)(1.98,1.07,150.00) (0.3,1.14,0.00)(0.42,1.14,0.00)(0.54,1.14,0.00)(0.66,1.14,0.00)(0.78,1.14,0.00)(0.9,1.14,0.00)(1.02,1.14,0.00)(1.14,1.14,0.00)(1.26,1.14,0.00)(1.38,1.14,0.00)(1.5,1.14,0.00)(1.62,1.14,0.00)(1.74,1.14,1.92)(1.86,1.14,150.00)(1.98,1.14,150.00) (0.3,1.21,0.00)(0.42,1.21,0.00)(0.54,1.21,0.00)(0.66,1.21,0.00)(0.78,1.21,0.00)(0.9,1.21,0.00)(1.02,1.21,0.00)(1.14,1.21,0.00)(1.26,1.21,0.00)(1.38,1.21,0.00)(1.5,1.21,0.00)(1.62,1.21,0.00)(1.74,1.21,2.58)(1.86,1.21,150.00)(1.98,1.21,150.00) (0.3,1.28,0.00)(0.42,1.28,0.00)(0.54,1.28,0.00)(0.66,1.28,0.00)(0.78,1.28,0.00)(0.9,1.28,0.00)(1.02,1.28,0.00)(1.14,1.28,0.00)(1.26,1.28,0.00)(1.38,1.28,0.00)(1.5,1.28,0.00)(1.62,1.28,0.00)(1.74,1.28,3.13)(1.86,1.28,150.00)(1.98,1.28,150.00) (0.3,1.35,0.00)(0.42,1.35,0.00)(0.54,1.35,0.00)(0.66,1.35,0.00)(0.78,1.35,0.00)(0.9,1.35,0.00)(1.02,1.35,0.00)(1.14,1.35,0.00)(1.26,1.35,0.00)(1.38,1.35,0.00)(1.5,1.35,0.00)(1.62,1.35,0.00)(1.74,1.35,3.69)(1.86,1.35,150.00)(1.98,1.35,150.00) (0.3,1.42,0.16)(0.42,1.42,0.05)(0.54,1.42,0.00)(0.66,1.42,0.00)(0.78,1.42,0.00)(0.9,1.42,0.00)(1.02,1.42,0.00)(1.14,1.42,0.00)(1.26,1.42,0.00)(1.38,1.42,0.00)(1.5,1.42,0.00)(1.62,1.42,0.00)(1.74,1.42,4.25)(1.86,1.42,150.00)(1.98,1.42,150.00) (0.3,1.49,0.05)(0.42,1.49,0.05)(0.54,1.49,0.05)(0.66,1.49,0.10)(0.78,1.49,0.00)(0.9,1.49,0.00)(1.02,1.49,0.00)(1.14,1.49,0.00)(1.26,1.49,0.00)(1.38,1.49,0.00)(1.5,1.49,0.00)(1.62,1.49,0.42)(1.74,1.49,4.77)(1.86,1.49,150.00)(1.98,1.49,150.00) (0.3,1.56,-0.16)(0.42,1.56,0.00)(0.54,1.56,0.00)(0.66,1.56,0.05)(0.78,1.56,0.00)(0.9,1.56,0.00)(1.02,1.56,0.00)(1.14,1.56,0.00)(1.26,1.56,0.00)(1.38,1.56,0.00)(1.5,1.56,0.00)(1.62,1.56,1.38)(1.74,1.56,5.40)(1.86,1.56,150.00)(1.98,1.56,150.00) (0.3,1.63,1.23)(0.42,1.63,-0.16)(0.54,1.63,-0.05)(0.66,1.63,-0.05)(0.78,1.63,0.00)(0.9,1.63,0.00)(1.02,1.63,0.00)(1.14,1.63,0.00)(1.26,1.63,0.00)(1.38,1.63,0.00)(1.5,1.63,0.00)(1.62,1.63,2.30)(1.74,1.63,6.28)(1.86,1.63,150.00)(1.98,1.63,150.00) (0.3,1.7,5.02)(0.42,1.7,-0.16)(0.54,1.7,-0.16)(0.66,1.7,-0.05)(0.78,1.7,-0.05)(0.9,1.7,0.00)(1.02,1.7,0.00)(1.14,1.7,0.00)(1.26,1.7,-0.05)(1.38,1.7,0.00)(1.5,1.7,0.05)(1.62,1.7,3.35)(1.74,1.7,7.17)(1.86,1.7,150.00)(1.98,1.7,150.00) (0.3,1.77,10.32)(0.42,1.77,-0.11)(0.54,1.77,-0.21)(0.66,1.77,-0.16)(0.78,1.77,-0.11)(0.9,1.77,0.00)(1.02,1.77,0.00)(1.14,1.77,0.00)(1.26,1.77,0.00)(1.38,1.77,-0.05)(1.5,1.77,1.27)(1.62,1.77,4.43)(1.74,1.77,10.36)(1.86,1.77,28.43)(1.98,1.77,150.00) (0.3,1.84,16.54)(0.42,1.84,-0.21)(0.54,1.84,-0.21)(0.66,1.84,-0.21)(0.78,1.84,-0.11)(0.9,1.84,-0.05)(1.02,1.84,0.00)(1.14,1.84,0.00)(1.26,1.84,0.00)(1.38,1.84,-0.05)(1.5,1.84,1.98)(1.62,1.84,4.98)(1.74,1.84,15.32)(1.86,1.84,31.06)(1.98,1.84,150.00) (0.3,1.91,23.19)(0.42,1.91,3.64)(0.54,1.91,0.00)(0.66,1.91,0.00)(0.78,1.91,0.00)(0.9,1.91,0.00)(1.02,1.91,0.00)(1.14,1.91,0.00)(1.26,1.91,-0.27)(1.38,1.91,-0.49)(1.5,1.91,-0.71)(1.62,1.91,1.75)(1.74,1.91,15.28)(1.86,1.91,33.11)(1.98,1.91,150.00) (0.3,1.98,14.68)(0.42,1.98,1.96)(0.54,1.98,1.36)(0.66,1.98,0.97)(0.78,1.98,0.77)(0.9,1.98,0.32)(1.02,1.98,-0.06)(1.14,1.98,-0.39)(1.26,1.98,-0.64)(1.38,1.98,-0.83)(1.5,1.98,-1.03)(1.62,1.98,-1.09)(1.74,1.98,2.94)(1.86,1.98,76.61)(1.98,1.98,150.00)};
\addplot3[black,thick,dashed] coordinates {(1,1,200)(1,1.98,200)};
\addplot3[black,thick,densely dotted] coordinates {(0.3,1.9,200)(1.98,1.9,200)};
\addplot3[white,very thick] coordinates {(1.7532,1.0,200)(1.7577,1.07,200)(1.7627,1.14,200)(1.7684,1.21,200)(1.7747,1.28,200)(1.7821,1.35,200)(1.7906,1.42,200)(1.8005,1.49,200)(1.8122,1.56,200)(1.8262,1.63,200)(1.8433,1.7,200)(1.8649,1.77,200)(1.8924,1.84,200)(1.9294,1.91,200)(1.9811,1.98,200)};
\end{axis}
\end{tikzpicture}
\caption{Top: maximum stable base velocity against the bulk relaxation rate at
$\omega=1.9$, for base flow aligned with the lattice (left) and along the diagonal
(right). Lines are the von Neumann analysis, symbols the nonlinear simulation of
an advected shear layer ($48^2$, $700$ steps). Blue solid lines and filled circles
are the orthogonal formulation; red dashed lines and open squares the
non-orthogonal one. Black dotted lines with triangles are the von Neumann limit
when the fourth-order moment relaxes at $\omega_b$, for which the two formulations
coincide (Section~\ref{sec:lambda4}). Both methods agree that the two
formulations coincide at $\omega_b=1$, that the orthogonal limit is nearly
independent of $\omega_b$, and that the non-orthogonal limit collapses above
$\omega_b\simeq1.7$. The simulated separation is smaller than the linear one:
$+54.3\%$ against $+75.8\%$ at $\theta=0$, and $+27.7\%$ against $+138.8\%$ at
$\theta=\pi/4$, both at $\omega_b=1.9$; it widens towards the linear value with
resolution and run length (Table~\ref{tab:convergence}). Bottom: the margin
$100\,(u_{\max}^{\mathrm{orth}}/u_{\max}^{\mathrm{non}}-1)$ of the von Neumann limits
over the plane of the bulk and shear rates, on a $15\times15$ grid; the dashed line
$\omega_b=1$ is the gauge line, on which the two formulations are one scheme, and
the dotted line $\omega=1.9$ is the section shown above. The colour saturates where
the margin exceeds $150\%$; to the right of the white line the non-orthogonal rest
state is itself unstable.}
\label{fig:stability}
\end{figure*}

Nonlinear simulations confirm the picture. In a weakly perturbed uniform drift on
$48^2$, the analogue of the linear base state, the survival limits (symbols in
Fig.~\ref{fig:stability}) agree at $\omega_b=1$, keep the orthogonal limit nearly
independent of $\omega_b$ and show the collapse above $\omega_b\simeq1.7$; the
margin, $+54\%$ at $\omega_b=1.9$, widens towards the linear $+76\%$ with resolution
and run length, to $+70\%$ at three times the run length (Table~\ref{tab:convergence}).
In the under-resolved doubly periodic shear layer of Ref.~\cite{brown1995} ($128^2$,
$\kappa=80$, $\delta=0.05$), used for bulk-viscosity studies in
Ref.~\cite{dellar2001}, the orthogonal scheme survives to a $42\%$ higher layer
velocity at $\omega_b=1.9$ and $\mathrm{Re}=10^4$ (Table~\ref{tab:dsl}); at
$U_0\le0.2$ every scheme survives up to $\mathrm{Re}=10^6$, so the difference is a
finite-Mach effect, and away from the stability boundary it amounts to a relative
$10^{-6}$ in the populations.

\begin{table}[tbp]
\centering
\caption{Largest layer velocity $U_0$ for which the double shear layer ($128^2$,
$\kappa=80$, $\delta=0.05$, $\omega\to2$ through $\mathrm{Re}=U_0N/\nu$) survives
two convective times: orthogonal (orth.) and non-orthogonal (non.) schemes with
$\lambda_4=1$, and the non-orthogonal scheme with $\lambda_4=\omega_b$, for which
the two formulations coincide. Bisection resolution $2.3\times10^{-3}$; the
entries $\geq0.450$ reached the top of the search interval.}
\label{tab:dsl}
\small\setlength{\tabcolsep}{5pt}
\begin{tabular}{lcccccc}
\toprule
 & \multicolumn{3}{c}{$\mathrm{Re}=10^4$} & \multicolumn{3}{c}{$\mathrm{Re}=10^6$}\\
\cmidrule(lr){2-4}\cmidrule(lr){5-7}
$\omega_b$ & orth. & non. & $\lambda_4=\omega_b$ & orth. & non. & $\lambda_4=\omega_b$\\
\midrule
$1.0$ & $\geq0.450$ & $\geq0.450$ & $\geq0.450$ & $0.375$ & $0.375$ & $0.375$\\
$1.4$ & $0.396$ & $0.394$ & $0.396$ & $0.359$ & $0.354$ & $0.356$\\
$1.8$ & $0.354$ & $0.328$ & $0.354$ & $0.323$ & $0.288$ & $0.300$\\
$1.9$ & $0.347$ & $0.244$ & $0.347$ & $0.316$ & $0.225$ & $0.286$\\
\bottomrule
\end{tabular}
\end{table}

\begin{figure}[tbp]
\centering
\pgfplotsset{dslmap/.style={width=0.40\textwidth, axis equal image, enlargelimits=false,
  tick label style={font=\scriptsize}, colorbar, colorbar style={width=0.18cm,
  tick label style={font=\scriptsize,/pgf/number format/fixed}}, axis on top}}
\begin{tikzpicture}
\begin{axis}[dslmap, xmin=0,xmax=128,ymin=0,ymax=128, title={(a) $\rho-\bar\rho$, orthogonal},
  ylabel={$y$}, colormap/viridis, colorbar style={ytick={-0.3,-0.2,-0.1,0,0.1}}, point meta min=-0.3504, point meta max=0.1293]
\addplot graphics[xmin=0,xmax=128,ymin=0,ymax=128] {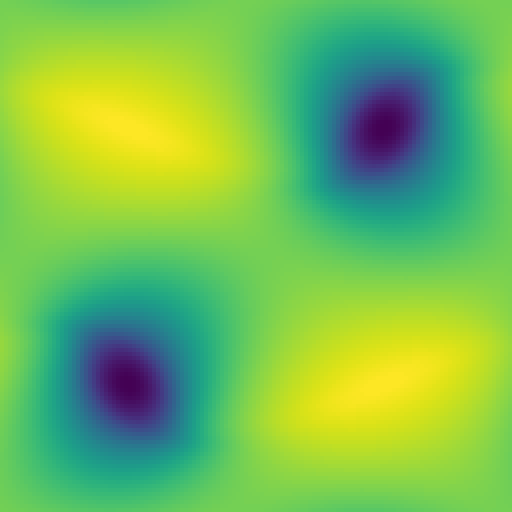};
\end{axis}
\end{tikzpicture}\hfill
\begin{tikzpicture}
\begin{axis}[dslmap, xmin=0,xmax=128,ymin=0,ymax=128, title={(b) $\rho-\bar\rho$, non-orthogonal},
  colormap/viridis, colorbar style={ytick={-0.3,-0.2,-0.1,0,0.1}}, point meta min=-0.3504, point meta max=0.1293]
\addplot graphics[xmin=0,xmax=128,ymin=0,ymax=128] {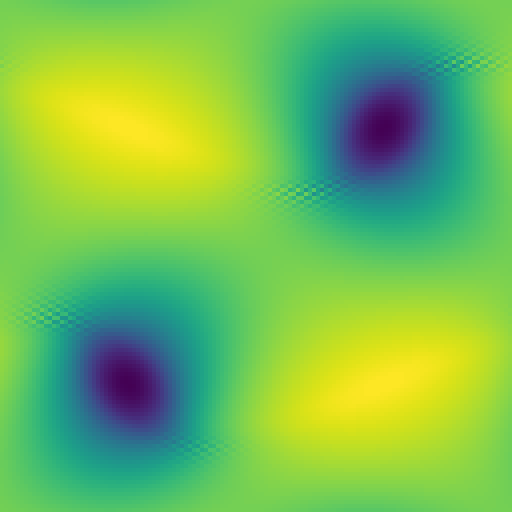};
\draw[white,thick] (axis cs:96,96) rectangle (axis cs:127.5,127.5);
\end{axis}
\end{tikzpicture}

\vspace{0.4em}
\begin{tikzpicture}
\begin{axis}[dslmap, xmin=96,xmax=128,ymin=96,ymax=128, title={(c) detail of (b)},
  xlabel={$x$}, ylabel={$y$}, colormap/viridis, colorbar style={ytick={-0.3,-0.2,-0.1,0,0.1}}, point meta min=-0.3504, point meta max=0.1293]
\addplot graphics[xmin=96,xmax=128,ymin=96,ymax=128] {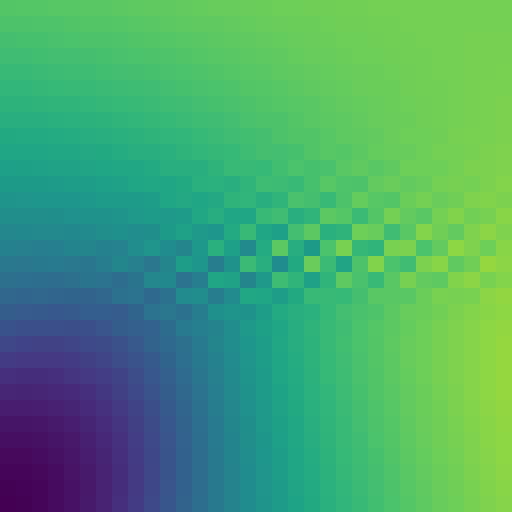};
\end{axis}
\end{tikzpicture}\hfill
\begin{tikzpicture}
\begin{axis}[dslmap, xmin=0,xmax=128,ymin=0,ymax=128, title={(d) vorticity, non-orthogonal},
  xlabel={$x$}, colormap={div}{rgb255=(33,102,172) rgb255=(247,247,247) rgb255=(178,24,43)},
  point meta min=-0.0901, point meta max=0.0901]
\addplot graphics[xmin=0,xmax=128,ymin=0,ymax=128] {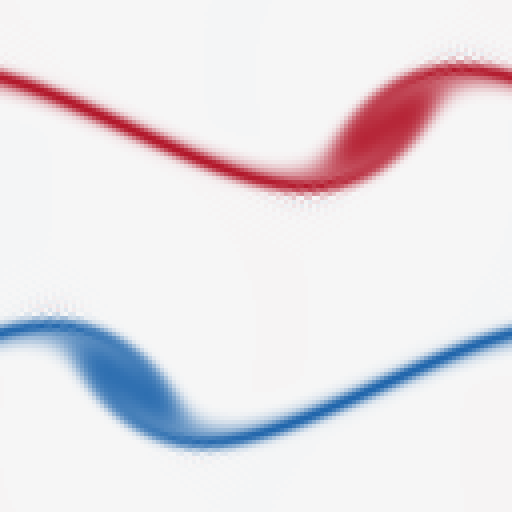};
\end{axis}
\end{tikzpicture}
\caption{Double shear layer on D2Q9 ($128^2$, $\kappa=80$, $\delta=0.05$,
$U_0=0.30$, $\omega_b=1.9$, $\mathrm{Re}=10^4$), just beyond the stability limit of
the non-orthogonal scheme, after $340$ steps ($t=0.80\,N/U_0$). (a,b) Density
fluctuation for the orthogonal and non-orthogonal schemes; (c) detail of (b), the
region marked in white; (d) vorticity of the non-orthogonal run. The non-orthogonal
run diverges $30$ steps later; the orthogonal one survives.}
\label{fig:dslmap}
\end{figure}

Figure~\ref{fig:dslmap} shows how the non-orthogonal scheme fails just beyond its
limit. Packets of grid-scale density oscillation appear on the braids between the
vortices and grow by about $10\%$ per step until the run diverges. Their dominant
wavevector, $(1.67,-2.99)$, is that of the most unstable linear mode for a uniform
stream at the layer speed, $(\pm1.72,\pi)$ in Fig.~\ref{fig:modes}, and the growing
part of the difference between the two runs is compressive, with twice as much
divergence as vorticity at the grid scale. The linear mode itself is carried by the
density and the trace, which hold $34\%$ and $36\%$ of it, and one collision raises
its fourth-order content from $3\%$ to $24\%$ (Fig.~\ref{fig:modes}b): it is the
transfer of Eq.~\eqref{eq:q9}. The collapse is thus a consequence of relaxing the
raw fourth-order moment at unity while the trace relaxes at $\omega_b$, not of
non-orthogonality as such.

\begin{figure}[tbp]
\centering
\begin{tikzpicture}
\begin{axis}[width=0.47\textwidth,height=5.2cm,title={(a)},
  xlabel={$k_x$}, ylabel={$\max|\text{eig}\,\mathsf{G}|$}, grid=major,
  xmin=-3.1416,xmax=3.1416, ymin=0.65, ymax=1.12,
  xtick={-3.1416,-1.5708,0,1.5708,3.1416},
  xticklabels={$-\pi$,$-\pi/2$,$0$,$\pi/2$,$\pi$}]
\addplot[black!60,thin] coordinates {(-3.1416,1)(3.1416,1)};
\addplot[blue,thick] coordinates {(-3.1416,0.67971)(-3.0434,0.70649)(-2.9452,0.73342)(-2.8471,0.75997)(-2.7489,0.78567)(-2.6507,0.81013)(-2.5525,0.83308)(-2.4544,0.85429)(-2.3562,0.87365)(-2.2580,0.89112)(-2.1598,0.90672)(-2.0617,0.92052)(-1.9635,0.93261)(-1.8653,0.94312)(-1.7671,0.95220)(-1.6690,0.96000)(-1.5708,0.96667)(-1.4726,0.97235)(-1.3744,0.97718)(-1.2763,0.98127)(-1.1781,0.98475)(-1.0799,0.98769)(-0.9817,0.99018)(-0.8836,0.99229)(-0.7854,0.99407)(-0.6872,0.99557)(-0.5890,0.99680)(-0.4909,0.99781)(-0.3927,0.99862)(-0.2945,0.99923)(-0.1963,0.99966)(-0.0982,0.99992)(0.0000,1.00000)(0.0982,0.99992)(0.1963,0.99966)(0.2945,0.99923)(0.3927,0.99862)(0.4909,0.99781)(0.5890,0.99680)(0.6872,0.99557)(0.7854,0.99407)(0.8836,0.99229)(0.9817,0.99018)(1.0799,0.98769)(1.1781,0.98475)(1.2763,0.98127)(1.3744,0.97718)(1.4726,0.97235)(1.5708,0.96667)(1.6690,0.96000)(1.7671,0.95220)(1.8653,0.94312)(1.9635,0.93261)(2.0617,0.92052)(2.1598,0.90672)(2.2580,0.89112)(2.3562,0.87365)(2.4544,0.85429)(2.5525,0.83308)(2.6507,0.81013)(2.7489,0.78567)(2.8471,0.75997)(2.9452,0.73342)(3.0434,0.70649)};
\addplot[red,thick,dashed] coordinates {(-3.1416,0.87317)(-3.0434,0.89026)(-2.9452,0.90714)(-2.8471,0.92396)(-2.7489,0.94093)(-2.6507,0.95820)(-2.5525,0.97585)(-2.4544,0.99380)(-2.3562,1.01181)(-2.2580,1.02944)(-2.1598,1.04612)(-2.0617,1.06114)(-1.9635,1.07363)(-1.8653,1.08263)(-1.7671,1.08709)(-1.6690,1.08579)(-1.5708,1.07736)(-1.4726,1.06015)(-1.3744,1.03209)(-1.2763,0.99056)(-1.1781,0.98475)(-1.0799,0.98769)(-0.9817,0.99018)(-0.8836,0.99229)(-0.7854,0.99407)(-0.6872,0.99557)(-0.5890,0.99680)(-0.4909,0.99781)(-0.3927,0.99862)(-0.2945,0.99923)(-0.1963,0.99966)(-0.0982,0.99992)(0.0000,1.00000)(0.0982,0.99992)(0.1963,0.99966)(0.2945,0.99923)(0.3927,0.99862)(0.4909,0.99781)(0.5890,0.99680)(0.6872,0.99557)(0.7854,0.99407)(0.8836,0.99229)(0.9817,0.99018)(1.0799,0.98769)(1.1781,0.98475)(1.2763,0.99056)(1.3744,1.03209)(1.4726,1.06015)(1.5708,1.07736)(1.6690,1.08579)(1.7671,1.08709)(1.8653,1.08263)(1.9635,1.07363)(2.0617,1.06114)(2.1598,1.04612)(2.2580,1.02944)(2.3562,1.01181)(2.4544,0.99380)(2.5525,0.97585)(2.6507,0.95820)(2.7489,0.94093)(2.8471,0.92396)(2.9452,0.90714)(3.0434,0.89026)};
\addplot[black,thick,densely dotted] coordinates {(-3.1416,0.93220)(-3.0434,0.93384)(-2.9452,0.93534)(-2.8471,0.93673)(-2.7489,0.93799)(-2.6507,0.93915)(-2.5525,0.94020)(-2.4544,0.94113)(-2.3562,0.94191)(-2.2580,0.94251)(-2.1598,0.94286)(-2.0617,0.94287)(-1.9635,0.94241)(-1.8653,0.94312)(-1.7671,0.95220)(-1.6690,0.96000)(-1.5708,0.96667)(-1.4726,0.97235)(-1.3744,0.97718)(-1.2763,0.98127)(-1.1781,0.98475)(-1.0799,0.98769)(-0.9817,0.99018)(-0.8836,0.99229)(-0.7854,0.99407)(-0.6872,0.99557)(-0.5890,0.99680)(-0.4909,0.99781)(-0.3927,0.99862)(-0.2945,0.99923)(-0.1963,0.99966)(-0.0982,0.99992)(0.0000,1.00000)(0.0982,0.99992)(0.1963,0.99966)(0.2945,0.99923)(0.3927,0.99862)(0.4909,0.99781)(0.5890,0.99680)(0.6872,0.99557)(0.7854,0.99407)(0.8836,0.99229)(0.9817,0.99018)(1.0799,0.98769)(1.1781,0.98475)(1.2763,0.98127)(1.3744,0.97718)(1.4726,0.97235)(1.5708,0.96667)(1.6690,0.96000)(1.7671,0.95220)(1.8653,0.94312)(1.9635,0.94241)(2.0617,0.94287)(2.1598,0.94286)(2.2580,0.94251)(2.3562,0.94191)(2.4544,0.94113)(2.5525,0.94020)(2.6507,0.93915)(2.7489,0.93799)(2.8471,0.93673)(2.9452,0.93534)(3.0434,0.93384)};
\end{axis}
\end{tikzpicture}\hfill
\begin{tikzpicture}
\begin{axis}[width=0.47\textwidth,height=5.2cm,title={(b)},
  ybar, bar width=4.5pt, ymin=0, ymax=40, ylabel={share (\%)},
  xtick={1,...,9}, xticklabels={$q_1$,$q_2$,$q_3$,$q_4$,$q_5$,$q_6$,$q_7$,$q_8$,$q_9$},
  xmin=0.4, xmax=9.6, enlarge x limits=false, ymajorgrids]
\addplot[fill=black!30,draw=black!60] coordinates {(1,33.6)(2,9.7)(3,0.0)(4,35.7)(5,2.4)(6,0.0)(7,0.0)(8,15.1)(9,3.4)};
\addplot[fill=red!60,draw=red!70!black] coordinates {(1,34.4)(2,9.9)(3,0.0)(4,29.6)(5,2.0)(6,0.0)(7,0.0)(8,0.0)(9,24.2)};
\end{axis}
\end{tikzpicture}
\caption{The unstable mode of the non-orthogonal D2Q9 scheme at
$\omega=\omega_b=1.9$ and $\lambda_4=1$, with base flow $u_0=0.30$ along the axis,
between the two stability limits. (a) Spectral radius of the amplification matrix
along $k_x$ at $k_y=\pi$: orthogonal (blue solid), non-orthogonal (red dashed) and
non-orthogonal with $\lambda_4=\omega_b$ (black dotted); only the second exceeds
unity. (b) Content of the most unstable eigenvector in the Hermite central moments
$q_1,\dots,q_9$, as shares of $\sum_a|q_a|^2/\avg{q_a,q_a}$, before (grey) and
after (red) collision. The mode is carried by the density $q_1$ and the trace
$q_4$, and the collision moves its trace content into the fourth-order moment
$q_9$, the transfer of Eq.~\eqref{eq:q9}.}
\label{fig:modes}
\end{figure}

The recipe, $\lambda_4=\omega_b$, removes the term: the two formulations then have
Jacobians equal to round-off and identical limits, and the common limit (dotted in
Fig.~\ref{fig:stability}) lies within $1\%$ of the orthogonal one. It is not free.
With $\omega_b$ close to two the fourth-order moment is itself barely damped, and in
the double shear layer at $\mathrm{Re}=10^6$ and $\omega_b=1.9$ the recipe falls $10\%$
short of the orthogonal scheme with $\lambda_4=1$.

\subsection{Rectangular lattices}
\label{sec:rectstab}

We test the claim on the scheme it was made for: the rectangular central-moment
scheme of Ref.~\cite{yahia2021}, with the basis~\eqref{eq:basis-d2q9} on the stretched
lattice, equilibria matched to the Maxwellian with a free sound speed, and extended
second-order equilibria that restore isotropy and Galilean invariance through
locally computed strain rates and a central-difference density gradient. Its
orthogonal counterpart has the same population-space equilibrium and the rest-frame
Gram--Schmidt basis of Section~\ref{sec:when}: it is not a scheme proposed in the
literature, but the one that differs from the published scheme in the basis alone,
which is what the claim concerns. The linear analysis retains the
wavevector dependence of the finite-difference gradient. We take $a=0.5$ with
$c_s^2=0.1$, the setting of the stability tests of Ref.~\cite{yahia2021}, and $a=0.8$
with $c_s^2=1/3$. The rest state of the published scheme at $a=0.5$ is stable up to
$\omega=1.8705$; the largest Reynolds numbers reported for it in a lid-driven cavity
correspond to $\omega=1.853$ and $1.859$ on the two finer grids, below this limit,
and to $1.888$ on the coarsest, slightly above it, where the periodic analysis
predicts a growth of $0.7\%$ per step that the wall-bounded cavity evidently does not
sustain.

With the published relaxation matrix, bulk rate at unity and shear rates at $\omega$,
the entry~\eqref{eq:rect-key} lies in the hydrodynamic block, and the orthogonal
counterpart is not a consistent discretisation of the same equations. At
$\omega=1.6$ its shear viscosity for a wave at $45^\circ$ is $5.7$ times too large at
$a=0.5$ and $1.38$ times at $a=0.8$, whereas the non-orthogonal scheme returns the
shear viscosity and the sound attenuation to four digits in every direction. It is
also far less stable (Fig.~\ref{fig:rect}), in line with Ref.~\cite{yahia2021}. With
the bulk rate equal to the shear rate the two share their transport coefficients and
the order reverses: the orthogonal formulation is as stable or more stable in every
direction, by up to $61\%$ along the diagonal at $a=0.5$ and $160\%$ at $a=0.8$, and
the recipe, trace, deviatoric and fourth-order moments at $\omega$, gives identical
limits in the two bases and keeps the rest state stable to $\omega=1.999$.

\paragraph{What the published scheme is} In the inner product of the rest equilibrium
$\Am_{95}=0$. With the published matrix, whose bulk and higher-order rates are all unity,
a single entry of the commutator then acts on physical states: the $(5,4)$ entry, which
couples the normal-stress difference to the trace and whose lattice-weight value
is~\eqref{eq:rect-key}. The published scheme is therefore exactly the orthogonal scheme
of that inner product with one off-diagonal relaxation entry, in that position, acting on
physical states; the two give the same post-collision populations to round-off. The non-orthogonal basis does not so much avoid
orthogonality as supply the one off-diagonal entry that consistency with an independent
bulk rate requires. Dropping that entry gives the inconsistent counterpart above.
Keeping it but orthogonalising in the lattice-weight inner product, where $\Am_{95}\ne0$
couples the fourth order to the normal-stress difference as well, restores the transport
coefficients, with the shear viscosity within $3\times10^{-5}$ of its nominal value in
every direction, but costs stability: the rest state is lost above
$\omega=1.22$ at $a=0.5$ and $1.88$ at $a=0.8$, against $1.87$ and $1.95$ for the
published scheme. The claim of Refs.~\cite{yahia2021,yahia2021cuboid} is therefore right
for the scheme it was made for, and the analysis shows why: an orthogonal design that
keeps an independent bulk rate has to restore the entry~\eqref{eq:rect-key}, in its
relaxation matrix as here or through additional terms in its equilibria as in
Refs.~\cite{zong2016,peng2016}; restored in the rest-equilibrium inner product it gives
the published scheme, and in the lattice-weight one it costs stability. Their second
argument, that orthogonal transforms on a stretched lattice have unwieldy coefficients
that depend on the aspect ratio, Eq.~\eqref{eq:A-rect}, concerns the implementation
only: by Corollary~\ref{cor:impl} an orthogonal formulation can be run through the
non-orthogonal transforms.

A lid-driven cavity confirms both results in a wall-bounded nonlinear flow
(\ref{sec:cavity}). In the stability test of Ref.~\cite{yahia2021}, with its
parameters ($a=0.5$, $c_s^2=0.1$, lid velocity $0.2$, $100\times200$ nodes, $100\,000$
steps), the published scheme stays stable up to $\mathrm{Re}=6590$, against $6733$ reported
there, the orthogonal counterpart up to $\mathrm{Re}=1547$, and the orthogonal scheme with the
entry~\eqref{eq:rect-key} restored only up to $\mathrm{Re}=662$; at $\mathrm{Re}=400$ the orthogonal
counterpart departs from a square-lattice reference by $17\%$ of the lid velocity on
the centrelines, against $0.7\%$ for the published scheme. With the bulk rate equal to the
shear rate the orthogonal formulation again leads, $\mathrm{Re}=5276$ against $2323$,
although at this Mach number, $0.63$, both fall short of the published matrix, whose
bulk rate at unity keeps the bulk viscosity high.

\begin{figure}[tbp]
\centering
\begin{tikzpicture}
\begin{axis}[width=0.48\textwidth,height=5.4cm,title={flow along $x$},
  xlabel={$\omega$}, ylabel={$u_{\max}$}, grid=major, xmin=0.98,xmax=2.0, ymin=0,ymax=0.37,
  tick label style={font=\scriptsize},
  yticklabel style={/pgf/number format/fixed,/pgf/number format/precision=2}]
\addplot[red,thick,dashed,mark=square,mark size=1.1pt] coordinates {(1.000,0.3161)(1.200,0.3161)(1.400,0.3161)(1.600,0.2908)(1.700,0.2238)(1.800,0.1681)(1.850,0.1262)(1.900,0.0000)(1.950,0.0000)(1.980,0.0000)};
\addplot[blue,thick,mark=*,mark size=1.1pt] coordinates {(1.000,0.3161)(1.200,0.3161)(1.400,0.3161)(1.600,0.3161)(1.700,0.3152)(1.800,0.1725)(1.850,0.1262)(1.900,0.0861)(1.950,0.0000)(1.980,0.0000)};
\addplot[red!55!black,thick,densely dashdotted,mark=diamond,mark size=1.3pt] coordinates {(1.000,0.3161)(1.200,0.3161)(1.400,0.3161)(1.600,0.3161)(1.700,0.3152)(1.800,0.1725)(1.850,0.1262)(1.900,0.0000)(1.950,0.0000)(1.980,0.0000)};
\addplot[blue!55!black,thick,densely dashdotted,mark=o,mark size=1.1pt] coordinates {(1.000,0.3161)(1.200,0.3161)(1.400,0.3161)(1.600,0.0773)(1.700,0.0088)(1.800,0.0000)(1.850,0.0000)(1.900,0.0000)(1.950,0.0000)(1.980,0.0000)};
\addplot[black,thick,densely dotted,mark=triangle,mark size=1.3pt] coordinates {(1.000,0.3161)(1.200,0.3161)(1.400,0.3161)(1.600,0.3161)(1.700,0.3152)(1.800,0.1725)(1.850,0.1262)(1.900,0.0861)(1.950,0.0516)(1.980,0.0283)};
\end{axis}
\end{tikzpicture}\hfill
\begin{tikzpicture}
\begin{axis}[width=0.48\textwidth,height=5.4cm,title={flow along the lattice diagonal $(1,a)$},
  xlabel={$\omega$}, grid=major, xmin=0.98,xmax=2.0, ymin=0,ymax=0.37,
  tick label style={font=\scriptsize},
  yticklabel style={/pgf/number format/fixed,/pgf/number format/precision=2}]
\addplot[red,thick,dashed,mark=square,mark size=1.1pt] coordinates {(1.000,0.3534)(1.200,0.3534)(1.400,0.3534)(1.600,0.2884)(1.700,0.2120)(1.800,0.1525)(1.850,0.1267)(1.900,0.0000)(1.950,0.0000)(1.980,0.0000)};
\addplot[blue,thick,mark=*,mark size=1.1pt] coordinates {(1.000,0.3534)(1.200,0.3534)(1.400,0.3534)(1.600,0.3534)(1.700,0.3405)(1.800,0.1928)(1.850,0.1412)(1.900,0.0963)(1.950,0.0000)(1.980,0.0000)};
\addplot[red!55!black,thick,densely dashdotted,mark=diamond,mark size=1.3pt] coordinates {(1.000,0.3534)(1.200,0.3534)(1.400,0.3534)(1.600,0.3439)(1.700,0.3120)(1.800,0.1928)(1.850,0.1412)(1.900,0.0000)(1.950,0.0000)(1.980,0.0000)};
\addplot[blue!55!black,thick,densely dashdotted,mark=o,mark size=1.1pt] coordinates {(1.000,0.3534)(1.200,0.3534)(1.400,0.2191)(1.600,0.0775)(1.700,0.0092)(1.800,0.0000)(1.850,0.0000)(1.900,0.0000)(1.950,0.0000)(1.980,0.0000)};
\addplot[black,thick,densely dotted,mark=triangle,mark size=1.3pt] coordinates {(1.000,0.3534)(1.200,0.3534)(1.400,0.3534)(1.600,0.3534)(1.700,0.3425)(1.800,0.1928)(1.850,0.1412)(1.900,0.0963)(1.950,0.0577)(1.980,0.0311)};
\end{axis}
\end{tikzpicture}
\caption{Linear stability limits of the rectangular central-moment scheme of
Ref.~\cite{yahia2021} and of its orthogonal counterpart at $a=0.5$, $c_s^2=0.1$
($32^2$ wavevectors, finite-difference density gradient included). Blue:
orthogonal; red: non-orthogonal. Solid and dashed lines with bulk and shear rates
equal; dash-dotted lines with the published matrix, bulk rate at unity, for which
the orthogonal counterpart is not hydrodynamically consistent (see text). Black
dotted: the recipe, trace, deviatoric and fourth-order moments at $\omega$, for
which the two formulations coincide. A value of zero means that the rest state is
itself unstable.}
\label{fig:rect}
\end{figure}

\subsection{The D3Q27 lattice}
\label{sec:d3stab}

With the standard matrix, deviatoric moments at $\omega$ and the trace and every
higher moment at unity, the formulations differ for every $\omega\ne1$
(Corollary~\ref{cor:d3q27}). Their linear limits (Fig.~\ref{fig:d3stab}; $16^3$
wavevectors, flow along the axis and the face and body diagonals) coincide for
$\omega\le1.7$. Beyond, the non-orthogonal formulation is marginally ahead along the
axis, by at most $1.8\%$, and collapses along both diagonals; because its limiting
direction moves to the face diagonal, the orthogonal formulation leads in the least
stable direction by $25\%$, $45\%$ and $73\%$ at $\omega=1.9$, $1.925$ and $1.95$. An
advected Taylor--Green vortex~\cite{brachet1983} reproduces the ordering (symbols),
with margins that grow with resolution and run length towards the linear ones, at
$\omega=1.95$ from $21\%$ on $24^3$ to $41\%$ on $48^3$ and $53\%$ at three times the
run length (Table~\ref{tab:d3conv}).

\begin{figure}[tbp]
\centering
\begin{tikzpicture}
\begin{axis}[width=0.355\textwidth,height=5cm,title={axis $(1,0,0)$},
  xlabel={$\omega$}, ylabel={$u_{\max}$}, grid=major, xmin=0.95,xmax=1.98, ymin=0.39,ymax=0.445,
  xtick={1,1.2,1.4,1.6,1.8}, tick label style={font=\scriptsize},
  yticklabel style={/pgf/number format/fixed,/pgf/number format/precision=2}]
\addplot[blue,thick,mark=*,mark size=1.1pt] coordinates {(1.000,0.4226)(1.400,0.4226)(1.600,0.4226)(1.700,0.4226)(1.750,0.4226)(1.800,0.4212)(1.850,0.4189)(1.900,0.4152)(1.925,0.4128)(1.950,0.3966)};
\addplot[red,thick,dashed,mark=square,mark size=1.1pt] coordinates {(1.000,0.4226)(1.400,0.4226)(1.600,0.4226)(1.700,0.4226)(1.750,0.4226)(1.800,0.4226)(1.850,0.4221)(1.900,0.4198)(1.925,0.4156)(1.950,0.4036)};
\addplot[black,thick,densely dotted] coordinates {(1.000,0.4226)(1.400,0.4226)(1.600,0.4226)(1.700,0.4226)(1.750,0.4226)(1.800,0.4226)(1.850,0.4203)(1.900,0.4156)(1.925,0.4110)(1.950,0.4063)};
\addplot[blue,only marks,mark=*,mark size=1.7pt] coordinates {(1.9,0.4359)(1.95,0.4324)};
\addplot[red,only marks,mark=square,mark size=1.7pt] coordinates {(1.9,0.4359)(1.95,0.4289)};
\addplot[black,only marks,mark=triangle,mark size=1.9pt] coordinates {(1.9,0.4254)(1.95,0.4219)};
\end{axis}
\end{tikzpicture}\hfill
\begin{tikzpicture}
\begin{axis}[width=0.355\textwidth,height=5cm,title={face diagonal $(1,1,0)$},
  xlabel={$\omega$}, grid=major, xmin=0.95,xmax=1.98, ymin=0.2,ymax=0.65,
  xtick={1,1.2,1.4,1.6,1.8}, tick label style={font=\scriptsize},
  yticklabel style={/pgf/number format/fixed,/pgf/number format/precision=2}]
\addplot[blue,thick,mark=*,mark size=1.1pt] coordinates {(1.000,0.5975)(1.400,0.5975)(1.600,0.5975)(1.700,0.5975)(1.750,0.5975)(1.800,0.5956)(1.850,0.5933)(1.900,0.5743)(1.925,0.5580)(1.950,0.5367)};
\addplot[red,thick,dashed,mark=square,mark size=1.1pt] coordinates {(1.000,0.5975)(1.400,0.5975)(1.600,0.5975)(1.700,0.5975)(1.750,0.5771)(1.800,0.4996)(1.850,0.4198)(1.900,0.3331)(1.925,0.2844)(1.950,0.2292)};
\addplot[black,thick,densely dotted] coordinates {(1.000,0.5975)(1.400,0.5975)(1.600,0.5975)(1.700,0.5975)(1.750,0.5965)(1.800,0.5877)(1.850,0.5789)(1.900,0.5617)(1.925,0.5460)(1.950,0.5242)};
\addplot[blue,only marks,mark=*,mark size=1.7pt] coordinates {(1.9,0.6152)(1.95,0.6152)};
\addplot[red,only marks,mark=square,mark size=1.7pt] coordinates {(1.9,0.4395)(1.95,0.3586)};
\addplot[black,only marks,mark=triangle,mark size=1.9pt] coordinates {(1.9,0.5871)(1.95,0.5801)};
\end{axis}
\end{tikzpicture}\hfill
\begin{tikzpicture}
\begin{axis}[width=0.355\textwidth,height=5cm,title={body diagonal $(1,1,1)$},
  xlabel={$\omega$}, grid=major, xmin=0.95,xmax=1.98, ymin=0.25,ymax=0.79,
  xtick={1,1.2,1.4,1.6,1.8}, tick label style={font=\scriptsize},
  yticklabel style={/pgf/number format/fixed,/pgf/number format/precision=2}]
\addplot[blue,thick,mark=*,mark size=1.1pt] coordinates {(1.000,0.7320)(1.400,0.7320)(1.600,0.7320)(1.700,0.7320)(1.750,0.7320)(1.800,0.7306)(1.850,0.7278)(1.900,0.7032)(1.925,0.6837)(1.950,0.6573)};
\addplot[red,thick,dashed,mark=square,mark size=1.1pt] coordinates {(1.000,0.7320)(1.400,0.7320)(1.600,0.7320)(1.700,0.7320)(1.750,0.7069)(1.800,0.6123)(1.850,0.5144)(1.900,0.4082)(1.925,0.3484)(1.950,0.2806)};
\addplot[black,thick,densely dotted] coordinates {(1.000,0.7320)(1.400,0.7320)(1.600,0.7320)(1.700,0.7320)(1.750,0.7306)(1.800,0.7195)(1.850,0.7093)(1.900,0.6879)(1.925,0.6689)(1.950,0.6425)};
\addplot[blue,only marks,mark=*,mark size=1.7pt] coordinates {(1.9,0.7559)(1.95,0.7523)};
\addplot[red,only marks,mark=square,mark size=1.7pt] coordinates {(1.9,0.5344)(1.95,0.4395)};
\addplot[black,only marks,mark=triangle,mark size=1.9pt] coordinates {(1.9,0.7207)(1.95,0.7102)};
\end{axis}
\end{tikzpicture}
\caption{Linear stability limits on D3Q27 for the standard
multiple-relaxation-time matrix, with shear rate $\omega$ and the trace and every
higher moment at unity ($16^3$ wavevectors): orthogonal (blue solid), non-orthogonal
(red dashed), and the common limit of the two when every even non-conserved moment
relaxes at $\omega$ (black dotted). Symbols: nonlinear simulations of an
advected Taylor--Green vortex ($24^3$, $600$ steps) with the orthogonal (filled
circles) and non-orthogonal (open squares) formulations and the recipe
(triangles). Note the different vertical scales. The two
formulations coincide for $\omega\le1.7$ in every direction. Beyond, the
non-orthogonal one is marginally the more stable along the axis, by at most
$1.8\%$, and collapses along both diagonals, where at $\omega=1.95$ the orthogonal
limit exceeds it by $134\%$.}
\label{fig:d3stab}
\end{figure}

The recipe here relaxes every even non-conserved moment at the shear rate, which
ties the bulk viscosity to it. It matches the orthogonal linear limit to $0.1\%$, but
in an under-resolved Taylor--Green vortex ($32^3$, $U_0=0.1$) it diverges at every
$\mathrm{Re}\ge1600$, whereas both standard formulations survive up to
$\mathrm{Re}=25600$: holding the fourth- and sixth-order moments at unity supplies
damping that under-resolved flow needs and the uniform-state analysis cannot see.

The formulations also differ in accuracy (Fig.~\ref{fig:tgv}). At
$\mathrm{Re}=1600$ and $\mathrm{Ma}=0.1$, against the $512^3$ direct numerical
simulation of Ref.~\cite{dairay2017}, the orthogonal formulation is the closer on
every lattice: the largest departure of the dissipation rate, relative to its peak,
is $55\%$, $33\%$ and $17\%$ on $32^3$, $48^3$ and $64^3$, against $70\%$, $62\%$ and
$41\%$, and its dissipation peak reaches $t=8.5\,t_c$ on $48^3$, against $8.98\,t_c$
in the reference, while the non-orthogonal one is still at $6.7\,t_c$. At
$\mathrm{Re}=400$ the histories converge to one curve, with the same ranking at
coarse resolution. The ranking matches the effective viscosity at finite
wavenumber, $2.1\%$ above $c_s^2(1/\omega-\tfrac12)$ in the non-orthogonal basis and
$0.5\%$ in the orthogonal one at $\omega=1.9$ on $32^3$, and the mechanism is again
that of Eq.~\eqref{eq:q9}: the raw fourth-order moments contain second-order parts,
which relaxing them at unity damps at the wrong rate.

\begin{figure}[tbp]
\centering
\begin{tikzpicture}
\begin{axis}[width=0.48\textwidth,height=5.2cm,title={(a) $\mathrm{Re}=400$, $32^3$},
  xlabel={$t/t_c$}, ylabel={$E/E_0$}, grid=major, xmin=0,xmax=20, ymin=0,ymax=1.02,
  tick label style={font=\scriptsize}]
\addplot[black!45,line width=2.2pt] coordinates {(0.00,1.0000)(0.25,0.9943)(0.50,0.9904)(0.75,0.9867)(1.00,0.9828)(1.25,0.9788)(1.50,0.9744)(1.75,0.9697)(2.00,0.9646)(2.25,0.9590)(2.50,0.9527)(2.75,0.9459)(3.00,0.9382)(3.25,0.9298)(3.50,0.9205)(3.75,0.9102)(4.00,0.8987)(4.25,0.8859)(4.50,0.8716)(4.75,0.8556)(5.00,0.8381)(5.25,0.8189)(5.50,0.7987)(5.75,0.7776)(6.00,0.7561)(6.25,0.7344)(6.50,0.7125)(6.75,0.6905)(7.00,0.6686)(7.25,0.6467)(7.50,0.6250)(7.75,0.6034)(8.00,0.5821)(8.25,0.5609)(8.50,0.5396)(8.75,0.5181)(9.00,0.4964)(9.25,0.4751)(9.50,0.4545)(9.75,0.4345)(10.00,0.4155)(10.25,0.3974)(10.50,0.3803)(10.75,0.3639)(11.00,0.3482)(11.25,0.3332)(11.50,0.3188)(11.75,0.3052)(12.00,0.2923)(12.25,0.2803)(12.50,0.2692)(12.75,0.2590)(13.00,0.2496)(13.25,0.2408)(13.50,0.2327)(13.75,0.2250)(14.00,0.2177)(14.25,0.2109)(14.50,0.2044)(14.75,0.1982)(15.00,0.1924)(15.25,0.1869)(15.50,0.1816)(15.75,0.1765)(16.00,0.1717)(16.25,0.1671)(16.50,0.1627)(16.75,0.1585)(17.00,0.1544)(17.25,0.1505)(17.50,0.1468)(17.75,0.1432)(18.00,0.1398)(18.25,0.1364)(18.50,0.1332)(18.75,0.1302)(19.00,0.1272)(19.25,0.1243)(19.50,0.1216)(19.75,0.1189)(20.00,0.1170)};
\addplot[blue,thick] coordinates {(0.00,1.0000)(0.25,0.9774)(0.50,0.9681)(0.75,0.9700)(1.00,0.9665)(1.25,0.9609)(1.50,0.9563)(1.75,0.9513)(2.00,0.9456)(2.25,0.9394)(2.50,0.9323)(2.75,0.9244)(3.00,0.9158)(3.25,0.9061)(3.50,0.8953)(3.75,0.8834)(4.00,0.8704)(4.25,0.8561)(4.50,0.8404)(4.75,0.8231)(5.00,0.8041)(5.25,0.7834)(5.50,0.7612)(5.75,0.7375)(6.00,0.7129)(6.25,0.6876)(6.50,0.6619)(6.75,0.6361)(7.00,0.6105)(7.25,0.5853)(7.50,0.5606)(7.75,0.5366)(8.00,0.5134)(8.25,0.4913)(8.50,0.4703)(8.75,0.4504)(9.00,0.4316)(9.25,0.4139)(9.50,0.3972)(9.75,0.3813)(10.00,0.3664)(10.25,0.3522)(10.50,0.3388)(10.75,0.3261)(11.00,0.3140)(11.25,0.3026)(11.50,0.2917)(11.75,0.2815)(12.00,0.2718)(12.25,0.2628)(12.50,0.2543)(12.75,0.2463)(13.00,0.2388)(13.25,0.2317)(13.50,0.2251)(13.75,0.2189)(14.00,0.2131)(14.25,0.2076)(14.50,0.2024)(14.75,0.1975)(15.00,0.1929)(15.25,0.1885)(15.50,0.1843)(15.75,0.1804)(16.00,0.1767)(16.25,0.1733)(16.50,0.1700)(16.75,0.1669)(17.00,0.1640)(17.25,0.1613)(17.50,0.1587)(17.75,0.1563)(18.00,0.1540)(18.25,0.1519)(18.50,0.1499)(18.75,0.1480)(19.00,0.1462)(19.25,0.1445)(19.50,0.1429)(19.75,0.1414)(20.00,0.1404)};
\addplot[red,thick,dashed] coordinates {(0.00,1.0000)(0.25,0.9768)(0.50,0.9664)(0.75,0.9688)(1.00,0.9650)(1.25,0.9584)(1.50,0.9533)(1.75,0.9478)(2.00,0.9415)(2.25,0.9344)(2.50,0.9264)(2.75,0.9175)(3.00,0.9077)(3.25,0.8967)(3.50,0.8846)(3.75,0.8713)(4.00,0.8568)(4.25,0.8409)(4.50,0.8236)(4.75,0.8046)(5.00,0.7838)(5.25,0.7612)(5.50,0.7371)(5.75,0.7115)(6.00,0.6849)(6.25,0.6575)(6.50,0.6299)(6.75,0.6022)(7.00,0.5748)(7.25,0.5480)(7.50,0.5219)(7.75,0.4968)(8.00,0.4727)(8.25,0.4498)(8.50,0.4281)(8.75,0.4077)(9.00,0.3885)(9.25,0.3705)(9.50,0.3537)(9.75,0.3380)(10.00,0.3234)(10.25,0.3097)(10.50,0.2970)(10.75,0.2851)(11.00,0.2739)(11.25,0.2635)(11.50,0.2537)(11.75,0.2446)(12.00,0.2360)(12.25,0.2279)(12.50,0.2203)(12.75,0.2132)(13.00,0.2065)(13.25,0.2002)(13.50,0.1943)(13.75,0.1888)(14.00,0.1835)(14.25,0.1786)(14.50,0.1739)(14.75,0.1695)(15.00,0.1653)(15.25,0.1614)(15.50,0.1576)(15.75,0.1540)(16.00,0.1506)(16.25,0.1473)(16.50,0.1441)(16.75,0.1411)(17.00,0.1382)(17.25,0.1354)(17.50,0.1327)(17.75,0.1301)(18.00,0.1276)(18.25,0.1251)(18.50,0.1228)(18.75,0.1205)(19.00,0.1183)(19.25,0.1162)(19.50,0.1142)(19.75,0.1122)(20.00,0.1108)};
\addplot[black,thick,densely dotted] coordinates {(0.00,1.0000)(0.25,0.9773)(0.50,0.9675)(0.75,0.9704)(1.00,0.9672)(1.25,0.9606)(1.50,0.9561)(1.75,0.9518)(2.00,0.9461)(2.25,0.9396)(2.50,0.9331)(2.75,0.9259)(3.00,0.9177)(3.25,0.9086)(3.50,0.8992)(3.75,0.8886)(4.00,0.8768)(4.25,0.8641)(4.50,0.8502)(4.75,0.8342)(5.00,0.8161)(5.25,0.7965)(5.50,0.7750)(5.75,0.7519)(6.00,0.7281)(6.25,0.7040)(6.50,0.6797)(6.75,0.6554)(7.00,0.6312)(7.25,0.6073)(7.50,0.5838)(7.75,0.5609)(8.00,0.5395)(8.25,0.5193)(8.50,0.5003)(8.75,0.4826)(9.00,0.4664)(9.25,0.4507)(9.50,0.4356)(9.75,0.4213)(10.00,0.4077)(10.25,0.3950)(10.50,0.3832)(10.75,0.3726)(11.00,0.3626)(11.25,0.3531)(11.50,0.3435)(11.75,0.3336)(12.00,0.3234)(12.25,0.3130)(12.50,0.3022)(12.75,0.2913)(13.00,0.2804)(13.25,0.2696)(13.50,0.2588)(13.75,0.2483)(14.00,0.2380)(14.25,0.2281)(14.50,0.2189)(14.75,0.2101)(15.00,0.2019)(15.25,0.1944)(15.50,0.1875)(15.75,0.1811)(16.00,0.1751)(16.25,0.1697)(16.50,0.1646)(16.75,0.1599)(17.00,0.1554)(17.25,0.1513)(17.50,0.1474)(17.75,0.1437)(18.00,0.1402)(18.25,0.1369)(18.50,0.1338)(18.75,0.1308)(19.00,0.1280)(19.25,0.1253)(19.50,0.1228)(19.75,0.1204)(20.00,0.1187)};
\end{axis}
\end{tikzpicture}\hfill
\begin{tikzpicture}
\begin{axis}[width=0.48\textwidth,height=5.2cm,title={(b) $\mathrm{Re}=400$, departure from $96^3$},
  xlabel={$N$}, ylabel={$\max_t|E-E_{\mathrm{ref}}|/E_0$}, grid=major, xmode=log, ymode=log,
  xtick={32,48,64}, xticklabels={$32^3$,$48^3$,$64^3$},
  tick label style={font=\scriptsize}]
\addplot[blue,thick,mark=*,mark size=1.6pt] coordinates {(32,0.06966)(48,0.02886)(64,0.00983)};
\addplot[red,thick,dashed,mark=square,mark size=1.6pt] coordinates {(32,0.11152)(48,0.06543)(64,0.03485)};
\addplot[black,thick,densely dotted,mark=triangle,mark size=1.8pt] coordinates {(32,0.04284)(48,0.01582)(64,0.00450)};
\end{axis}
\end{tikzpicture}

\vspace{0.6em}
\begin{tikzpicture}
\begin{axis}[width=0.48\textwidth,height=5.2cm,title={(c) $\mathrm{Re}=1600$, $64^3$},
  xlabel={$t/t_c$}, ylabel={$\varepsilon\,t_c/U_0^2$}, grid=major, xmin=0,xmax=15, ymin=0,ymax=0.016,
  scaled y ticks=false,
  tick label style={font=\scriptsize}, yticklabel style={/pgf/number format/fixed,/pgf/number format/precision=3}]
\addplot[black!45,line width=2.2pt] coordinates {(0.00,0.00047)(0.10,0.00047)(0.20,0.00047)(0.30,0.00047)(0.40,0.00048)(0.50,0.00048)(0.60,0.00049)(0.70,0.00049)(0.80,0.00050)(0.90,0.00051)(1.00,0.00052)(1.10,0.00053)(1.20,0.00054)(1.30,0.00056)(1.40,0.00057)(1.50,0.00059)(1.60,0.00061)(1.70,0.00063)(1.80,0.00065)(1.90,0.00068)(2.00,0.00071)(2.10,0.00074)(2.20,0.00077)(2.30,0.00080)(2.40,0.00084)(2.50,0.00088)(2.60,0.00092)(2.70,0.00097)(2.80,0.00102)(2.90,0.00107)(3.00,0.00113)(3.10,0.00119)(3.20,0.00125)(3.30,0.00132)(3.40,0.00139)(3.50,0.00148)(3.60,0.00157)(3.70,0.00167)(3.80,0.00179)(3.90,0.00192)(4.00,0.00207)(4.10,0.00223)(4.20,0.00242)(4.30,0.00262)(4.40,0.00284)(4.50,0.00306)(4.60,0.00329)(4.70,0.00351)(4.80,0.00373)(4.90,0.00393)(5.00,0.00413)(5.10,0.00431)(5.20,0.00447)(5.30,0.00462)(5.40,0.00475)(5.50,0.00487)(5.60,0.00499)(5.70,0.00511)(5.80,0.00523)(5.90,0.00537)(6.00,0.00553)(6.10,0.00571)(6.20,0.00591)(6.30,0.00613)(6.40,0.00637)(6.50,0.00663)(6.60,0.00688)(6.70,0.00708)(6.80,0.00717)(6.90,0.00725)(7.00,0.00736)(7.10,0.00754)(7.20,0.00778)(7.30,0.00807)(7.40,0.00840)(7.50,0.00873)(7.60,0.00903)(7.70,0.00932)(7.80,0.00963)(7.90,0.00999)(8.00,0.01037)(8.10,0.01074)(8.20,0.01106)(8.30,0.01134)(8.40,0.01160)(8.50,0.01192)(8.60,0.01232)(8.70,0.01263)(8.80,0.01278)(8.90,0.01284)(9.00,0.01285)(9.10,0.01276)(9.20,0.01254)(9.30,0.01221)(9.40,0.01190)(9.50,0.01173)(9.60,0.01163)(9.70,0.01155)(9.80,0.01146)(9.90,0.01137)(10.00,0.01127)(10.10,0.01122)(10.20,0.01118)(10.30,0.01113)(10.40,0.01103)(10.50,0.01089)(10.60,0.01073)(10.70,0.01057)(10.80,0.01042)(10.90,0.01026)(11.00,0.01009)(11.10,0.00992)(11.20,0.00974)(11.30,0.00956)(11.40,0.00938)(11.50,0.00921)(11.60,0.00904)(11.70,0.00888)(11.80,0.00872)(11.90,0.00855)(12.00,0.00836)(12.10,0.00816)(12.20,0.00793)(12.30,0.00769)(12.40,0.00743)(12.50,0.00718)(12.60,0.00695)(12.70,0.00676)(12.80,0.00659)(12.90,0.00645)(13.00,0.00634)(13.10,0.00624)(13.20,0.00615)(13.30,0.00606)(13.40,0.00597)(13.50,0.00588)(13.60,0.00579)(13.70,0.00570)(13.80,0.00561)(13.90,0.00552)(14.00,0.00543)(14.10,0.00533)(14.20,0.00522)(14.30,0.00511)(14.40,0.00499)(14.50,0.00488)(14.60,0.00477)(14.70,0.00465)(14.80,0.00455)(14.90,0.00444)};
\addplot[blue,thick] coordinates {(0.50,0.00057)(0.60,0.00080)(0.70,0.00041)(0.80,0.00032)(0.90,0.00059)(1.00,0.00063)(1.10,0.00049)(1.20,0.00051)(1.30,0.00062)(1.40,0.00063)(1.50,0.00061)(1.60,0.00064)(1.70,0.00070)(1.80,0.00072)(1.90,0.00074)(2.00,0.00078)(2.10,0.00082)(2.20,0.00086)(2.30,0.00090)(2.40,0.00095)(2.50,0.00100)(2.60,0.00105)(2.70,0.00112)(2.80,0.00118)(2.90,0.00124)(3.00,0.00132)(3.10,0.00140)(3.20,0.00149)(3.30,0.00159)(3.40,0.00172)(3.50,0.00185)(3.60,0.00200)(3.70,0.00217)(3.80,0.00236)(3.90,0.00257)(4.00,0.00280)(4.10,0.00305)(4.20,0.00333)(4.30,0.00364)(4.40,0.00398)(4.50,0.00434)(4.60,0.00472)(4.70,0.00509)(4.80,0.00545)(4.90,0.00578)(5.00,0.00608)(5.10,0.00635)(5.20,0.00657)(5.30,0.00676)(5.40,0.00689)(5.50,0.00698)(5.60,0.00703)(5.70,0.00705)(5.80,0.00706)(5.90,0.00707)(6.00,0.00710)(6.10,0.00717)(6.20,0.00729)(6.30,0.00744)(6.40,0.00764)(6.50,0.00787)(6.60,0.00813)(6.70,0.00842)(6.80,0.00871)(6.90,0.00899)(7.00,0.00923)(7.10,0.00941)(7.20,0.00953)(7.30,0.00962)(7.40,0.00969)(7.50,0.00978)(7.60,0.00992)(7.70,0.01013)(7.80,0.01043)(7.90,0.01081)(8.00,0.01128)(8.10,0.01175)(8.20,0.01211)(8.30,0.01229)(8.40,0.01227)(8.50,0.01212)(8.60,0.01193)(8.70,0.01175)(8.80,0.01158)(8.90,0.01144)(9.00,0.01133)(9.10,0.01124)(9.20,0.01115)(9.30,0.01105)(9.40,0.01091)(9.50,0.01075)(9.60,0.01058)(9.70,0.01041)(9.80,0.01025)(9.90,0.01011)(10.00,0.01000)(10.10,0.00990)(10.20,0.00983)(10.30,0.00977)(10.40,0.00971)(10.50,0.00964)(10.60,0.00954)(10.70,0.00939)(10.80,0.00920)(10.90,0.00898)(11.00,0.00876)(11.10,0.00856)(11.20,0.00838)(11.30,0.00822)(11.40,0.00808)(11.50,0.00795)(11.60,0.00782)(11.70,0.00770)(11.80,0.00757)(11.90,0.00743)(12.00,0.00727)(12.10,0.00710)(12.20,0.00691)(12.30,0.00671)(12.40,0.00650)(12.50,0.00629)(12.60,0.00609)(12.70,0.00590)(12.80,0.00572)(12.90,0.00555)(13.00,0.00539)(13.10,0.00525)(13.20,0.00513)(13.30,0.00504)(13.40,0.00496)(13.50,0.00490)(13.60,0.00485)(13.70,0.00482)(13.80,0.00478)(13.90,0.00474)(14.00,0.00470)(14.10,0.00466)(14.20,0.00461)(14.30,0.00456)(14.40,0.00451)(14.50,0.00445)(14.60,0.00440)(14.70,0.00435)(14.80,0.00431)(14.90,0.00426)};
\addplot[red,thick,dashed] coordinates {(0.50,0.00063)(0.60,0.00087)(0.70,0.00046)(0.80,0.00036)(0.90,0.00066)(1.00,0.00070)(1.10,0.00054)(1.20,0.00056)(1.30,0.00070)(1.40,0.00071)(1.50,0.00067)(1.60,0.00072)(1.70,0.00079)(1.80,0.00082)(1.90,0.00084)(2.00,0.00089)(2.10,0.00096)(2.20,0.00100)(2.30,0.00106)(2.40,0.00113)(2.50,0.00119)(2.60,0.00126)(2.70,0.00135)(2.80,0.00143)(2.90,0.00152)(3.00,0.00162)(3.10,0.00173)(3.20,0.00184)(3.30,0.00197)(3.40,0.00212)(3.50,0.00229)(3.60,0.00247)(3.70,0.00268)(3.80,0.00291)(3.90,0.00317)(4.00,0.00346)(4.10,0.00379)(4.20,0.00415)(4.30,0.00455)(4.40,0.00499)(4.50,0.00546)(4.60,0.00596)(4.70,0.00647)(4.80,0.00698)(4.90,0.00748)(5.00,0.00797)(5.10,0.00843)(5.20,0.00887)(5.30,0.00928)(5.40,0.00965)(5.50,0.00996)(5.60,0.01022)(5.70,0.01042)(5.80,0.01057)(5.90,0.01069)(6.00,0.01080)(6.10,0.01091)(6.20,0.01104)(6.30,0.01117)(6.40,0.01131)(6.50,0.01144)(6.60,0.01156)(6.70,0.01165)(6.80,0.01171)(6.90,0.01174)(7.00,0.01174)(7.10,0.01171)(7.20,0.01166)(7.30,0.01161)(7.40,0.01159)(7.50,0.01159)(7.60,0.01164)(7.70,0.01173)(7.80,0.01186)(7.90,0.01203)(8.00,0.01222)(8.10,0.01242)(8.20,0.01261)(8.30,0.01278)(8.40,0.01289)(8.50,0.01293)(8.60,0.01288)(8.70,0.01272)(8.80,0.01248)(8.90,0.01218)(9.00,0.01186)(9.10,0.01153)(9.20,0.01122)(9.30,0.01094)(9.40,0.01070)(9.50,0.01051)(9.60,0.01036)(9.70,0.01025)(9.80,0.01018)(9.90,0.01013)(10.00,0.01008)(10.10,0.01002)(10.20,0.00994)(10.30,0.00984)(10.40,0.00972)(10.50,0.00958)(10.60,0.00941)(10.70,0.00922)(10.80,0.00900)(10.90,0.00875)(11.00,0.00847)(11.10,0.00816)(11.20,0.00785)(11.30,0.00753)(11.40,0.00722)(11.50,0.00693)(11.60,0.00667)(11.70,0.00643)(11.80,0.00623)(11.90,0.00604)(12.00,0.00588)(12.10,0.00574)(12.20,0.00561)(12.30,0.00549)(12.40,0.00538)(12.50,0.00527)(12.60,0.00517)(12.70,0.00507)(12.80,0.00497)(12.90,0.00487)(13.00,0.00477)(13.10,0.00468)(13.20,0.00458)(13.30,0.00450)(13.40,0.00441)(13.50,0.00434)(13.60,0.00427)(13.70,0.00420)(13.80,0.00414)(13.90,0.00409)(14.00,0.00404)(14.10,0.00400)(14.20,0.00396)(14.30,0.00392)(14.40,0.00389)(14.50,0.00385)(14.60,0.00382)(14.70,0.00379)(14.80,0.00377)(14.90,0.00374)};
\end{axis}
\end{tikzpicture}\hfill
\begin{tikzpicture}
\begin{axis}[width=0.48\textwidth,height=5.2cm,title={(d) $\mathrm{Re}=1600$, departure from DNS},
  xlabel={$N$}, ylabel={$\max_t|\varepsilon-\varepsilon_{\mathrm{DNS}}|/\max\varepsilon_{\mathrm{DNS}}$}, grid=major, xmode=log, ymin=0, ymax=0.8,
  xtick={32,48,64}, xticklabels={$32^3$,$48^3$,$64^3$},
  tick label style={font=\scriptsize}]
\addplot[blue,thick,mark=*,mark size=1.6pt] coordinates {(32,0.5540)(48,0.3336)(64,0.1670)};
\addplot[red,thick,dashed,mark=square,mark size=1.6pt] coordinates {(32,0.7007)(48,0.6174)(64,0.4150)};
\addplot[blue,mark=diamond*,mark size=2pt] coordinates {(32,0.4627)(48,0.3299)(64,0.1385)};
\addplot[red,dashed,mark=diamond,mark size=2pt] coordinates {(32,0.7026)(48,0.5546)(64,0.4149)};
\end{axis}
\end{tikzpicture}
\caption{Taylor--Green vortex on D3Q27. Top: $\mathrm{Re}=400$, $U_0=0.1$; (a) kinetic
energy on a $32^3$ lattice for the orthogonal (blue solid) and non-orthogonal (red
dashed) formulations with the standard matrix and for the recipe (black dotted),
against the orthogonal result on a $96^3$ lattice (grey); (b) largest
departure from that reference over twenty convective times. Bottom:
$\mathrm{Re}=1600$, $\mathrm{Ma}=0.1$; (c) kinetic-energy dissipation rate on a
$64^3$ lattice against the $512^3$ DNS of Ref.~\cite{dairay2017} (grey);
(d) largest departure of the dissipation rate from the DNS for
$t_c\leq t\leq15t_c$, relative to its peak, on D3Q27 (thick lines) and on D3Q19 (thin
lines with diamonds, Section~\ref{sec:d19stab}); the start-up transient of the
equilibrium initialisation is excluded, and not shown in (c). The recipe diverges at $\mathrm{Re}=1600$ on every lattice
tested and is therefore absent from (c) and (d).}
\label{fig:tgv}
\end{figure}

\subsection{The D3Q19 lattice}
\label{sec:d19stab}

Theorem~\ref{thm:geom} predicts a narrower coupling on D3Q19, the most widely used
three-dimensional velocity set, on which central-moment schemes are also in
use~\cite{derosis2020}: only the two normal-stress differences are joined to the fourth
order, through the face diagonals, and the off-diagonal shear moments stay free
(Fig.~\ref{fig:graphs}d). We take the D3Q19 restriction of the basis of
Ref.~\cite{derosis2017}, the nineteen moments $1$, $C_\alpha$, the deviatoric block, the
trace, $C_\alpha C_\beta^2$ and $C_\alpha^2C_\beta^2$ with $\alpha\ne\beta$, with the
Maxwellian central moments as equilibria and the standard matrix of
Section~\ref{sec:d3stab}. Uniform wavevector grids converge slowly on this lattice, because
the modes that limit it occupy narrow bands, so the D3Q19 limits are computed with a
local maximisation of the spectral radius, validated against uniform $48^3$ and $64^3$
grids and against the D3Q27 limits of Fig.~\ref{fig:d3stab} (\ref{sec:wavevector}).

The two formulations coincide at $\omega=1$ and differ at every other rate, first along
the face diagonal (Fig.~\ref{fig:d19stab}). The least stable direction is the body
diagonal, which the lattice does not contain, and there the two limits agree to within
$0.5\%$ at every rate, and to within the bisection tolerance from
$\omega=1.8$ on. The reason is again Theorem~\ref{thm:geom}: the
mode that limits the body diagonal is carried by the momentum, the density, the
off-diagonal shear and the fourth-order moments, with less than 0.02\% of its
content in the normal-stress differences, the only non-conserved moments through which
the commutator acts for this matrix, so the two formulations act on it alike. Along the
axis and the face diagonal the unstable mode of the non-orthogonal formulation holds
35\% and 9\% of its content in the normal-stress differences at $\omega=1.95$, and
there the orthogonal formulation leads, by up to 10\% and 6\%. The recipe, whose two
formulations coincide, is the least stable of the three from $\omega=1.4$ on. In the
advected Taylor--Green vortex the orthogonal formulation leads in the least stable
direction by 3\% at $\omega=1.9$ and 10\% at $1.95$ on $24^3$. These margins fall
to $1\%$ and $3\%$ on $48^3$ and to $-1\%$ at both rates at
three times the run length, towards the linear margin of zero, as the body-diagonal
instability, which the short coarse runs do not reach, takes over
(Table~\ref{tab:d19conv}).

The accuracy follows D3Q27. In the under-resolved Taylor--Green vortex ($32^3$,
$U_0=0.1$) both standard formulations survive up to $\mathrm{Re}=25600$, whereas the
recipe diverges at every Reynolds number tested, from $400$. At $\mathrm{Re}=1600$ the
orthogonal formulation is the closer to the direct numerical simulation on every
lattice (Fig.~\ref{fig:tgv}d), with largest departures of the dissipation rate of
$46\%$, $33\%$ and $14\%$ on $32^3$, $48^3$ and $64^3$ against $70\%$, $55\%$ and $41\%$, and its
dissipation peak on $64^3$ reaches $t=8.8\,t_c$ against $8.3\,t_c$ for the
non-orthogonal one and $8.98\,t_c$ in the reference. A small-amplitude vortex at
$\omega=1.9$ on $32^3$ decays with a viscosity $0.9\%$ above
$c_s^2(1/\omega-\tfrac12)$ in the orthogonal basis and $5.7\%$ in the
non-orthogonal one, against $0.5\%$ and $2.1\%$ on D3Q27, by the mechanism of
Eq.~\eqref{eq:q9}. On D3Q19 the basis thus changes the accuracy as much as on D3Q27,
but the robustness only in the directions in which the normal stresses take part.

\begin{figure}[tbp]
\centering
\begin{tikzpicture}
\begin{axis}[width=0.355\textwidth,height=5cm,title={axis $(1,0,0)$},
  xlabel={$\omega$}, ylabel={$u_{\max}$}, grid=major, xmin=0.95,xmax=1.98, ymin=0.140,ymax=0.454,
  xtick={1,1.2,1.4,1.6,1.8}, tick label style={font=\scriptsize},
  yticklabel style={/pgf/number format/fixed,/pgf/number format/precision=2}]
\addplot[blue,thick,mark=*,mark size=1.1pt] coordinates {(1.000,0.4225)(1.400,0.4225)(1.600,0.4225)(1.700,0.4225)(1.750,0.4213)(1.800,0.4188)(1.850,0.4156)(1.900,0.4109)(1.925,0.4084)(1.950,0.3994)};
\addplot[red,thick,dashed,mark=square,mark size=1.1pt] coordinates {(1.000,0.4225)(1.400,0.4225)(1.600,0.4225)(1.700,0.4225)(1.750,0.4209)(1.800,0.4181)(1.850,0.4128)(1.900,0.3856)(1.925,0.3734)(1.950,0.3619)};
\addplot[black,thick,densely dotted] coordinates {(1.000,0.4225)(1.400,0.4225)(1.600,0.4225)(1.700,0.4225)(1.750,0.4225)(1.800,0.4106)(1.850,0.3575)(1.900,0.3047)(1.925,0.2328)(1.950,0.1613)};
\addplot[blue,only marks,mark=*,mark size=1.7pt] coordinates {(1.9,0.4324)(1.95,0.4324)};
\addplot[red,only marks,mark=square,mark size=1.7pt] coordinates {(1.9,0.4184)(1.95,0.3938)};
\addplot[black,only marks,mark=triangle,mark size=1.9pt] coordinates {(1.9,0.3902)(1.95,0.3516)};
\end{axis}
\end{tikzpicture}\hfill
\begin{tikzpicture}
\begin{axis}[width=0.355\textwidth,height=5cm,title={face diagonal $(1,1,0)$},
  xlabel={$\omega$}, grid=major, xmin=0.95,xmax=1.98, ymin=0.142,ymax=0.582,
  xtick={1,1.2,1.4,1.6,1.8}, tick label style={font=\scriptsize},
  yticklabel style={/pgf/number format/fixed,/pgf/number format/precision=2}]
\addplot[blue,thick,mark=*,mark size=1.1pt] coordinates {(1.000,0.4909)(1.400,0.5072)(1.600,0.5156)(1.700,0.5184)(1.750,0.5191)(1.800,0.4878)(1.850,0.4438)(1.900,0.4050)(1.925,0.3844)(1.950,0.3138)};
\addplot[red,thick,dashed,mark=square,mark size=1.1pt] coordinates {(1.000,0.4909)(1.400,0.4991)(1.600,0.5006)(1.700,0.5003)(1.750,0.5000)(1.800,0.4713)(1.850,0.4216)(1.900,0.3825)(1.925,0.3653)(1.950,0.2988)};
\addplot[black,thick,densely dotted] coordinates {(1.000,0.4909)(1.400,0.4609)(1.600,0.4281)(1.700,0.4019)(1.750,0.3834)(1.800,0.3594)(1.850,0.3250)(1.900,0.2719)(1.925,0.2288)(1.950,0.1722)};
\addplot[blue,only marks,mark=*,mark size=1.7pt] coordinates {(1.9,0.5520)(1.95,0.5379)};
\addplot[red,only marks,mark=square,mark size=1.7pt] coordinates {(1.9,0.4359)(1.95,0.3973)};
\addplot[black,only marks,mark=triangle,mark size=1.9pt] coordinates {(1.9,0.4254)(1.95,0.3938)};
\end{axis}
\end{tikzpicture}\hfill
\begin{tikzpicture}
\begin{axis}[width=0.355\textwidth,height=5cm,title={body diagonal $(1,1,1)$},
  xlabel={$\omega$}, grid=major, xmin=0.95,xmax=1.98, ymin=0.128,ymax=0.474,
  xtick={1,1.2,1.4,1.6,1.8}, tick label style={font=\scriptsize},
  yticklabel style={/pgf/number format/fixed,/pgf/number format/precision=2}]
\addplot[blue,thick,mark=*,mark size=1.1pt] coordinates {(1.000,0.4166)(1.400,0.4256)(1.600,0.4297)(1.700,0.4034)(1.750,0.3666)(1.800,0.3259)(1.850,0.2800)(1.900,0.2269)(1.925,0.1956)(1.950,0.1591)};
\addplot[red,thick,dashed,mark=square,mark size=1.1pt] coordinates {(1.000,0.4166)(1.400,0.4256)(1.600,0.4297)(1.700,0.4056)(1.750,0.3672)(1.800,0.3259)(1.850,0.2800)(1.900,0.2269)(1.925,0.1956)(1.950,0.1594)};
\addplot[black,thick,densely dotted] coordinates {(1.000,0.4166)(1.400,0.3925)(1.600,0.3656)(1.700,0.3441)(1.750,0.3291)(1.800,0.3084)(1.850,0.2778)(1.900,0.2263)(1.925,0.1944)(1.950,0.1516)};
\addplot[blue,only marks,mark=*,mark size=1.7pt] coordinates {(1.9,0.4500)(1.95,0.4359)};
\addplot[red,only marks,mark=square,mark size=1.7pt] coordinates {(1.9,0.4500)(1.95,0.4359)};
\addplot[black,only marks,mark=triangle,mark size=1.9pt] coordinates {(1.9,0.3973)(1.95,0.3832)};
\end{axis}
\end{tikzpicture}
\caption{Linear stability limits on D3Q19 for the standard
multiple-relaxation-time matrix, with shear rate $\omega$ and the trace and every
higher moment at unity (refined wavevector search, Section~\ref{sec:d19stab}):
orthogonal (blue solid), non-orthogonal (red dashed), and the common limit of the two
when every even non-conserved moment relaxes at $\omega$ (black dotted). Symbols:
nonlinear simulations of an advected Taylor--Green vortex ($24^3$, $600$ steps), as in
Fig.~\ref{fig:d3stab}. Note the different vertical scales. Along the body diagonal,
the least stable direction, the two formulations have the same limit to within
$0.5\%$; along the axis and the face diagonal the orthogonal
formulation leads.}
\label{fig:d19stab}
\end{figure}

\subsection{The inner product}
\label{sec:ipsummary}

The orthogonal bases above are orthogonal in the lattice-weight inner product, which
on D2Q9, D3Q19 and D3Q27 is that of the rest equilibrium, so that
Proposition~\ref{prop:rest} applies to them. \ref{sec:ip} repeats the
comparison for a one-parameter family of inner products. The consistency result of
Section~\ref{sec:rectstab} holds for all of them, by Eq.~\eqref{eq:A54-var}. The
stability advantage persists near the inner product of the rest equilibrium, where
the optimum is broad, and is lost far from it: with the unweighted inner product the
orthogonal scheme is unstable even at rest on all three lattices of \ref{sec:ip} at
large rates. What stabilises is therefore orthogonality in, or near, the inner product
of the rest equilibrium, not orthogonality as such; on rectangular lattices with a free
sound speed that is not the lattice-weight inner product. The rectangular result of
Section~\ref{sec:rectstab} makes the same point from the other side: the published
non-orthogonal scheme is itself orthogonal in the rest-equilibrium inner product up to
one relaxation entry, whereas restoring that entry in the lattice-weight inner product
costs stability.

\subsection{Conditioning and cost}
\label{sec:cost}

Orthogonality improves the conditioning of the transform, which matters for the
floating-point evaluation of $\Tm^{-1}$ even where the formulations are gauge
equivalent: for $|\bfu|\le0.2$, $\kappa_2(\Tm)$ is $6.2$--$7.5$ against $11.1$--$12.6$
on D2Q9 and $18.0$--$20.9$ against $41.8$--$46.5$ on D3Q27. The non-orthogonal
transform is sparser and built from simpler constants, with $226$ against $482$
nonzeros in the inverse rest-frame transform on D3Q27. By
Corollary~\ref{cor:impl}, however, that sparsity belongs to the basis in which a
scheme is computed, not to the scheme. Compiled collision kernels that share the
central-moment transforms and differ only in the relaxation (\ref{sec:kernels}) show
that, computed with the conjugated matrix, the orthogonal scheme costs
nothing extra on D2Q9 with one shear rate, where the two matrices coincide, $8$--$10\%$
with an independent bulk rate, and at most $2\%$ on D3Q27, where the kernels are
bandwidth bound in memory; explicit transforms to the orthogonal basis cost $20\%$.

\section{Discussion and conclusions}
\label{sec:discussion}

Whether a change of moment basis changes a lattice Boltzmann scheme is decided by
$[\Lam,\Am]$ in the columns of the non-conserved moments, and the answer depends on
the lattice more than on the basis. By Theorem~\ref{thm:geom} the decisive
feature is the velocity set: orthogonalisation reaches the shear moments only through
the face and body diagonals. On D2Q9 and D3Q15 with one shear rate it therefore
changes nothing, so comparisons in that setting measure the equilibrium, the forcing
or the floating-point conditioning, never the basis. On D3Q19 and D3Q27 it changes
the scheme for every relaxation matrix in which the fourth-order moments do not
share the shear rate. On rectangular lattices, which break the symmetry between the
axes, it changes the scheme even with one shear rate. These results complement the rotational-symmetry constraint of
Ref.~\cite{lishan2021}: symmetry forces degeneracies in $\Lam$, and degeneracies in
$\Lam$ are exactly what create gauge freedom in the basis.

The claim that orthogonalisation spuriously couples moments and degrades stability
is therefore conditional. It is right for the published rectangular scheme, which is
the orthogonal scheme of the rest-equilibrium inner product with the one off-diagonal
relaxation entry that an independent bulk rate requires: dropping that entry breaks the
transport coefficients, and restoring it in the lattice-weight inner product costs
stability, in the linear analysis and in a lid-driven cavity alike.
Wherever the two formulations share their hydrodynamics, the orthogonal one, taken in
or near the inner product of the rest equilibrium, was as robust or more robust in
the least stable direction in every case studied, to within $3.6\%$ at the most extreme
shear rate on D2Q9 and $0.5\%$ on D3Q19, and it was the more accurate against
direct numerical simulation. On D3Q19, where orthogonalisation reaches only the normal
stresses, the two were equally robust in the least stable direction, because the mode
that limits it carries almost none of them. These are four lattices, one or two
relaxation matrices on each and a few flow directions, and we make no claim that the
ordering is universal, only that it cannot be inferred from the presence of coupling
and that $[\Lam,\Am]$ decides whether there is anything to measure.

\begin{tcolorbox}[colback=black!3,colframe=black!55,boxrule=0.5pt,arc=1pt,
  left=4pt,right=4pt,top=3pt,bottom=3pt,title={Recommendations},
  fonttitle=\bfseries\small,coltitle=black,colbacktitle=black!10]
\begin{enumerate}
\item Before comparing two formulations, form $\Am=\Tm_2\Tm_1^{-1}$, check that it
  is constant, and inspect $[\Lam,\Am]$ in the non-conserved columns, adding the
  momentum columns if a force acts. If they vanish, the formulations are one scheme,
  and only their implementations can differ.
\item Where they differ, prefer a basis orthogonal in the inner product of the rest
  equilibrium: its rest state is stable for all rates (Proposition~\ref{prop:rest}),
  and in the tests here it was never less robust by more than a few per cent, often
  much more robust, and the more accurate. On square and
  cubic lattices with $c_s^2=1/3$ this is the usual lattice-weight inner product.
\item Compute it through the cheapest transform in its gauge class, with the
  conjugated relaxation matrix $\Am^{-1}\Lam\Am$ (Corollary~\ref{cor:impl}).
\item On rectangular lattices with an independent bulk rate, use the non-orthogonal
  basis, which with the published rates is the rest-equilibrium orthogonal one with the
  off-diagonal entry~\eqref{eq:rect-key} restored; a diagonal relaxation in an
  orthogonal basis is inconsistent there.
\item The recipe, one rate for all mixed moments, makes the formulations one scheme
  and on D2Q9 keeps the bulk viscosity free, but it removes the damping of the
  fourth-order moments as that rate approaches two, and on D3Q19 and D3Q27 it fails
  in under-resolved flow.
\end{enumerate}
\end{tcolorbox}

Several extensions remain. Cumulants lie outside the class treated here, the map from
populations being nonlinear. Relaxation matrices that depend on the local state,
velocity-dependent forces for which the trapezoidal correction is no longer exact,
and boundary schemes written in moment space fall outside the constant-$\Am$
hypothesis and would need a corresponding generalisation; all tests here were
periodic except the lid-driven cavity, whose half-way bounce-back walls act on
populations. The exact finite difference formulation of Ref.~\cite{bellotti2022} is a
natural setting in which to ask whether the equivalence classes identified here
coincide with equality of the associated macroscopic schemes or are strictly finer.

The question, in short, is not whether a moment basis is orthogonal, but whether its
change of coordinates mixes moments across relaxation-rate boundaries, and, if it
does, in which inner product it is orthogonal.

\section*{Code and data availability}
The scripts that produce every result in this paper, the generated collision
kernels and the raw results are deposited in a public repository: \url{https://doi.org/10.5281/zenodo.22979476}.

\section*{Declaration of competing interest}
The author declares no competing interests.


\appendix

\section{The inner product}
\label{sec:ip}

The orthogonal bases of Section~\ref{sec:stability} were constructed with the lattice
weights, which on D2Q9, D3Q19 and D3Q27 are those of the rest equilibrium, so that
Proposition~\ref{prop:rest} applies to them. To see how much of their advantage is
owed to that choice, we repeated the comparison with Gram--Schmidt in a
one-parameter family of product inner products, with one-dimensional weights
$(1-c_w^2,\,c_w^2/2,\,c_w^2/2)$ for the velocities $(0,\pm1)$, rescaled in $y$ on the
rectangular lattice so that both second moments equal $c_w^2$. The value
$c_w^2=c_s^2$ gives the rest equilibrium, and $c_w^2=2c_s^2$ on the square lattices
the unweighted inner product, in which, for instance, the raw-moment basis of
Ref.~\cite{lallemand2000} is orthogonal.

The consistency result of Section~\ref{sec:rectstab} does not depend on the choice.
For any product inner product the offending entry is
given by Eq.~\eqref{eq:A54-var} of Section~\ref{sec:when}.
With the
published matrix the shear viscosity of a wave at $45^\circ$ is $5.7$, $4.8$ and
$5.7$ times too large at $a=0.5$ for the lattice-weight, rest-equilibrium and
unweighted inner products. The same argument shows that the D2Q9 equivalence of
Corollary~\ref{cor:d2q9} and the D3Q27 obstruction of Table~\ref{tab:projections}
hold for every product inner product that treats the axes alike.

\begin{figure}[t]
\centering
\begin{tikzpicture}
\begin{axis}[width=0.345\textwidth,height=5cm,title={(a) D2Q9},
  xlabel={$c_w^2/c_s^2$}, ylabel={$u_{\max}$}, grid=major, xmin=0.3,xmax=2.1, ymin=0,ymax=0.62,
  tick label style={font=\scriptsize}]
\addplot[black!60,thick,densely dashdotted] coordinates {(1,0)(1,0.62)};
\addplot[red,thick,dashed] coordinates {(0.3,0.2341)(2.1,0.2341)};
\addplot[blue,thick,mark=*,mark size=1.4pt] coordinates {(0.600,0.5292)(0.750,0.5403)(1.000,0.5588)(1.350,0.5650)(1.650,0.0000)(2.000,0.0000)};
\addplot[blue,thick,densely dashed,mark=o,mark size=1.4pt] coordinates {(0.600,0.4161)(0.750,0.4161)(1.000,0.4161)(1.350,0.4159)(1.650,0.0000)(2.000,0.0000)};
\end{axis}
\end{tikzpicture}\hfill
\begin{tikzpicture}
\begin{axis}[width=0.345\textwidth,height=5cm,title={(b) D3Q27},
  xlabel={$c_w^2/c_s^2$}, grid=major, xmin=0.3,xmax=2.1, ymin=0,ymax=0.46,
  tick label style={font=\scriptsize}]
\addplot[black!60,thick,densely dashdotted] coordinates {(1,0)(1,0.46)};
\addplot[red,thick,dashed] coordinates {(0.3,0.3331)(2.1,0.3331)};
\addplot[red,thick,densely dotted] coordinates {(0.3,0.2294)(2.1,0.2294)};
\addplot[blue,thick,mark=*,mark size=1.4pt] coordinates {(0.600,0.4178)(0.750,0.4172)(1.000,0.4150)(1.350,0.4119)(1.650,0.4091)(2.000,0.4059)};
\addplot[blue,thick,densely dotted,mark=square*,mark size=1.4pt] coordinates {(0.600,0.4047)(0.750,0.4019)(1.000,0.3969)(1.350,0.3891)(1.650,0.3822)(2.000,0.0000)};
\end{axis}
\end{tikzpicture}\hfill
\begin{tikzpicture}
\begin{axis}[width=0.345\textwidth,height=5cm,title={(c) rectangular, $a=0.5$},
  xlabel={$c_w^2/c_s^2$}, ylabel={$\omega_{\mathrm{rest}}$}, grid=major, xmin=0.3,xmax=2.1, ymin=1.84,ymax=2.005,
  tick label style={font=\scriptsize}, yticklabel style={/pgf/number format/fixed,/pgf/number format/precision=2}]
\addplot[black!60,thick,densely dashdotted] coordinates {(1,1.84)(1,2.005)};
\addplot[red,thick,dashed] coordinates {(0.3,1.8917)(2.1,1.8917)};
\addplot[blue,thick,mark=*,mark size=1.4pt] coordinates {(0.400,1.9703)(0.700,1.9863)(1.000,1.9990)(1.300,1.9801)(1.600,1.9471)(2.000,1.9027)};
\end{axis}
\end{tikzpicture}
\caption{Dependence of the orthogonal formulation on the inner product in which it is
orthogonalised: product weights with second moment $c_w^2$ in each direction, so that
$c_w^2=c_s^2$ (dash-dotted line) is the rest equilibrium and, on the square lattices,
$c_w^2=2c_s^2$ the unweighted inner product. (a) D2Q9 at $\omega=\omega_b=1.9$ and
$\lambda_4=1$: $u_{\max}$ for flow along the diagonal (filled) and the axis (open).
(b) D3Q27 with the standard matrix: $u_{\max}$ in the least stable of the three
directions at $\omega=1.9$ (circles) and $1.95$ (squares). (c) Rectangular lattice,
$a=0.5$, $c_s^2=0.1$, bulk and shear rates equal: largest shear rate for which the rest
state is stable. Red lines: the non-orthogonal formulation, dashed for the first and
dotted for the second case of each panel. A value of zero means that the rest state is
unstable.}
\label{fig:ip}
\end{figure}

The stability comparison does depend on the choice (Fig.~\ref{fig:ip}). Near the
rest equilibrium the orthogonal formulation keeps its advantage on the three
lattices of Fig.~\ref{fig:ip}, and the optimum is broad: the equilibrium value lies within $2\%$ of the
best found in each panel, at $u_{\max}=0.559$ against $0.565$ on D2Q9 and
$0.397$ against $0.405$ on D3Q27 at $\omega=1.95$. At rest the equilibrium is
optimal where the scheme is a plain multiple-relaxation-time collision: on the
rectangular lattice with matched rates its rest state stays stable to
$\omega=1.999$, against $1.970$ and $1.903$ at $c_w^2=0.4c_s^2$ and $2c_s^2$. Far
from the equilibrium the advantage is lost. The unweighted inner product makes the
orthogonal scheme unstable at rest at $\omega_b=1.8$ and $1.9$ on D2Q9 and at
$\omega=1.95$ on D3Q27, and on the rectangular lattice at $a=0.5$, where the
unweighted and the lattice-weight inner products are both far from the equilibrium
one, it loses the rest state above $\omega=1.628$, against $1.907$ for the lattice
weights and $1.892$ for the non-orthogonal scheme. What stabilises the orthogonal
formulations of this paper is therefore orthogonality in, or near, the inner
product of the rest equilibrium, and not orthogonality as such. On D2Q9 and D3Q27
that is the lattice-weight inner product in routine use; on rectangular lattices
with a free sound speed it is not, and an orthogonalisation that ignores the
equilibrium weights can be less stable than none.

\section{Convergence of the nonlinear stability limits}
\label{sec:convergence}
\label{app:conv}

Tables~\ref{tab:convergence}, \ref{tab:d3conv} and~\ref{tab:d19conv} collect the
resolution and run-length studies of the nonlinear survival limits discussed in
Secs.~\ref{sec:nonlinear}, \ref{sec:d3stab} and~\ref{sec:d19stab}.

\begin{table}[h]
\centering
\caption{Convergence of the nonlinear stability limits $u_{\max}$, orthogonal /
non-orthogonal, at $\omega=1.9$, with the run length scaled with the lattice size
($T\propto N$) and one run of three times the length. The last row of each block
is the von Neumann limit. The two formulations coincide at $\omega_b=1$
throughout, and every simulated limit lies above the corresponding linear one and
approaches it or holds steady as $N$ or $T$ grows.}
\label{tab:convergence}
\small\setlength{\tabcolsep}{4pt}
\begin{tabular}{llccc}
\toprule
 & run & $\omega_b=1.0$ & $\omega_b=1.8$ & $\omega_b=1.9$\\
\midrule
$\theta=0$     & $48^2$, $700$ steps  & $0.4424\,/\,0.4424$ & $0.4248\,/\,0.3486$ & $0.4248\,/\,0.2754$\\
               & $64^2$, $933$ steps  & $0.4365\,/\,0.4365$ & $0.4219\,/\,0.3311$ & $0.4219\,/\,0.2637$\\
               & $96^2$, $1400$ steps & $0.4307\,/\,0.4307$ & $0.4219\,/\,0.3193$ & $0.4219\,/\,0.2520$\\
               & $48^2$, $2100$ steps & $0.4248\,/\,0.4248$ & $0.4219\,/\,0.3135$ & $0.4189\,/\,0.2461$\\
               & linear               & $0.4154\,/\,0.4154$ & $0.4160\,/\,0.2986$ & $0.4162\,/\,0.2367$\\
\midrule
$\theta=\pi/4$ & $48^2$, $700$ steps  & $0.6123\,/\,0.6123$ & $0.6006\,/\,0.5215$ & $0.5947\,/\,0.4658$\\
               & $96^2$, $1400$ steps & $0.6035\,/\,0.6035$ & $0.5889\,/\,0.4863$ & $0.5771\,/\,0.4336$\\
               & linear               & $0.5748\,/\,0.5748$ & $0.5690\,/\,0.4557$ & $0.5591\,/\,0.2341$\\
\bottomrule
\end{tabular}
\end{table}

\begin{table}[h]
\centering
\caption{Survival limits of the advected Taylor--Green vortex on D3Q27,
orthogonal\,/\,non-orthogonal, for flow along the lattice axis and the face
diagonal, with the run length scaled with the lattice size and one run of three
times the length, and the margin of the orthogonal formulation in the least
stable of the two directions. The last row of each block is the linear limit of
Fig.~\ref{fig:d3stab}.}
\label{tab:d3conv}
\small\setlength{\tabcolsep}{5pt}
\begin{tabular}{llccc}
\toprule
 & run & axis & face diagonal & margin\\
\midrule
$\omega=1.9$ & $24^3$, $600$ steps & $0.436\,/\,0.436$ & $0.615\,/\,0.439$ & $+0\%$\\
 & $32^3$, $800$ steps & $0.432\,/\,0.436$ & $0.615\,/\,0.411$ & $+5\%$\\
 & $48^3$, $1200$ steps & $0.429\,/\,0.429$ & $0.608\,/\,0.387$ & $+11\%$\\
 & $24^3$, $1800$ steps & $0.422\,/\,0.425$ & $0.601\,/\,0.369$ & $+14\%$\\
 & linear & $0.415\,/\,0.420$ & $0.574\,/\,0.333$ & $+25\%$\\
\midrule
$\omega=1.95$ & $24^3$, $600$ steps & $0.432\,/\,0.429$ & $0.615\,/\,0.359$ & $+21\%$\\
 & $32^3$, $800$ steps & $0.432\,/\,0.418$ & $0.612\,/\,0.330$ & $+31\%$\\
 & $48^3$, $1200$ steps & $0.425\,/\,0.411$ & $0.605\,/\,0.302$ & $+41\%$\\
 & $24^3$, $1800$ steps & $0.418\,/\,0.408$ & $0.594\,/\,0.274$ & $+53\%$\\
 & linear & $0.397\,/\,0.404$ & $0.537\,/\,0.229$ & $+73\%$\\
\bottomrule
\end{tabular}
\end{table}

\begin{table}[h]
\centering
\caption{Survival limits of the advected Taylor--Green vortex on D3Q19,
orthogonal\,/\,non-orthogonal, as in Table~\ref{tab:d3conv}, for flow along the axis and
the face and body diagonals, and the margin of the orthogonal formulation in the least
stable of the three directions. The last row of each block is the linear limit of
Fig.~\ref{fig:d19stab}.}
\label{tab:d19conv}
\small\setlength{\tabcolsep}{4pt}
\begin{tabular}{llcccc}
\toprule
 & run & axis & face diagonal & body diagonal & margin\\
\midrule
$\omega=1.9$ & $24^3$, $600$ steps & $0.432\,/\,0.418$ & $0.552\,/\,0.436$ & $0.450\,/\,0.450$ & $+3\%$\\
 & $32^3$, $800$ steps & $0.432\,/\,0.408$ & $0.534\,/\,0.422$ & $0.432\,/\,0.432$ & $+6\%$\\
 & $48^3$, $1200$ steps & $0.425\,/\,0.401$ & $0.506\,/\,0.408$ & $0.404\,/\,0.411$ & $+1\%$\\
 & $24^3$, $1800$ steps & $0.422\,/\,0.394$ & $0.485\,/\,0.397$ & $0.387\,/\,0.390$ & $-1\%$\\
 & linear & $0.411\,/\,0.386$ & $0.405\,/\,0.383$ & $0.227\,/\,0.227$ & $0\%$\\
\midrule
$\omega=1.95$ & $24^3$, $600$ steps & $0.432\,/\,0.394$ & $0.538\,/\,0.397$ & $0.436\,/\,0.436$ & $+10\%$\\
 & $32^3$, $800$ steps & $0.429\,/\,0.383$ & $0.510\,/\,0.387$ & $0.415\,/\,0.418$ & $+8\%$\\
 & $48^3$, $1200$ steps & $0.422\,/\,0.376$ & $0.478\,/\,0.373$ & $0.383\,/\,0.390$ & $+3\%$\\
 & $24^3$, $1800$ steps & $0.415\,/\,0.369$ & $0.454\,/\,0.366$ & $0.362\,/\,0.369$ & $-1\%$\\
 & linear & $0.399\,/\,0.362$ & $0.314\,/\,0.299$ & $0.159\,/\,0.159$ & $0\%$\\
\bottomrule
\end{tabular}
\end{table}

\section{Verification}
\label{sec:verification}

The algebraic results were obtained by exact rational and symbolic computation,
with populations and velocity components carried as free symbols.
Corollary~\ref{cor:d2q9} was confirmed at population level: the difference
$f^{\star}_{\mathrm{orth}}-f^{\star}_{\mathrm{non}}$ vanishes identically as a
nine-vector, with a fourth-order equilibrium and a fourth-order Hermite forcing
term retained in full.

The D3Q27 commutators were verified twice, once for the order-graded
orthogonalisation and once for the second-order directions
\begin{equation*}
  \{\Cx\Cy,\ \Cx\Cz,\ \Cy\Cz,\ \Cx^2-\Cy^2,\ \Cx^2+\Cy^2-2\Cz^2\}
\end{equation*}
of the published orthogonal formulation of Ref.~\cite{premnath2011}; the two give identical
commutators, as Proposition~\ref{prop:projection} requires. Beyond these,
$500$ randomly generated admissible orthogonal bases were tested on each lattice,
obtained by random invertible remixing within each relaxation-rate group followed
by re-orthogonalisation. On D2Q9 the commutator vanished in all $500$ cases; on
D3Q27 it was nonzero in all $500$, with a fourth-order-to-deviatoric coupling
present in every one.

Corollary~\ref{cor:physical} was checked by applying $\Delta\Pm$ to random
admissible vectors, with zero mass and momentum. On D3Q27 the result is $0.70$
for the standard multiple-relaxation-time matrix and $8\times10^{-15}$ for the
two-rate central matrix of Corollary~\ref{cor:d3q27}; on D2Q9 it is $0.60$
with a split bulk rate and $1\times10^{-15}$ once the fourth-order moment shares
that rate.

The numerical results were validated as reported in Section~\ref{sec:stability}: the complex-step
Jacobians against the exact equivalence at $\omega_b=1$ to $3.3\times10^{-16}$
and against wavevector refinement to four decimal places by $N=48$; the nonlinear
solver against the analytic Taylor--Green decay rate to $0.10\%$; and the
single-code-path reformulation against the independently assembled orthogonal
transform to $5.6\times10^{-16}$. The nonlinear limits were checked for
convergence in resolution and run length, as reported in
Table~\ref{tab:convergence}, and the D3Q27 linear limits for convergence in the
wavevector grid, as reported below.

The solvers added for the later tests were checked against the earlier ones. The
rectangular implementation reproduces the Gram--Schmidt coefficients
\eqref{eq:A-rect} to machine precision and the transport coefficients of
Ref.~\cite{yahia2021} as described in Section~\ref{sec:rectstab}. With the
entry~\eqref{eq:rect-key} restored, the orthogonal rectangular scheme returns the shear
viscosity within $3\times10^{-5}$ of its nominal value at $0^\circ$, $45^\circ$ and
$90^\circ$ and the sound attenuation of the published scheme to four digits, and in the
rest-equilibrium inner product, where $\Am_{95}$ vanishes, it reproduces the published
post-collision populations to round-off. The
three-dimensional nonlinear solver, which computes central moments axis by axis,
agrees with a direct matrix implementation of the collision to $4\times10^{-16}$
for both bases and both relaxation matrices, and its two-dimensional counterpart,
used for the double shear layer, agrees with the solver of
Section~\ref{sec:nonlinear} to $2\times10^{-16}$.
The decay of a small-amplitude Taylor--Green vortex on $32^3$ returns the shear
viscosity to within $0.4\%$ in both bases at $\omega=1.6$. Proposition~\ref{prop:rest}
was checked on D2Q9 by evaluating $\max_{\bm k}\|\mathsf G(\bm k)\|_w$ for the
random rate assignments of Section~\ref{sec:reststab-theory}: it equals $1$ to
round-off in every draw for the equilibrium-orthogonal basis and reaches $2.1$ for
the non-orthogonal and the unweighted ones. Theorem~\ref{thm:geom} was checked in
exact rational arithmetic with random weights, constant on each velocity shell, on
D2Q9, D3Q15, D3Q19 and D3Q27: the projections onto the deviatoric block of
$\Cx^2\Cy^2$ and $\Cx\Cy\Cz^2$ equal the weights of the face and body diagonals, as the
proof states, and vanish, or the monomial is absent or dependent, exactly where the
theorem says; with the standard weights they reproduce Table~\ref{tab:projections}.
Equation~\eqref{eq:A54-var} agrees with
the Gram--Schmidt coefficient to round-off for three inner products at $a=0.5$, $0.8$
and $2$. The D3Q19 solver, which obtains central moments from raw monomial moments by the
binomial shift, agrees with a direct matrix implementation of the collision to
$7\times10^{-16}$ in both bases; its Gram--Schmidt matrix has
exactly the six edges between the deviatoric block and the fourth order of
Fig.~\ref{fig:graphs}(d); and a small-amplitude Taylor--Green vortex at $\omega=1.6$
returns the shear viscosity to within $0.8\%$ on $32^3$ and $0.2\%$ on $64^3$ in
both bases.

\paragraph{Checks of the stability analysis}
At $\omega_b=1$, where Corollary~\ref{cor:d2q9} predicts exact equivalence, the two
D2Q9 Jacobians agree to $3.3\times10^{-16}$, against $0.397$ at $\omega_b=1.6$; with
$\lambda_4=\omega_b=1.6$ they agree to $8.9\times10^{-16}$, and the von Neumann limits of
the recipe coincide in the two bases at all twelve points computed. The D2Q9 limits
are converged in the wavevector grid: at $\omega_b=1.8$ they take the values $0.4171$,
$0.4161$, $0.4160$ and $0.4161$ (orthogonal) and $0.2986$, $0.2987$, $0.2986$ and
$0.2986$ (non-orthogonal) for $N=32$, $48$, $64$ and $96$, and at the six points of
largest separation a scan from rest in steps of $0.01$ reproduces the bisection limits
to within $2\times10^{-3}$. On D3Q27 at $\omega=1.9$ the Jacobians differ by $0.161$
for the standard matrix and agree to $2.1\times10^{-15}$ for the two-rate central
matrix of Corollary~\ref{cor:d3q27}, and the limits for $N=12$, $16$ and $24$ agree
to within the bisection tolerance of $5\times10^{-4}$. The D3Q19 checks are collected in \ref{sec:wavevector}.

\paragraph{Round-off}
Two independent D2Q9 implementations, each forming and inverting its own transform,
were run from the same initial condition in a shear layer (Fig.~\ref{fig:roundoff}).
At the gauge points the discrepancy begins at zero and accumulates as round-off does
in two different sequences of operations, staying eight to nine orders of magnitude
below the genuine scheme difference at $\omega_b=1.6$ and $\lambda_4=1$, which is
present from the first sample and does not grow. Gauge-equivalent formulations are
thus identical in exact arithmetic but not bitwise in floating point.

\begin{figure}[t]
\centering
\begin{tikzpicture}
\begin{axis}[width=0.72\textwidth,height=5.6cm, ymode=log,
  xlabel={time step}, ylabel={$\max|f_{\mathrm{orth}}-f_{\mathrm{non}}|\,/\,\max|f|$},
  xmin=0,xmax=2000, ymin=1e-16, ymax=1e-4, grid=major,
  xtick={0,500,1000,1500,2000}, /pgf/number format/1000 sep={},
  legend style={font=\small,at={(0.98,0.5)},anchor=east}, legend cell align=left]
\addplot[blue,thick] coordinates {(25,1.998e-15)(50,1.873e-15)(75,1.749e-15)(100,1.873e-15)(125,1.623e-15)(150,1.998e-15)(175,1.998e-15)(200,1.997e-15)(225,1.998e-15)(250,1.998e-15)(275,2.122e-15)(300,1.998e-15)(325,1.748e-15)(350,1.997e-15)(375,1.998e-15)(400,1.998e-15)(425,2.122e-15)(450,2.278e-15)(475,2.185e-15)(500,2.528e-15)(525,2.622e-15)(550,2.622e-15)(575,2.684e-15)(600,2.996e-15)(625,3.121e-15)(650,3.402e-15)(675,3.558e-15)(700,3.714e-15)(725,3.776e-15)(750,3.838e-15)(775,4.151e-15)(800,4.307e-15)(825,4.462e-15)(850,4.619e-15)(875,5.024e-15)(900,5.024e-15)(925,5.305e-15)(950,5.430e-15)(975,5.585e-15)(1000,5.928e-15)(1025,6.272e-15)(1050,5.990e-15)(1075,6.458e-15)(1100,6.427e-15)(1125,6.676e-15)(1150,6.770e-15)(1175,7.020e-15)(1200,7.300e-15)(1225,7.487e-15)(1250,7.830e-15)(1275,7.892e-15)(1300,8.266e-15)(1325,8.204e-15)(1350,8.484e-15)(1375,8.515e-15)(1400,8.827e-15)(1425,9.138e-15)(1450,9.418e-15)(1475,9.668e-15)(1500,9.668e-15)(1525,9.948e-15)(1550,9.979e-15)(1575,1.045e-14)(1600,1.029e-14)(1625,1.063e-14)(1650,1.113e-14)(1675,1.107e-14)(1700,1.119e-14)(1725,1.166e-14)(1750,1.157e-14)(1775,1.147e-14)(1800,1.219e-14)(1825,1.194e-14)(1850,1.203e-14)(1875,1.222e-14)(1900,1.250e-14)(1925,1.253e-14)(1950,1.269e-14)(1975,1.297e-14)(2000,1.303e-14)};
\addlegendentry{$\omega_b=1$ (same scheme)}
\addplot[black,thick,densely dotted] coordinates {(25,2.122e-15)(50,2.372e-15)(75,2.123e-15)(100,2.123e-15)(125,2.372e-15)(150,2.248e-15)(175,2.497e-15)(200,2.372e-15)(225,2.123e-15)(250,2.373e-15)(275,3.245e-15)(300,2.372e-15)(325,2.622e-15)(350,2.497e-15)(375,2.622e-15)(400,2.498e-15)(425,2.746e-15)(450,2.497e-15)(475,2.747e-15)(500,2.902e-15)(525,2.902e-15)(550,3.340e-15)(575,3.308e-15)(600,3.277e-15)(625,3.840e-15)(650,3.745e-15)(675,3.744e-15)(700,3.995e-15)(725,4.213e-15)(750,4.431e-15)(775,4.338e-15)(800,5.118e-15)(825,4.930e-15)(850,4.993e-15)(875,5.212e-15)(900,5.522e-15)(925,5.867e-15)(950,5.773e-15)(975,5.990e-15)(1000,6.022e-15)(1025,6.335e-15)(1050,6.582e-15)(1075,6.614e-15)(1100,6.740e-15)(1125,6.925e-15)(1150,7.643e-15)(1175,7.457e-15)(1200,7.580e-15)(1225,7.611e-15)(1250,8.081e-15)(1275,8.329e-15)(1300,8.390e-15)(1325,8.485e-15)(1350,8.765e-15)(1375,9.075e-15)(1400,8.952e-15)(1425,9.388e-15)(1450,9.386e-15)(1475,9.918e-15)(1500,1.029e-14)(1525,1.007e-14)(1550,1.038e-14)(1575,1.082e-14)(1600,1.094e-14)(1625,1.082e-14)(1650,1.098e-14)(1675,1.122e-14)(1700,1.126e-14)(1725,1.160e-14)(1750,1.150e-14)(1775,1.166e-14)(1800,1.229e-14)(1825,1.222e-14)(1850,1.219e-14)(1875,1.238e-14)(1900,1.275e-14)(1925,1.259e-14)(1950,1.297e-14)(1975,1.278e-14)(2000,1.322e-14)};
\addlegendentry{$\lambda_4=\omega_b=1.6$ (same scheme)}
\addplot[red,thick,dashed] coordinates {(25,6.035e-06)(50,3.540e-06)(75,2.106e-06)(100,5.167e-06)(125,3.476e-06)(150,2.442e-06)(175,5.367e-06)(200,2.139e-06)(225,3.105e-06)(250,5.827e-06)(275,1.020e-06)(300,3.712e-06)(325,5.079e-06)(350,2.150e-06)(375,4.137e-06)(400,4.024e-06)(425,3.556e-06)(450,4.004e-06)(475,2.440e-06)(500,4.309e-06)(525,3.404e-06)(550,1.162e-06)(575,4.589e-06)(600,2.771e-06)(625,1.054e-06)(650,4.428e-06)(675,2.103e-06)(700,1.688e-06)(725,4.099e-06)(750,1.554e-06)(775,2.193e-06)(800,3.792e-06)(825,8.697e-07)(850,2.557e-06)(875,3.434e-06)(900,1.357e-06)(925,2.768e-06)(950,3.028e-06)(975,1.978e-06)(1000,2.834e-06)(1025,2.424e-06)(1050,2.640e-06)(1075,2.688e-06)(1100,1.729e-06)(1125,3.145e-06)(1150,2.409e-06)(1175,9.284e-07)(1200,3.495e-06)(1225,1.956e-06)(1250,1.114e-06)(1275,3.544e-06)(1300,1.586e-06)(1325,1.506e-06)(1350,3.412e-06)(1375,1.245e-06)(1400,1.851e-06)(1425,3.068e-06)(1450,1.072e-06)(1475,2.092e-06)(1500,2.661e-06)(1525,1.584e-06)(1550,2.209e-06)(1575,2.148e-06)(1600,2.049e-06)(1625,2.206e-06)(1650,1.646e-06)(1675,2.369e-06)(1700,2.079e-06)(1725,1.077e-06)(1750,2.625e-06)(1775,1.906e-06)(1800,1.259e-06)(1825,2.709e-06)(1850,1.674e-06)(1875,1.407e-06)(1900,2.685e-06)(1925,1.493e-06)(1950,1.560e-06)(1975,2.488e-06)(2000,1.299e-06)};
\addlegendentry{$\omega_b=1.6$, $\lambda_4=1$}
\end{axis}
\end{tikzpicture}
\caption{Maximum relative difference between the populations of independent
orthogonal and non-orthogonal implementations in a shear layer ($64^2$,
$\omega=1.9$). The two gauge-equivalent pairs grow as round-off does, reaching
$1.3\times10^{-14}$ after $2000$ steps, while the inequivalent pair sits at
$\sim10^{-6}$ from the first sample onwards, eight to nine orders of magnitude
higher.}
\label{fig:roundoff}
\end{figure}

\section{Lid-driven cavity on the rectangular lattice}
\label{sec:cavity}

The stability test of Ref.~\cite{yahia2021} is repeated for the published scheme and the variants
of Section~\ref{sec:rectstab}: a square cavity on a $100\times200$ rectangular lattice with $a=0.5$
and $c_s^2=0.1$, a lid velocity $U=0.2$, half-way bounce-back on every wall with the
moving-lid correction of Ref.~\cite{yahia2021}, $\rho_wc_s^2U/(2a^2)$ on the diagonal links,
which is the momentum-corrected bounce-back with the rest-equilibrium weights, and the
density gradient of the extended equilibria by second-order differences, one-sided at the
walls. The shear rate is raised by bisection, to within $0.002$, until a run of
$100\,000$ steps, $200$ convective times, no longer stays bounded. The vectorised solver
agrees with the per-node implementation of the linear analysis to $3\times10^{-16}$ in all
three bases. Table~\ref{tab:cavity} lists the limits, and Fig.~\ref{fig:cavity} compares the
centreline velocities at $\mathrm{Re}=400$ with a $256^2$ square-lattice reference.

\begin{table}[t]
\centering
\caption{Lid-driven cavity on the rectangular lattice ($a=0.5$, $c_s^2=0.1$, $100\times200$
nodes). Largest shear rate $\omega$ for which a run with lid velocity $U=0.2$ stays
bounded for $100\,000$ steps, the corresponding $\mathrm{Re}=UN_x/\nu$, the linear
rest-state limit of Section~\ref{sec:rectstab}, and the largest departure of the
centreline velocities from a $256^2$ square-lattice reference at $\mathrm{Re}=400$, relative
to the lid velocity. Ref.~\cite{yahia2021} reports $\mathrm{Re}=6733$ for the published scheme in
the same test.}
\label{tab:cavity}
\small\setlength{\tabcolsep}{5pt}
\begin{tabular}{lcccc}
\toprule
scheme & $\omega_{\max}$ & $\mathrm{Re}_{\max}$ & rest-state limit & departure at $\mathrm{Re}=400$\\
\midrule
published scheme & $1.886$ & $6590$ & $1.871$ & $0.7\%$\\
orthogonal counterpart & $1.589$ & $1547$ & $1.716$ & $16.8\%$\\
orthogonal, entry~\eqref{eq:rect-key} restored & $1.247$ & $662$ & $1.223$ & diverges\\
bulk rate $=$ shear rate, non-orthogonal & $1.706$ & $2323$ & $1.892$ & $0.7\%$\\
bulk rate $=$ shear rate, orthogonal & $1.859$ & $5276$ & $1.907$ & $0.7\%$\\
recipe & $1.842$ & $4669$ & $1.999$ & $0.7\%$\\
\bottomrule
\end{tabular}
\end{table}

\begin{figure}[t]
\centering
\begin{tikzpicture}
\begin{axis}[width=0.46\textwidth,height=6cm,xlabel={$u/U$ at $x=1/2$},ylabel={$y$},grid=major,
  tick label style={font=\scriptsize}]
\addplot[black!45,line width=2.2pt] coordinates {(-0.0032,0.0020)(-0.0389,0.0254)(-0.0720,0.0488)(-0.1036,0.0723)(-0.1351,0.0957)(-0.1669,0.1191)(-0.1991,0.1426)(-0.2310,0.1660)(-0.2611,0.1895)(-0.2873,0.2129)(-0.3076,0.2363)(-0.3202,0.2598)(-0.3241,0.2832)(-0.3191,0.3066)(-0.3061,0.3301)(-0.2865,0.3535)(-0.2622,0.3770)(-0.2351,0.4004)(-0.2066,0.4238)(-0.1778,0.4473)(-0.1493,0.4707)(-0.1214,0.4941)(-0.0940,0.5176)(-0.0670,0.5410)(-0.0401,0.5645)(-0.0132,0.5879)(0.0140,0.6113)(0.0415,0.6348)(0.0693,0.6582)(0.0975,0.6816)(0.1258,0.7051)(0.1540,0.7285)(0.1818,0.7520)(0.2089,0.7754)(0.2349,0.7988)(0.2595,0.8223)(0.2829,0.8457)(0.3068,0.8691)(0.3368,0.8926)(0.3860,0.9160)(0.4771,0.9395)(0.6348,0.9629)(0.8585,0.9863)};
\addplot[red,thick,dashed] coordinates {(-0.0039,0.0025)(-0.0337,0.0225)(-0.0615,0.0425)(-0.0881,0.0625)(-0.1143,0.0825)(-0.1407,0.1025)(-0.1674,0.1225)(-0.1945,0.1425)(-0.2214,0.1625)(-0.2473,0.1825)(-0.2709,0.2025)(-0.2911,0.2225)(-0.3066,0.2425)(-0.3165,0.2625)(-0.3200,0.2825)(-0.3172,0.3025)(-0.3084,0.3225)(-0.2944,0.3425)(-0.2762,0.3625)(-0.2551,0.3825)(-0.2321,0.4025)(-0.2080,0.4225)(-0.1836,0.4425)(-0.1593,0.4625)(-0.1354,0.4825)(-0.1119,0.5025)(-0.0887,0.5225)(-0.0657,0.5425)(-0.0429,0.5625)(-0.0200,0.5825)(0.0030,0.6025)(0.0262,0.6225)(0.0496,0.6425)(0.0733,0.6625)(0.0972,0.6825)(0.1211,0.7025)(0.1450,0.7225)(0.1687,0.7425)(0.1920,0.7625)(0.2146,0.7825)(0.2363,0.8025)(0.2569,0.8225)(0.2768,0.8425)(0.2967,0.8625)(0.3196,0.8825)(0.3515,0.9025)(0.4038,0.9225)(0.4921,0.9425)(0.6307,0.9625)(0.8187,0.9825)};
\addplot[blue,thick,densely dashdotted] coordinates {(-0.0040,0.0025)(-0.0327,0.0225)(-0.0556,0.0425)(-0.0743,0.0625)(-0.0895,0.0825)(-0.1022,0.1025)(-0.1128,0.1225)(-0.1218,0.1425)(-0.1296,0.1625)(-0.1363,0.1825)(-0.1422,0.2025)(-0.1472,0.2225)(-0.1513,0.2425)(-0.1544,0.2625)(-0.1566,0.2825)(-0.1576,0.3025)(-0.1573,0.3225)(-0.1557,0.3425)(-0.1527,0.3625)(-0.1484,0.3825)(-0.1426,0.4025)(-0.1355,0.4225)(-0.1271,0.4425)(-0.1176,0.4625)(-0.1072,0.4825)(-0.0958,0.5025)(-0.0839,0.5225)(-0.0714,0.5425)(-0.0585,0.5625)(-0.0455,0.5825)(-0.0324,0.6025)(-0.0193,0.6225)(-0.0063,0.6425)(0.0066,0.6625)(0.0192,0.6825)(0.0315,0.7025)(0.0437,0.7225)(0.0556,0.7425)(0.0673,0.7625)(0.0791,0.7825)(0.0911,0.8025)(0.1039,0.8225)(0.1187,0.8425)(0.1376,0.8625)(0.1646,0.8825)(0.2069,0.9025)(0.2757,0.9225)(0.3860,0.9425)(0.5522,0.9625)(0.7761,0.9825)};
\addplot[blue,thick] coordinates {(-0.0039,0.0025)(-0.0337,0.0225)(-0.0615,0.0425)(-0.0881,0.0625)(-0.1143,0.0825)(-0.1407,0.1025)(-0.1674,0.1225)(-0.1945,0.1425)(-0.2214,0.1625)(-0.2473,0.1825)(-0.2710,0.2025)(-0.2912,0.2225)(-0.3067,0.2425)(-0.3165,0.2625)(-0.3200,0.2825)(-0.3172,0.3025)(-0.3084,0.3225)(-0.2944,0.3425)(-0.2762,0.3625)(-0.2551,0.3825)(-0.2321,0.4025)(-0.2080,0.4225)(-0.1836,0.4425)(-0.1594,0.4625)(-0.1354,0.4825)(-0.1119,0.5025)(-0.0887,0.5225)(-0.0657,0.5425)(-0.0429,0.5625)(-0.0200,0.5825)(0.0030,0.6025)(0.0262,0.6225)(0.0496,0.6425)(0.0733,0.6625)(0.0972,0.6825)(0.1211,0.7025)(0.1450,0.7225)(0.1687,0.7425)(0.1920,0.7625)(0.2146,0.7825)(0.2363,0.8025)(0.2569,0.8225)(0.2767,0.8425)(0.2967,0.8625)(0.3196,0.8825)(0.3515,0.9025)(0.4037,0.9225)(0.4920,0.9425)(0.6307,0.9625)(0.8186,0.9825)};
\addplot[red!60!black,thick,densely dotted] coordinates {(-0.0039,0.0025)(-0.0337,0.0225)(-0.0615,0.0425)(-0.0881,0.0625)(-0.1143,0.0825)(-0.1407,0.1025)(-0.1674,0.1225)(-0.1945,0.1425)(-0.2214,0.1625)(-0.2473,0.1825)(-0.2709,0.2025)(-0.2911,0.2225)(-0.3066,0.2425)(-0.3165,0.2625)(-0.3200,0.2825)(-0.3172,0.3025)(-0.3084,0.3225)(-0.2944,0.3425)(-0.2762,0.3625)(-0.2551,0.3825)(-0.2321,0.4025)(-0.2080,0.4225)(-0.1836,0.4425)(-0.1594,0.4625)(-0.1354,0.4825)(-0.1119,0.5025)(-0.0887,0.5225)(-0.0657,0.5425)(-0.0429,0.5625)(-0.0200,0.5825)(0.0030,0.6025)(0.0262,0.6225)(0.0496,0.6425)(0.0733,0.6625)(0.0972,0.6825)(0.1211,0.7025)(0.1450,0.7225)(0.1687,0.7425)(0.1920,0.7625)(0.2146,0.7825)(0.2362,0.8025)(0.2569,0.8225)(0.2767,0.8425)(0.2967,0.8625)(0.3195,0.8825)(0.3515,0.9025)(0.4037,0.9225)(0.4920,0.9425)(0.6307,0.9625)(0.8186,0.9825)};
\end{axis}
\end{tikzpicture}\hfill
\begin{tikzpicture}
\begin{axis}[width=0.46\textwidth,height=6cm,xlabel={$x$},ylabel={$v/U$ at $y=1/2$},grid=major,
  tick label style={font=\scriptsize}]
\addplot[black!45,line width=2.2pt] coordinates {(0.0020,0.0081)(0.0254,0.0919)(0.0488,0.1538)(0.0723,0.1980)(0.0957,0.2295)(0.1191,0.2526)(0.1426,0.2702)(0.1660,0.2836)(0.1895,0.2931)(0.2129,0.2982)(0.2363,0.2987)(0.2598,0.2940)(0.2832,0.2843)(0.3066,0.2697)(0.3301,0.2508)(0.3535,0.2283)(0.3770,0.2029)(0.4004,0.1757)(0.4238,0.1472)(0.4473,0.1181)(0.4707,0.0888)(0.4941,0.0597)(0.5176,0.0308)(0.5410,0.0022)(0.5645,-0.0262)(0.5879,-0.0544)(0.6113,-0.0827)(0.6348,-0.1115)(0.6582,-0.1411)(0.6816,-0.1721)(0.7051,-0.2055)(0.7285,-0.2419)(0.7520,-0.2821)(0.7754,-0.3259)(0.7988,-0.3711)(0.8223,-0.4125)(0.8457,-0.4413)(0.8691,-0.4463)(0.8926,-0.4173)(0.9160,-0.3516)(0.9395,-0.2567)(0.9629,-0.1494)(0.9863,-0.0492)};
\addplot[red,thick,dashed] coordinates {(0.0050,0.0200)(0.0250,0.0891)(0.0450,0.1425)(0.0650,0.1828)(0.0850,0.2129)(0.1050,0.2357)(0.1250,0.2534)(0.1450,0.2675)(0.1650,0.2787)(0.1850,0.2872)(0.2050,0.2926)(0.2250,0.2949)(0.2450,0.2936)(0.2650,0.2887)(0.2850,0.2801)(0.3050,0.2681)(0.3250,0.2528)(0.3450,0.2348)(0.3650,0.2146)(0.3850,0.1925)(0.4050,0.1692)(0.4250,0.1451)(0.4450,0.1205)(0.4650,0.0957)(0.4850,0.0709)(0.5050,0.0463)(0.5250,0.0219)(0.5450,-0.0023)(0.5650,-0.0264)(0.5850,-0.0504)(0.6050,-0.0744)(0.6250,-0.0987)(0.6450,-0.1235)(0.6650,-0.1491)(0.6850,-0.1759)(0.7050,-0.2045)(0.7250,-0.2353)(0.7450,-0.2688)(0.7650,-0.3049)(0.7850,-0.3428)(0.8050,-0.3803)(0.8250,-0.4133)(0.8450,-0.4360)(0.8650,-0.4416)(0.8850,-0.4237)(0.9050,-0.3794)(0.9250,-0.3110)(0.9450,-0.2262)(0.9650,-0.1363)(0.9850,-0.0529)};
\addplot[blue,thick,densely dashdotted] coordinates {(0.0050,0.0110)(0.0250,0.0488)(0.0450,0.0782)(0.0650,0.1005)(0.0850,0.1170)(0.1050,0.1290)(0.1250,0.1377)(0.1450,0.1437)(0.1650,0.1479)(0.1850,0.1506)(0.2050,0.1523)(0.2250,0.1530)(0.2450,0.1529)(0.2650,0.1520)(0.2850,0.1503)(0.3050,0.1478)(0.3250,0.1443)(0.3450,0.1400)(0.3650,0.1347)(0.3850,0.1285)(0.4050,0.1212)(0.4250,0.1130)(0.4450,0.1037)(0.4650,0.0934)(0.4850,0.0822)(0.5050,0.0699)(0.5250,0.0567)(0.5450,0.0424)(0.5650,0.0271)(0.5850,0.0107)(0.6050,-0.0068)(0.6250,-0.0255)(0.6450,-0.0455)(0.6650,-0.0669)(0.6850,-0.0897)(0.7050,-0.1140)(0.7250,-0.1398)(0.7450,-0.1667)(0.7650,-0.1943)(0.7850,-0.2216)(0.8050,-0.2471)(0.8250,-0.2689)(0.8450,-0.2842)(0.8650,-0.2897)(0.8850,-0.2822)(0.9050,-0.2594)(0.9250,-0.2207)(0.9450,-0.1681)(0.9650,-0.1067)(0.9850,-0.0437)};
\addplot[blue,thick] coordinates {(0.0050,0.0200)(0.0250,0.0891)(0.0450,0.1425)(0.0650,0.1828)(0.0850,0.2129)(0.1050,0.2357)(0.1250,0.2534)(0.1450,0.2675)(0.1650,0.2787)(0.1850,0.2872)(0.2050,0.2926)(0.2250,0.2949)(0.2450,0.2936)(0.2650,0.2887)(0.2850,0.2801)(0.3050,0.2680)(0.3250,0.2528)(0.3450,0.2348)(0.3650,0.2146)(0.3850,0.1925)(0.4050,0.1692)(0.4250,0.1451)(0.4450,0.1205)(0.4650,0.0957)(0.4850,0.0709)(0.5050,0.0463)(0.5250,0.0219)(0.5450,-0.0023)(0.5650,-0.0264)(0.5850,-0.0504)(0.6050,-0.0744)(0.6250,-0.0987)(0.6450,-0.1235)(0.6650,-0.1491)(0.6850,-0.1759)(0.7050,-0.2045)(0.7250,-0.2353)(0.7450,-0.2688)(0.7650,-0.3049)(0.7850,-0.3428)(0.8050,-0.3803)(0.8250,-0.4133)(0.8450,-0.4361)(0.8650,-0.4416)(0.8850,-0.4237)(0.9050,-0.3794)(0.9250,-0.3110)(0.9450,-0.2262)(0.9650,-0.1363)(0.9850,-0.0529)};
\addplot[red!60!black,thick,densely dotted] coordinates {(0.0050,0.0200)(0.0250,0.0891)(0.0450,0.1425)(0.0650,0.1827)(0.0850,0.2128)(0.1050,0.2356)(0.1250,0.2534)(0.1450,0.2675)(0.1650,0.2787)(0.1850,0.2872)(0.2050,0.2926)(0.2250,0.2948)(0.2450,0.2936)(0.2650,0.2886)(0.2850,0.2801)(0.3050,0.2680)(0.3250,0.2528)(0.3450,0.2348)(0.3650,0.2146)(0.3850,0.1925)(0.4050,0.1692)(0.4250,0.1451)(0.4450,0.1205)(0.4650,0.0957)(0.4850,0.0709)(0.5050,0.0463)(0.5250,0.0219)(0.5450,-0.0023)(0.5650,-0.0264)(0.5850,-0.0504)(0.6050,-0.0744)(0.6250,-0.0987)(0.6450,-0.1235)(0.6650,-0.1491)(0.6850,-0.1759)(0.7050,-0.2045)(0.7250,-0.2353)(0.7450,-0.2688)(0.7650,-0.3049)(0.7850,-0.3428)(0.8050,-0.3803)(0.8250,-0.4133)(0.8450,-0.4360)(0.8650,-0.4416)(0.8850,-0.4237)(0.9050,-0.3794)(0.9250,-0.3110)(0.9450,-0.2262)(0.9650,-0.1363)(0.9850,-0.0529)};
\end{axis}
\end{tikzpicture}
\caption{Centreline velocities in the lid-driven cavity at $\mathrm{Re}=400$ on the
rectangular lattice ($a=0.5$, $c_s^2=0.1$, $100\times200$ nodes, $U=0.1$) after $40$
convective times, against a $256^2$ square-lattice reference (grey): published scheme (red
dashed), orthogonal counterpart (blue dash-dotted), and with the bulk rate equal to the shear
rate the non-orthogonal (dark red dotted) and orthogonal (blue solid) formulations. The orthogonal scheme with the entry~\eqref{eq:rect-key}
restored diverges at this Reynolds number, as its rest-state limit predicts.}
\label{fig:cavity}
\end{figure}

\section{Wavevector search on D3Q19}
\label{sec:wavevector}

On D3Q19 the modes that limit stability along the diagonals occupy narrow bands of
wavevectors, and uniform grids converge slowly. At $\omega=1.9$ the orthogonal limit
along the body diagonal is $0.286$, $0.265$ and $0.246$ on $12^3$, $16^3$ and $24^3$
grids, and along the face diagonal $0.443$, $0.420$ and $0.397$, whereas along the axis
it is $0.412$ on all three. The limits of Section~\ref{sec:d19stab} are therefore
computed with a local maximisation of the spectral radius: from each of the $64$
largest values of $\rho(\mathsf G)$ on a $16^3$ grid, a pattern search evaluates
$\rho$ on the $3\times3\times3$ neighbourhood of the current wavevector, moves to the
largest value and halves the step, fourteen times. The search detected instability at
least as early as uniform $48^3$ and $64^3$ grids at every point checked. Along the
body diagonal at $\omega=1.9$, for instance, it finds $\rho-1=1.2\times10^{-5}$ at
$u=0.23$ for the orthogonal formulation, where both uniform grids return a stable
scheme, and $6.3\times10^{-5}$ at $u=0.24$ for the non-orthogonal one, against
$5.3\times10^{-5}$ and $4.0\times10^{-5}$ on the $48^3$ and $64^3$ grids. The most
unstable wavevectors, near $(-0.14,-\pi,-0.14)$ at $u=0.235$, lie between the nodes of
every uniform grid. The same search leaves the D3Q27 limits of Fig.~\ref{fig:d3stab}
unchanged: at each of the twelve limits at $\omega=1.9$ and $1.95$ it finds the scheme
stable $0.002$ below the limit and unstable $0.002$ above it. On D3Q19 at $\omega=1.9$
the Jacobians of the two formulations differ by $0.18$ for the standard matrix and
agree to $2\times10^{-15}$ for the recipe.

\section{Cost of the collision kernels}
\label{sec:kernels}

The kernels of Table~\ref{tab:cost} are compiled implementations of the collision
that share the axis-by-axis central-moment transforms and differ only in the
relaxation step: diagonal in the non-orthogonal basis, with the conjugated matrix
$\Am^{-1}\Lam\Am$ of Corollary~\ref{cor:impl}, or through explicit transforms to and
from the orthogonal basis. They agree with the reference implementation to
$6\times10^{-16}$. Computed with the conjugated matrix, the orthogonal scheme costs
nothing extra on D2Q9 with one shear rate, where the two matrices coincide,
$8$--$10\%$ with an independent bulk rate, and on D3Q27 at most $2\%$ in cache and
nothing measurable from memory, where all kernels are bandwidth bound; explicit
transforms to the orthogonal basis cost $20\%$ in cache.

\begin{table}[tbp]
\centering
\caption{Collision kernels on one core of an Apple M1 (clang, \texttt{-O3}):
nonzero coefficients of the relaxation step and throughput in million lattice
updates per second, median of nine runs averaged over two sessions, with the
populations resident in cache or streamed from main memory. The kernels share the
central-moment transforms and differ only in the relaxation: diagonal in the
non-orthogonal basis (non.), with the conjugated matrix $\Am^{-1}\Lam\Am$ (folded),
or through explicit transforms to and from the orthogonal basis (explicit). The two
D2Q9 kernels with $\omega_b=1$ and $11$ coefficients are the same code, and their
difference indicates the run-to-run variation.}
\label{tab:cost}
\begin{tabular}{llccc}
\toprule
 & relaxation & nnz & cache & memory\\
\midrule
D2Q9, $\omega_b=1$   & non.     & $11$  & $142$  & $103$\\
                     & folded   & $11$  & $140$  & $105$\\
                     & explicit & $34$  & $119$  & $94$ \\
D2Q9, $\omega_b=1.6$ & non.     & $11$  & $139$  & $106$\\
                     & folded   & $16$  & $125$  & $98$ \\
                     & explicit & $34$  & $119$  & $95$ \\
D3Q27                & non.     & $33$  & $16.7$ & $10.2$\\
                     & folded   & $45$  & $16.4$ & $10.5$\\
                     & explicit & $136$ & $13.4$ & $10.1$\\
\bottomrule
\end{tabular}
\end{table}

\bibliographystyle{elsarticle-num}
\bibliography{bibliography}

\end{document}